\gdef\@copyrightpermission{
  \begin{minipage}{0.2\columnwidth}
   \href{https://creativecommons.org/licenses/by/4.0/}{\includegraphics[width=0.90\textwidth]{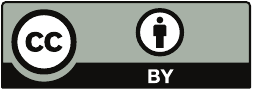}}
  \end{minipage}\hfill
  \begin{minipage}{0.8\columnwidth}
   \href{https://creativecommons.org/licenses/by/4.0/}{This work is licensed under a Creative Commons Attribution International 4.0 License.}
  \end{minipage}
  \vspace{5pt}
}
\pgfplotsset{width=\columnwidth,height=0.516*\columnwidth,compat=1.9}
\theoremstyle{plain}
\theoremstyle{acmplain}
\newtheorem{observation}{Observation}[section]
	\newcommand{\Description}[2][foo]{}{}
\newtheorem{theorem}{Theorem}[section]
\newtheorem{lemma}[theorem]{Lemma}
\newtheorem{corollary}[theorem]{Corollary}
\theoremstyle{definition}
\theoremstyle{acmdefinition}
\newtheorem{definition}[theorem]{Definition}
\theoremstyle{plain}
\theoremstyle{acmplain}
\DeclareMathOperator*{\E}{\mathbb{E}}
\def\eps{\varepsilon}
\def\problem{\textsc}
\def\problemClass{\textsf}
\def\np{\problemClass{NP}}
\def\apx{\problemClass{APX}}
\def\p{\problemClass{P}}
\def\absolute#1{\left\lvert #1 \right\rvert }
\def\asg{\mathcal{A}}
\def\deq{\mathbin{:=}}
\DeclareMathOperator{\Berd}{Ber}
\def\berd{\fce{\Berd}}
\def\fce#1#2{#1\!\left(#2\right)}
\def\fceb#1#2{#1\!\left[ #2 \right]}
\def\suchthat{\;\vert\;}
\def\pr#1{\fceb{\Pr}{#1}}
\def\prt#1{\fceb{\Pr}{\text{#1}}}
\def\Acknowledgments{
Part of this work was carried out while Amanda Wang and Filip Úradník were participants in the DIMACS REU program at Rutgers University. Amanda Wang was supported by the NSF award CNS-2150186, the REU supplement to NSF 2208663, and by Princeton University's Office of Undergraduate Research Undergraduate Fund for Academic Conferences through the Hewlett Foundation Fund.
Filip Úradník was supported by CoSP, a project funded by European Union’s Horizon 2020 research and innovation programme, grant agreement No. 823748, by the Czech Science Foundation Grant 25-15714S, and by the Charles University Grant Agency (GAUK 206523).
Jie Gao would like to acknowledge funding through NSF IIS-2229876, DMS-2220271, DMS-2311064, CCF-2208663,  CCF-2118953.
The authors would also want to thank Kevin Lu, Júlia Križanová and Rhett Olson for their useful discussion and insightful comments.
}
\def\Abstract{
Sequential learning models situations where agents predict a ground truth in sequence, by using their private, noisy measurements, and the predictions of agents who came earlier in the sequence.
We study sequential learning in a social network, where agents only see the actions of the previous agents in their own neighborhood.
The fraction of agents who predict the ground truth correctly depends heavily on both the network topology and the ordering in which the predictions are made. 
A natural question is to find an ordering, with a given network, to maximize the (expected) number of agents who predict the ground truth correctly.
In this paper, we show that it is in fact \np-hard to answer this question for a general network, with both the Bayesian learning model and a simple majority rule model. 
Finally, we show that even approximating the answer is hard.
}
\def\Keywords{Social Networks, Social Learning, Sequential Learning, Bayesian Learning, Complexity, Majority Vote}
\def\netlearn{\problem{Network Learning}}
\def\netlearnopt{\problem{Opt Network Learning}}
\def\maxsat{\problem{Max \sat}}
\def\sat{\problem{3-SAT}}
\def\network{\mathcal{N}}
\def\lr{\overline{\mathcal{L}}}
\def\clr{\mathcal{L}}
\def\olr{\lr^*}
\def\oclr{\clr^*}
\def\cell#1{\fce{\mathbb{C}}{#1}}
\def\gadget#1{\fce{\mathbb{G}}{#1}}
\def\vars{\chi}
\def\ourparagraph{\paragraph*}
\newcommand{\figThickness}{thick}
\newcommand{\figCircSize}{20}
\newcommand{\figSamples}{300}
\def\pone{\clr_1}
\def\ptwo{\clr_2}
\def\pthree{\clr_3}
\def\pzero{\clr_0}
\def\pcell{\clr_{\text{cell}}}
\newcommand{\appbayes}{\Cref{app:bayesian_gadgets}}
\newcommand{\appmaj}{\Cref{app:majority_proof}}
\title{Maximizing Truth Learning in a Social Network is NP-hard}
\author{Filip \'Uradn\'ik}
\affiliation{
  \institution{Charles University}
  \city{Prague}
  \country{Czech Republic}}
\email{uradnik@kam.mff.cuni.cz}
\author{Amanda Wang}
\affiliation{
  \institution{Princeton University}
  \city{Princeton, NJ}
  \country{United States}}
\email{aw4309@princeton.edu}
\author{Jie Gao}
\affiliation{
  \institution{Rutgers University}
  \city{Piscataway, NJ}
  \country{United States}}
\email{jg1555@rutgers.edu}
\begin{abstract}
  \Abstract{}
\end{abstract}
\keywords{\Keywords{}}
\newcommand{\BibTeX}{\rm B\kern-.05em{\sc i\kern-.025em b}\kern-.08em\TeX}
\begin{document}


\pagestyle{fancy}
\fancyhead{}


\maketitle 


\section{Introduction}

Information acquisition, opinion formation and decision making are deeply embedded in a social context.
There are many situations in which people make decisions using information that carries uncertainties, and such decisions are easily influenced by the decisions of others. It is therefore of paramount importance to understand how to effectively use information of inherent uncertainty, while considering how social exchanges can reduce such uncertainties.

Broadly, there are two primary families of models that capture decision making processes in a social network. The first family considers opinion dynamics~\cite{Mossel2017-sd, Acemoglu2011-tx}, where all agents have individual opinions that are repeatedly updated based on information exchange with others. Opinion dynamics studies the evolution of an opinion landscape over time and asks whether, and how quickly, a consensus can be obtained. Recent work also studies the lack of consensus (i.e., polarization) and asks whether this can be modeled and explained with natural factors~\cite{Wang2022-uy}. The second family considers sequential decision making processes~\cite{Golub2017-qo,Mobius2014-oy}, where agents make one-shot decisions. It is also assumed that there is an unknown ground truth state which all agents wish to learn, whereas opinion dynamics often omits such an assumption. Therefore, this second setting is termed sequential social `learning'. 

\ourparagraph{Our Model.}
We consider the classical sequential learning model, where $n$ agents sequentially predict an unknown ground truth $\theta$~\cite{Jadbabaie2012-ob,Mossel2014-mv}. Each agent has an independent private measurement of $\theta$. We consider the `bounded belief' setting, where each agents' private measurement has the same probability $p$, a constant away from 1, of being correct~\cite{Smith2000-wk,Acemoglu2011-vj}.
In addition, each agent has access to the predictions of agents earlier in the sequence. Ideally, agents can use the information extracted from the earlier predictions to improve their own prediction. However, a well-known problem that arises is \emph{information cascade} or \emph{herding}~\cite{Banerjee1992-ra,Bikhchandani1992-rs,Welch1992-yt,Smith2000-wk,Chamley2004-or}, where sufficiently many wrong predictions early on can trigger all subsequent agents to ignore their own private signals and `follow the herd'. Notably, this can occur even for fully rational agents, in the absence of behavioral factors like peer pressure. So herding arises not by fault of the agents, but as a result of the sequential structure of the setting. Indeed, whenever a large enough part of the crowd discards its private information, whether rationally or not, the crowd as a whole is unable to learn the ground truth. 

Motivated by the problem of information cascades, a lot of follow-up work examines how to restore truth learning in crowds.
One approach is to limit the visibility of agents~\cite{Smith1991-sy,Sgroi2002-rz,Acemoglu2011-vj} so that incorrect predictions do not propagate. A natural setup is to consider a social network, rather than an unstructured crowd, in which an agent can only see the actions of its neighbors earlier in the sequence~\cite{Bahar2020-am,arieli2020social,lu24enabling}.
It can be shown that certain network structures, coupled with a good agent decision ordering, can in fact achieve a strong learning result called asymptotic network-wide truth learning~\cite{lu24enabling}: that all $n$ agents, except $o(n)$ of them, successfully predict the ground truth as $n$ goes to infinity. In other words, the average network learning rate, or the average success probability over all agents, approaches $1$. 

One natural open problem from~\cite{lu24enabling} asks whether given a social network, we can either find a good decision ordering or decide if a good ordering exists. The authors in~\cite{lu24enabling} presented sufficient conditions and impossibility results, but the big picture is still largely unclear. In general, there are two factors that prohibit truth learning. If the network is too sparse or a constant fraction of agents make  decisions using only their private signals, then their success probability is bounded by a constant away from $1$, and network-wide truth learning is already doomed. On the other hand, if the network is too dense or almost every agent is well connected with agents earlier in the ordering, then herding happens. Again in this case, truth learning is not possible. Regarding the decision problem, ~\cite{lu24enabling} conjectures that deciding if a network admits an ordering enabling asymptotic truth learning is \np-hard. 

\ourparagraph{Our  Results.} This work focuses on the problem of deciding if the best possible average learning rate in a network exceeds a given threshold $1-\varepsilon$. We prove that this problem is NP-hard both for networks of fully rational agents and those with bounded rationality. Intuitively, we expect this problem to be hard---naively, there are exponentially many decision orderings to check---but proving it formally is highly non-trivial. At a cursory glance, the two barriers for truth learning suggest that a successful network should avoid very sparse and very dense structures. So a natural approach may be to relate network learning and the maximum independent set or maximum clique decision problems. However, it remains unclear how to characterize a network's learning rate by the sizes of its max independent set or its max clique. Thus, it is not obvious how to reduce either problem to the network learning problem. 

We instead use a reduction from \sat, the canonical \np-hard decision problem asking if a given 3-CNF formula is satisfiable.
We construct a network from an input 3-CNF instance and map decision orderings on the network to boolean variable assignments.
In our construction, satisfied clauses correspond to subgraphs with higher learning rates than unsatisfied ones.
Therefore, the larger the number of satisfied clauses, the higher the network learning rate.
We prove that all orderings corresponding to a satisfying assignment achieve a strictly higher network learning rate than those corresponding to non-satisfying assignments.
Thus, deciding if the optimal learning rate exceeds a well-chosen threshold immediately implies an answer to \sat{}.
While the high-level idea is clean, the details are fairly technical due to the dependence between predictions of neighboring agents.
We defer the most technical parts of the proof to the Appendix.

Next, we focus on the approximation hardness of this problem.
We construct a reduction to \maxsat{}, which asks for the maximum number of satisfiable clauses under any variable assignment in a \sat{} instance.
It is known~\cite{Hastad2001-fg} that computing a solution that satisfies more than $\frac 78 M^*$ clauses is \np-hard, where $M^*$ is the maximum number of clauses that can be satisfied.
Using the ideas from the \sat{} reduction, we prove that it is impossible to find an efficient approximation up to an arbitrary constant, unless $\p=\np$.
Specifically, this also allows us to strengthen our previous claim about \np-hardness---that finding the optimal learning rate is \np-hard for any fixed prior $ p \in (\sqrt{7/8}, 1) $.

As noted above, we present hardness for sequential social learning both when agents are fully rational and when they have bounded rationality. Fully rational agents can be realistic, for instance, in modelling financial traders, while agents with bounded rationality may be more faithful models for voters or users on a social platform. More concretely, fully rational agents predict via Bayesian inference, whereas agents with bounded rationality use a simpler heuristic, such as majority vote over all available signals. 

From a practical perspective, Bayesian inference is computationally expensive to implement in simulations, and all prior work on this topic~\cite{Bahar2020-am,arieli2020social,lu24enabling} used bounded rationality in experiments, such as majority vote. However, while simpler for implementation, majority vote is actually harder for analysis. With the Bayesian model, the success rate of an agent is at least as high as the highest success rate of its earlier neighbors, since a node can never do worse than copying the action of an earlier neighbor. This monotonicity can be helpful for constructing a good ordering. However, with majority vote, this nice property no longer holds, so the validity of the hardness reduction needs to be reconsidered for the majority vote model. We modify the construction and restore the same hardness claims for the majority vote model. 
 
\ourparagraph{Related Literature.}
A number of recent works consider models with repeated observations and information exchange, asking if the agents successfully learn the ground truth 
(see e.g., \cite{Jadbabaie2012-ob,Mossel2014-mv}). One major model choice is how an agent aggregates information from available signals in the network. The most natural choice is the Bayesian model, in which agents compute the posterior probability for $\theta$ using all available information and any common knowledge, such as the network topology. 
A recent line of work showed that computing an agent's prediction via Bayesian inference with repeated opinion exchange is PSPACE-hard~\cite{hazla2019reasoning,Hazla2021-vf}. It is one of the few works, to the best of our knowledge, that formally characterizes the complexity of decision making in a social network. 
From this perspective, our work adds to the relatively scarce literature that combinatorial complexity arises in a sequential, one-shot setup with Bayesian or non-Bayesian inferences, where decision orderings need to be carefully decided with respect to the network topology. 

Lastly, we remark that computing and approximating posterior probabilities in general Bayesian networks is known to be hard~\cite{Cooper1990-fn,Dagum1993-gd,Kwisthout2018-az}.
Our setting considers a restricted variant of a Bayesian network---for example, all private signals have the same probability of error.
Furthermore, we are interested in the average accuracy of the agents in the network, not their individual posterior probabilities.
Thus, to the best of our knowledge, there is no straightforward way to translate the existing complexity results for a general Bayesian network to our setting.

\section{Problem Statement}
\label{sec:problem}

This work studies two popular models of opinion exchange on networks. The overarching goal is for a network of truthful, rational agents to learn a binary piece of information, which we call the state of the world or \emph{ground truth}. We can also think of this state as an optimal binary action (buying or selling a stock, voting for a political party's candidate, etc.). The agents are arranged on a directed graph $G=(V,E)$ and broadcast which of the two states they believe is more probable to their neighbors. 

More explicitly, we encode the ground truth in $ \theta \in \{0,1\} $, distributed according to $ \berd{q} $, Bernoulli distribution with probability of $q$ of taking $1$ and $1-q$ of taking $0$. Every agent $v \in V$ initially receives an independent \emph{private signal} $ s_v \in \{0,1\}$ correlated with the ground truth. They then announce a prediction $a_v \in \{ 0,1 \} $ of the ground truth along outgoing edges, so out-neighbors of $v$ may use $a_v$ to improve their own predictions. Importantly, the probabilities $p$ and $q$, as well as the graph $G$ are all common knowledge. 
This is captured in the following formal definition of a network.

\begin{definition}[Social Network]
    A \emph{social network} is $ \network \deq ( G,q,p ) $,
    where \begin{enumerate}
        \item $ G = ( V,E ) $ is a directed graph with agents as vertices,
        \item $ q \in ( 0,1 ) $ is the prior probability of $\theta = 1$,
        \item $ p \in ( \frac 12, 1 ) $ is the accuracy of agents' private signals $s_v \in \{0,1\}$, such that \[\pr{s_v = 1 \suchthat \theta = 1} = \pr{s_v = 0 \suchthat \theta = 0} = p, \quad \forall v \in V.
            \]
    \end{enumerate}
    We further denote $ n \deq \absolute{V} $.
\end{definition}

We consider a classic asynchronous \emph{sequential} model ~\cite{Golub2017-qo}, in which agents announce their predictions in a \emph{decision ordering}, given by a one-to-one mapping $\sigma: V \to [n]$.
We denote the set of all possible orderings by $\Sigma_n$.
At every time step $i$, agent $v = \sigma^{-1}(i)$ makes an announcement $a_v \in \{0,1\}$.
The announcement depends on the agent's private measurement $ s_v $, along with the \emph{previous announcements of in-neighbors}, which we denote as a tuple $ N_v $, defined as \[
    N_v \deq ( a_u \suchthat u \in V \land uv \in E \land \sigma(u)<\sigma(v) ).
\] 
We call the tuple $X_v = (s_v) \cup N_v$ the \emph{inputs} of node $v$. 
This setup of limiting visibility to an agent's neighborhood has been studied in a number of recent papers~\cite{Bahar2020-am,arieli2020social,lu24enabling}.

When making announcements, agents follow an \emph{aggregation rule}, which is a function $ \mu:  ( X_v, G, \sigma ) \mapsto a_v $.
Broadly speaking, aggregation rules can either be Bayesian or non-Bayesian. 
In the \emph{Bayesian} model,  agents are fully rational
and make predictions according to their posterior probability for $\theta$, given their inputs and knowledge of the network topology $G$. In particular, agents take into account the correlation between their inputs resulting from the network topology and from the current ordering $ \sigma $. \[
     \mu^B(X_v, G, \sigma) \deq \begin{cases}
         1 & \text{if $\Pr_{G,\sigma}[\theta = 1 \suchthat X_v] > \frac 12$,} \\
         0 & \text{if $\Pr_{G,\sigma}[\theta = 0 \suchthat X_v] > \frac 12$,} \\
         \berd{\frac 12} & \text{otherwise.}
     \end{cases}
 \]

We also consider a non-Bayesian model, in which agents have bounded rationality and instead use simpler heuristic rules.
This is perhaps a more practical model, as computing posterior probabilities in arbitrary networks can become computationally expensive. 
In particular, we examine the \emph{majority dynamics} model, in which agents simply follow the majority among their inputs~\cite{Bahar2020-am,Shoham1992-ir,Laland2004-ej}. Since this model does not require agents to take into account correlations between their inputs derived from the network topology or the ordering, we omit $G$ and $ \sigma $ as inputs to $\mu^M$: \[
    \mu^M(X_v) \deq \begin{cases}
        1 & \text{if $  \frac 1{\absolute{X_v}}\sum_{x \in X_v} x > \frac 12 $,} \\
        0 & \text{if $  \frac 1{\absolute{X_v}}\sum_{x \in X_v} x < \frac 12 $,} \\
        s_v & \text{otherwise.}
    \end{cases}
\]

Finally, we quantify how successful the network is in predicting the ground truth by defining the following notion of a learning rate.
\begin{definition}
    The \emph{cumulative learning rate} (CLR) of a network $ \network $ under the ordering $ \sigma $ and an aggregation rule $ \mu $ is \[
		\clr (\network, \sigma, \mu) \deq \E_{\theta, s} \left[ \sum_{v \in V}^{} \mathds{1}_{\{a_v = \theta\}} \right] = \sum_{v \in V} \Pr_{\theta, s}\left[a_v = \theta\right],
	\]
    where the equality follows from linearity of expectation.
    Further, the \emph{learning rate} (LR) of a network $ \network $ under the ordering $ \sigma $ is simply \[
		\lr (\network, \sigma, \mu) \deq \tfrac 1 n \clr (\network, \sigma, \mu).
	\]
\end{definition}

We are mainly interested in the \emph{optimal} learning rate of a network, defined as follows.

\begin{definition}[Optimal LRs]
    The \emph{optimal cumulative learning rate} of a network $ \network $ is \[
        \oclr (\network, \mu) \deq \max_{\sigma \in \Sigma_n} \clr (\network, \sigma, \mu),
    \]
    and the \emph{optimal learning rate} of a network $ \network $ is \[
        \olr (\network, \mu) \deq \max_{\sigma \in \Sigma_n} \lr (\network, \sigma, \mu).
    \]
\end{definition}

Note that when $ p $ and $ q $ are clear from the context, we use the learning rate notation with only the graph, for example $ \olr (G, \mu) = \olr(( G,p,q ), \mu) $.
We can now present a formal definition of our main focus, the \netlearnopt{} optimization problem, and \netlearn{}, its decision version.

\begin{definition}[\netlearnopt{}]
    Suppose $\mu$ is a fixed aggregation rule.
    Given a network $ \network $, the \netlearnopt{} problem is to maximize  $ \lr (\network, \sigma, \mu) $, over $ \sigma \in \Sigma_n $.
\end{definition}

\begin{definition}[\netlearn{}]\label{def:decnetlearn}
    Suppose $\mu$ is a fixed aggregation rule.
    Given a network $ \network $ and a constant threshold $ \varepsilon \in (0,1)$,
    the \netlearn{} decision problem asks whether \[
            ( \exists \sigma \in \Sigma_n )\quad \lr (\network, \sigma, \mu) \geq 1-\varepsilon.
        \]
\end{definition}

Note that \Cref{def:decnetlearn} can be formulated equivalently by asking whether an optimal ordering $\sigma^*$ which maximizes the network learning rate achieves LR at least $1-\varepsilon$.

In \Cref{sec:bayes,sec:maj}, we focus on the decision problem, offering a proof that it is \np-hard for $ \mu = \mu^B $ and $ \mu = \mu^M $.
Finally, in \Cref{sec:approx}, we use insights from the \np-hardness proofs to show \netlearnopt{} is hard to even approximate.
Surprisingly, this gives us a stronger \np-hardness statement, showing that \netlearn{} is \np-hard even if we arbitrarily fix the agents' accuracy $ p \in ( \frac 12, 1 ) $.

\section{Proof of \np-hardness for the Bayesian Model}
\label{sec:bayes}

We now state one of the main results of this paper---the hardness of \netlearn{}.

\begin{theorem}[ ]\label{thm:nphardness_bayes}
	\netlearn{} with the Bayesian learning rule $\mu = \mu^B$ is \np-hard.
\end{theorem}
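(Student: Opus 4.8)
## Proof Plan

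The plan is to reduce from \sat. Given a 3-CNF formula $\varphi$ with variables $x_1, \dots, x_k$ and clauses $C_1, \dots, C_m$, I would build a social network $\network = (G, q, p)$ — where $p$ and $q$ are chosen constants — together with a threshold $\varepsilon$, such that $\varphi$ is satisfiable if and only if some decision ordering achieves learning rate at least $1 - \varepsilon$. The central design principle is that decision orderings on $G$ should encode truth assignments: for each variable $x_i$ there will be a small ``variable gadget'' whose only two meaningfully different orderings correspond to setting $x_i$ to true or false, and for each clause $C_j$ there will be a ``clause gadget'' wired to the relevant variable gadgets so that the clause gadget's local learning rate is strictly higher when $C_j$ is satisfied by the chosen assignment than when it is not. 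Summing over all gadgets, a satisfying assignment yields a strictly larger CLR than any non-satisfying one, and since there are only finitely many clauses the gap is a fixed constant; choosing $\varepsilon$ to sit inside this gap completes the reduction.

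Concretely, I would first isolate the building blocks. I expect to need (i) a mechanism that forces any good ordering to respect the intended variable/clause structure — e.g.\ padding each gadget with enough ``free'' high-accuracy nodes (isolated or with only outgoing edges) so that deviating from the intended local orderings costs more than it could ever gain elsewhere; and (ii) a pair of small configurations, call them the ``good cell'' and ``bad cell,'' where a node aggregating a few correlated earlier announcements plus its private signal achieves success probability $p_{\text{good}}$ versus $p_{\text{bad}}$ with $p_{\text{good}} > p_{\text{bad}}$, the difference surviving the Bayesian update. These cells are the technical heart: I would compute the Bayesian posterior of the deciding node given the joint distribution of its inputs (which are not independent, since they share the hidden $\theta$ and may share neighbors), verify the posterior crosses $1/2$ in the intended direction, and extract an explicit constant gap. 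Wiring: a clause gadget's decider sees one announcement from each of its three literals' gadgets, arranged so that if at least one literal is satisfied the decider is in a ``good cell,'' otherwise in a ``bad cell.''

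Then I would assemble the global argument. Let $W$ be the total weight (number of nodes times their individual success probabilities) contributed by all non-clause parts of the network; this is the same for every ordering that respects the gadget structure. For orderings corresponding to an assignment $\alpha$, the clause gadgets contribute $\sum_j \big( \mathds{1}_{\{\alpha \models C_j\}} p_{\text{good}} + \mathds{1}_{\{\alpha \not\models C_j\}} p_{\text{bad}} \big) \cdot (\text{gadget size})$, which is maximized exactly when $\alpha$ satisfies all clauses, with the next-best value smaller by at least $(\text{gadget size}) \cdot (p_{\text{good}} - p_{\text{bad}})$. A separate lemma must show that no ordering violating the gadget structure can beat the best structure-respecting ordering — this is where the padding-node counting comes in, and it should use Bayesian monotonicity (a node can always copy its best earlier neighbor) to bound the gain from any ``illegal'' rearrangement. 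Finally, I normalize by $n$ and pick $\varepsilon$ so that $1 - \varepsilon$ lies strictly between the all-satisfied value and the best not-all-satisfied value.

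The main obstacle, I expect, is the analysis of the gadgets under Bayesian updating: because neighboring agents' announcements are correlated through $\theta$ and through shared in-neighbors, computing the decider's posterior requires tracking the full joint law of its input tuple $X_v$, and one must show the ``good'' versus ``bad'' posteriors stay on the correct sides of $1/2$ robustly — and, crucially, that embedding the gadgets into the larger graph does not introduce extra correlations (via later-ordered nodes or cross-gadget edges) that corrupt these local computations. Handling this cleanly likely forces careful choices of $p$, $q$, gadget sizes, and edge directions, and I would defer the bulk of these case analyses to an appendix, as the authors indicate they do.
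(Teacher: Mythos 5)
Your high-level strategy matches the paper's: a reduction from \sat{} in which decision orderings encode truth assignments through variable and clause gadgets, and a threshold $\varepsilon$ placed in the resulting gap. However, there is a genuine gap in your plan centered on the ``good cell'' / ``bad cell'' dichotomy.

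You posit that a clause gadget attains one fixed learning rate $p_{\text{good}}$ whenever \emph{at least one} literal is satisfied, and a lower rate $p_{\text{bad}}$ otherwise. This binary behavior is what makes your summation argument — ``a satisfying assignment yields a strictly larger CLR than any non-satisfying one'' — go through cleanly. But with any natural Bayesian gadget (in particular the paper's construction, where each clause node sees an announcement from its corresponding literal node), the clause gadget's learning rate is \emph{not} binary: it depends monotonically on the number of true literals feeding in, giving four distinct values $\pzero < \pone < \ptwo < \pthree$ for zero, one, two, and three true literals. This breaks your argument: a non-satisfying assignment that leaves exactly one clause unsatisfied while satisfying all others with three true literals could yield CLR $(M-1)\pthree + \pzero$, which may \emph{exceed} the CLR $M\pone$ of a satisfying assignment where every clause is satisfied by only one true literal. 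You would need to either (a) design a gadget that deliberately flattens the $\pone, \ptwo, \pthree$ values into a single $p_{\text{good}}$ (which is hard to arrange with Bayesian aggregation), or (b) recognize that the gap $\pone - \pzero$ must dominate $(M-1)(\pthree - \pone)$, which forces you to choose the private-signal accuracy $p$ as a function of the instance — the paper sets $p = p(M) = \frac{3M-4}{3M-3}$, pushing $p$ toward $1$ so that $\pthree - \pone$ shrinks faster than $\pone - \pzero$. Your proposal treats $p$ and $q$ as fixed constants chosen up front, which cannot work for this reason (unless one additionally invokes approximation-hardness machinery, as the paper does later to get a fixed $p$).

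A secondary difference: you propose ``padding'' gadgets with free high-accuracy nodes to force any near-optimal ordering to respect the gadget structure, and invoking Bayesian monotonicity to bound gains from illegal rearrangements. The paper avoids this machinery entirely by orienting edges so that the gadget structure is forced automatically: variable cells have no incoming edges from outside, and clause gadgets receive edges only from their literals' cells, so any optimal ordering places all cell nodes before all clause-gadget nodes and places dummy nodes first within each cell. This makes the ``structure-respecting'' lemma essentially immediate rather than something requiring a separate counting argument. Your padding idea is not wrong, but it is extra complexity that the right choice of edge orientation eliminates.
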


\subsection{Proof Idea}

We perform a reduction from \sat{} to \netlearn{}.
Specifically, we assume in our reduction that all formulas have exactly 3 distinct literals in each clause, and never a literal along with its negation.
For a given formula $ \varphi $, we construct a network $ \network $ and an $ \varepsilon > 0 $, such that
\[
    \olr(\network, \mu^B) \geq 1-\varepsilon \quad \iff \quad \text{$ \varphi $ is satisfiable.}
\]

The network $ \network $ consists of a directed graph $ G $, the ground truth prior $ q $, and the prior of the agents' private signals $ p $.
Our reduction requires setting $ p $ and $ G $ based on the formula, but it allows us to set $ q=\frac 12 $, regardless of $ \varphi $.

\subsection{Graph Construction \& Notation}
\label{ssec:graph}

First, we construct the directed graph $G$.
Our construction consists of $ N $ \emph{variable cells}, and $ M $ \emph{clause gadgets}, where $ N $ and $ M $ are the number of variables and clauses in $ \varphi $, respectively.
We define the cell and gadget first, and then define the full graph in \Cref{def:bayesian_graph}.

\begin{restatable}[Variable cell]{definition}{variableCell}
\label{def:cell}
    Let $ x $ be a variable of a formula $ \varphi $.
    A \emph{variable cell} of $ x $ is a directed graph $ \cell x = \left( V_x, E_x \right) $, where \begin{enumerate}[ ]
    	\item $ V_x = \left\{ x, \lnot x, d_x \right\} $, and
	\item $ E_x = \left\{ \left( d_x, x \right), \left( d_x, \lnot x \right), \left( x, \lnot x \right), \left( \lnot x, x \right) \right\} $.
    \end{enumerate}
\end{restatable}

\Cref{fig:cell_bayesian} depicts a cell for some variable $x_i$.

\begin{definition}[Clause gadget]
    Let $ C = j \lor k \lor \ell $ be a clause of a 3-CNF formula $ \varphi $, where $ j \neq k \neq \ell $ are some literals.
    Then the \emph{clause gadget} is $ \gadget C = \left( V_C, E_C \right) $, where \begin{enumerate}[ ]
    	\item $ V_C = \left\{ j,k,\ell \right\} $, and
    	\item $ E_C = \left\{ \left( x,y \right) \suchthat x,y \in \left\{ j,k,\ell \right\} \land x \neq y \right\} $.
    \end{enumerate}
\end{definition}

\begin{restatable}[Formula graph]{definition}{bayesianGraph}\label{def:bayesian_graph}
	Let $ \varphi = C_1 \land C_2 \land \dots \land C_M $ be a CNF formula of variables $ \vars = \left\{ x_1, \ldots, x_N \right\} $, where each clause $ C_i $ is the disjunction of exactly three literals.
	Then $ G_\varphi $ is a disjoint union of the graphs $ \cell {x_i} $ for all $ {x_i} \in \vars $, and $ \gadget {C_i} $ for all clauses $ C_i \in \varphi $.
        Additionally, there is an edge to each of the vertices in every $ \gadget {C_i} $ from the corresponding literal nodes in the respective variable cell.
\end{restatable}

Note that that there are no incoming edges to any cell, except from within the same cell, so the learning rate of each cell is determined only by the ordering of its own vertices. Also, no two clause gadgets share any vertices, so the ordering of vertices within a gadget only affects that gadget.
For an illustration of the formula graph construction, refer to \Cref{fig:gphi_bayesian}, where we give a sample formula graph for the formula $ \varphi = x \lor y \lor \lnot z $.

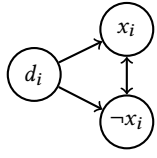
\begin{figure}[t!]
	\centering
	\begin{tikzpicture}[
		xsh/.style = { xshift=10mm },
		node distance = 5mm and 5mm,
	stff/.style={circle, draw=black, \figThickness, minimum size=\figCircSize, inner sep=0pt},
	]
		\node[stff]        (x)                  {$ x_i $};
		\node[stff]        (nx)   [below=of x]   {$ \lnot x_i $};
		\node[stff]        (d)   [left=of x,yshift=-6mm]   {$ d_i $};

		\draw[->, \figThickness] (d)  to  (x);
		\draw[->, \figThickness] (d)  to (nx);
		\draw[<->, \figThickness] (x)  to (nx);
	\end{tikzpicture}
 \Description[A graph of three vertices, connected together]{A graph containing vertices x and non-x, connected both-ways, along with a degree d, from which two edges go to x and non-x, but not the other way around.}
	\caption{The cell for variable $ x_i $.}
 \label{fig:cell_bayesian}
\end{figure}
\begin{figure}[t!]
	\centering
	\begin{tikzpicture}[
		xsh/.style = { yshift=-10mm },
		node distance = 5mm and 5mm,
		dcs/.style = { yshift=-3mm,xshift=-6mm },
	stff/.style={circle, draw=black, \figThickness, minimum size=\figCircSize, inner sep=0pt},
	]
		\node[stff]        (x_i)                  {$ x $};
		\node[stff]        (nx_i)   [left=of x_i]   {$ \lnot x $};
		\node[stff]        (d_i)   [above=of x_i,dcs]   {$ d_x $};

		\node[stff]        (x_j)    [left=of nx_i]  {$ y $};
		\node[stff]        (nx_j)   [left=of x_j]   {$ \lnot y $};
		\node[stff]        (d_j)   [above=of x_j,dcs]   {$ d_y $};

		\node[stff]        (x_k)    [left=of nx_j]  {$ z $};
		\node[stff]        (nx_k)   [left=of x_k]   {$ \lnot z $};
		\node[stff]        (d_k)   [above=of x_k,dcs]   {$ d_z $};

		\node[stff]        (l_m^1)    [below=of x_i,xsh]  {$ x_C $};
		\node[stff]        (l_m^2)    [below=of x_j]  {$ y_C $};
		\node[stff]        (l_m^3)    [below=of nx_k,xsh]  {$ \lnot z_C $};

		\draw[->, \figThickness] (d_i)  to  (x_i);
		\draw[->, \figThickness] (d_i)  to (nx_i);
		\draw[<->, \figThickness] (x_i)  to (nx_i);

		\draw[->, \figThickness] (d_j)  to  (x_j);
		\draw[->, \figThickness] (d_j)  to (nx_j);
		\draw[<->, \figThickness] (x_j)  to (nx_j);

		\draw[->, \figThickness] (d_k)  to  (x_k);
		\draw[->, \figThickness] (d_k)  to (nx_k);
		\draw[<->, \figThickness] (x_k)  to (nx_k);

		\draw[->, \figThickness] (x_i)  to (l_m^1);
		\draw[->, \figThickness] (x_j)  to (l_m^2);
		\draw[->, \figThickness] (nx_k)  to (l_m^3);

		\draw[<->, \figThickness] (l_m^1)  to (l_m^2);
		\draw[<->, \figThickness] (l_m^1)  to (l_m^3);
		\draw[<->, \figThickness] (l_m^2)  to (l_m^3);
	\end{tikzpicture}
	\caption{The graph $ G_\varphi $ for $ \varphi = C = \left( x \lor y \lor \lnot z \right)$.}
 \Description{A construction for the Majority vote reduction, as described in \Cref{def:bayesian_graph} for a simple formula.}
	\label{fig:gphi_bayesian}
\end{figure}

\ourparagraph{Ordering-assignment relation}\label{par:ordering_literals}
To determine satisfiability of $\varphi$ from the learning rate of $ G_\varphi $, we map vertex orderings to variable assignments.
We then show that orderings achieving higher learning rates correspond to assignments with more satisfied clauses.
For a more detailed description of both the mapping and its properties, see \Cref{ssec:assignment_bayes}.

For clarity, we introduce some more notation. Let $ \ell $ be a literal of some variable $ x \in \vars $, meaning $ \ell = x $ or $ \ell = \lnot x $.
We say that cell $\cell x$ is in one of two states: it is ``on'' under a decision ordering $\sigma$ if $\sigma(\lnot x) < \sigma(x)$; otherwise, cell $\cell x$ is ``off''.
We say that the literal $ \ell $ is ``on'' if $ \ell = x $ and $ \cell x $ is on, or $ \ell = \lnot x $ and $ \cell x $ is off; otherwise, literal $ \ell $ is ``off''.
For brevity, we further denote $ \cell \ell \deq \cell x $, regardless of whether $ \ell = x $ or $ \ell = \lnot x $.

\subsection{Gadget Learning Rates}
This section lists the learning rates of the cells and clause gadgets under Bayesian aggregation. We assume WLOG for this section that $\theta = 1$, and compute all probabilities in this section conditioned on $\theta = 1$. Since we always take $\theta$ to be uniform on $\{0,1\}$, this yields the same values as taking the probability over $\theta$ as well. We note that our computations were verified using Wolfram Mathematica.

We begin by examining the learning rate of an arbitrary cell under a pair of orderings in which the cell is either ``on'' or ``off''. The following lemma shows that cells achieve the same learning rate under either of these orderings, and that this learning rate is the best possible over all orderings. 

\begin{lemma}[Bayesian Cell LR] \label{lemma:bayesian_cellLearningRate}
    Let $x \in \vars$.
    Let $ q = \frac 12 $, and $ p > \frac 12 $ be given.
    Then \[
	\oclr (\cell x) = \tfrac 52 p + \tfrac 32 p^2 - p^3.
    \]
    In particular, if under an optimal ordering $ \sigma^* $ a literal $ \ell $ is on, then its corresponding literal node has learning rate $ \clr (\ell, \sigma^*, \mu^B) = \frac p2 + \frac 32 p^2 - p^3 $; otherwise $ \clr (\ell, \sigma^*, \mu^B) = p $.
\end{lemma}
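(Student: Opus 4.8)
The plan is a brute‑force case analysis that is kept finite by two reductions. First, $d_x$ has no in‑edges (in $\cell x$ or in the full graph $G_\varphi$), so it always announces $s_{d_x}$ and contributes exactly $p$ to the CLR in every ordering. Second, as the excerpt notes, no cell receives edges from outside itself, so $\clr(\cell x,\sigma,\mu^B)$ depends only on the order $\sigma$ induces on $\{x,\lnot x,d_x\}$; there are $3!=6$ such orders, and by the relabelling symmetry $x\leftrightarrow\lnot x$ it suffices to treat the three with $x$ before $\lnot x$, namely $(d_x,x,\lnot x)$, $(x,d_x,\lnot x)$, $(x,\lnot x,d_x)$. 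I would condition throughout on $\theta=1$ (allowed in this section) and use that, with $q=\tfrac12$, the Bayesian posterior odds of $\theta=1$ given a node's inputs equal the likelihood ratio of those inputs, which factors into a product of per‑signal ratios whenever the inputs are independent ``raw'' private signals — so $\mu^B$ then just outputs the majority.

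Next I would isolate the three ingredient computations. (i) A node whose inputs are its own signal plus one raw announcement sees two conditionally independent equal‑strength signals; the posterior is an exact tie precisely when they disagree, and after the fair coin the accuracy is still $p$. (ii) A node whose inputs are its own signal plus two raw announcements sees three conditionally independent equal‑strength signals; $\mu^B$ returns the strict majority with no ties, giving accuracy $3p^2-2p^3$. (iii) The crux: in $(d_x,x,\lnot x)$ the node $\lnot x$ sees $s_{\lnot x}$, the raw signal $a_{d_x}=s_{d_x}$, and $a_x$, which is $x$'s Bayesian aggregate of $(s_x,a_{d_x})$ with a coin broken on ties, so $a_x$ and $a_{d_x}$ are statistically dependent; here I would compute the joint law $\Pr[(a_{d_x},a_x)=(\cdot,\cdot)\mid\theta=1]$ and its $\theta=0$ mirror explicitly, extract the likelihood ratio of the pair, multiply by the likelihood ratio of $s_{\lnot x}$, and in each resulting sub‑case decide $\mu^B$'s output (the ``mixed'' sub‑cases reducing to $p>\tfrac12$), obtaining accuracy $\tfrac p2+\tfrac32 p^2-p^3$ for $\lnot x$.

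Assembling these: $(d_x,x,\lnot x)$ gives $p+p+(\tfrac p2+\tfrac32 p^2-p^3)=\tfrac52 p+\tfrac32 p^2-p^3$ (with $x$ of rate $p$ by (i)); $(x,d_x,\lnot x)$ gives $p+p+(3p^2-2p^3)$ by (ii); and $(x,\lnot x,d_x)$ gives $3p$ by (i) and the $d_x$ observation. Each pairwise difference of these cubics is a low‑degree polynomial vanishing at $p=\tfrac12$ and $p=1$, whose sign on $(\tfrac12,1)$ one determines by factoring; this comparison identifies $(d_x,x,\lnot x)$ — together with its mirror $(d_x,\lnot x,x)$ — as the maximiser, so $\oclr(\cell x)=\tfrac52 p+\tfrac32 p^2-p^3$. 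Finally, in the optimal ordering $(d_x,x,\lnot x)$ the literal $\lnot x$ comes second among $\{x,\lnot x\}$, so $\cell x$ is ``off'' and $\lnot x$ is the ``on'' literal with rate $\tfrac p2+\tfrac32 p^2-p^3$ by (iii), while the ``off'' literal $x$ has rate $p$ by (i); by symmetry this is exactly the per‑literal statement.

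The step I expect to be the real obstacle is (iii): once an observed announcement is itself a randomised function of another observed announcement, likelihood ratios no longer multiply, and one must keep in mind that the coin the earlier node used is part of the probability space the later node reasons over. Getting the joint likelihoods correct and then checking $\mu^B$'s output in each sub‑case \emph{uniformly} over $p\in(\tfrac12,1)$ is the delicate part — which is precisely why the excerpt remarks that these computations were verified in a computer algebra system. The only other point needing care is confirming that the three‑way comparison of cubics selects the same ordering for \emph{every} admissible $p$, not merely for sampled values.
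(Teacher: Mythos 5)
Your enumeration strategy (six orderings, cut to three by the $x\leftrightarrow\lnot x$ symmetry) is a genuinely different route from the paper's, which simply asserts that placing $d_x$ first is always beneficial and then analyzes only $(d_x,x,\lnot x)$. However, your claimed conclusion of the comparison is wrong, and the error is not incidental: carrying out the comparison you propose would actually contradict the lemma. From your own candidate list, $(x,d_x,\lnot x)$ gives $2p+3p^2-2p^3$ and $(d_x,x,\lnot x)$ gives $\tfrac52 p+\tfrac32 p^2-p^3$; the difference is
\[
\bigl(2p+3p^2-2p^3\bigr)-\bigl(\tfrac52 p+\tfrac32 p^2-p^3\bigr)
\;=\; -\tfrac p2+\tfrac32 p^2-p^3 \;=\; -p\bigl(p-\tfrac12\bigr)(p-1),
\]
which is strictly positive for every $p\in(\tfrac12,1)$. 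So $(x,d_x,\lnot x)$, not $(d_x,x,\lnot x)$, is the maximiser, and $\oclr(\cell x)$ is $2p+3p^2-2p^3$ rather than the value in the lemma statement. The ``three-way comparison selects $(d_x,x,\lnot x)$ for every admissible $p$'' step you flag as needing care would in fact fail.

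The underlying conceptual issue is worth naming, because it also affects the paper's own proof. The paper's argument that moving $d_x$ earlier ``can only increase [$x$ and $\lnot x$'s] chances'' conflates two things. Bayesian monotonicity says that a single agent cannot do worse when given a superset of inputs; it does not say that enriching an \emph{earlier} agent's inputs helps a \emph{later} observer. When $d_x$ precedes $x$, the announcement $a_x$ becomes a randomized (coin-tiebroken) function of $(s_x,a_{d_x})$, which is a garbling of $s_x$. To $\lnot x$, who already observes $a_{d_x}$ directly, this garbled $a_x$ carries strictly less information about $\theta$ than the raw $s_x$ it would observe under $(x,d_x,\lnot x)$ — and this is exactly what the arithmetic above detects, with $\lnot x$'s rate dropping from $3p^2-2p^3$ to $\tfrac p2+\tfrac32 p^2-p^3$. (The analogous issue does not arise in the majority-rule model, where ties default to the agent's own signal rather than a coin, so $a_x=s_x$ regardless of where $d_x$ is ordered; this is precisely why the majority cell CLR $2p+3p^2-2p^3$ is reached without incident.) If you pursue the enumeration you outlined, you should flag this discrepancy with the stated lemma rather than reproduce the claimed maximiser; as written, that step of your proposal contains a sign error on a three-line factoring check.
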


\begin{proof}
    First, observe that for the optimal ordering $ \sigma^* $, it is always beneficial to put the dummy node $ d_x $ \emph{before} the nodes $ x $ and $ \lnot x $.
    This is because $ d_x $ has no incoming edges, so it cannot acquire more information by going later, and it has edges going to $ x $ and $ \lnot x $, which can only increase their chances of getting the correct answer.
    Hence, we can see that $ \oclr(d_x) = p $.

    The case of the remaining two nodes is symmetric, so WLOG, let us assume that $ \sigma^*(x) < \sigma^*(\lnot x) $ (so the cell $ \cell x $ is ``off'').
    The node $ x $ then receives the action of $ d_x $, which is i.i.d. from its private information.
    So node $ x $ chooses its action correctly either if both $ s_x $ and $ a_{d_x} $ are correct, where $a_{d_x}$ is the action chosen by node $d_x$, or if exactly one of the two are correct and $ x $ tiebreaks correctly. The first outcome occurs with probability $ p^2 $, and the second with probability $ 2p (1-p) \frac 12 $.
    Thus, the learning rate of node $ x $ is \[
        \oclr(x) = p^2 + p - p^2 = p.
    \]

    Finally, the node $ \lnot x $ receives its private signal, $ s_{\lnot x} $ and the actions $ a_{d_x} $ and $ a_x $.
    Notice that these three pieces of information are \emph{not} independent, since $ x $ was influenced by $ d_x $.
    We perform case analysis on the relative likelihood \[
        \Lambda \deq \frac{\pr{\theta = 1 \mid X_{\lnot x}}}{\pr{\theta = 0 \mid X_{\lnot x}}} = \frac{\pr{X_{\lnot x} \mid \theta = 1}}{\pr{X_{\lnot x} \mid \theta = 0}},
    \]
    where the equality follows from Bayes' theorem and from the fact that the prior $ q = \frac 12 $.
    Recall that $ X_{\lnot x} = \{s_{\lnot x}, a_x, a_{d_x}\} $, and that $ s_{\lnot x} $ is independent of the other two.
    Thus, $ \Lambda $ can be expressed as
    \begin{align*}
        \Lambda = \frac{\pr{s_{\lnot x} \mid \theta = 1}}{\pr{s_{\lnot x} \mid \theta = 0}} \cdot
        \frac{\pr{a_{d_x} \mid \theta = 1}}{\pr{a_{d_x} \mid \theta = 0}} \cdot
        \frac{\pr{a_x \mid \theta = 1, a_{d_x}}}{\pr{a_x \mid \theta = 0, a_{d_x}}}.
    \end{align*}
    We compute $ \Lambda $ for each case of $(s_{\lnot x}, a_x, a_{d_x})$:
    \begin{enumerate}
        \item $ \left( 1,1,1 \right) $: \[
                \Lambda = \tfrac p{1-p} \cdot \tfrac p{1-p} \cdot \tfrac{p + (1-p)/2}{(1-p) + p/2} = \tfrac {p^2}{(1-p)^2} \cdot \tfrac {1/2+ p/2}{1- p/2} > 1.
        \]
        \item $ \left( 1,1,0 \right) $: \[
                \Lambda = \tfrac p{1-p} \cdot \tfrac {1-p}p \cdot \tfrac{p/2}{(1-p)/2} = \tfrac {p}{(1-p)} > 1.
        \]

        \item $ \left( 1,0,1 \right) $: \[
                \Lambda = \tfrac p{1-p} \cdot \tfrac p{1-p} \cdot \tfrac{(1-p) /2}{p/2} = \tfrac {p}{(1-p)} > 1.
        \]

        \item $ \left( 1,0,0 \right) $: \[
                \Lambda = \tfrac p{1-p} \cdot \tfrac {1-p}p \cdot \tfrac{(1-p) + p/2}{p + (1-p) /2} =  \tfrac {1-p/2}{1/2+ p/2} < 1.
        \]
    \end{enumerate}
    The inequalities hold for $ \tfrac 12 < p < 1 $.
    The remaining cases (that is $(0,1,1)$, $(0,1,0)$, $(0,0,1)$, $(0,0,0)$) follow from symmetry.
    It is now clear that if $ \theta = 1 $, $ a_{\lnot x} $ is correct in cases 1, 2, 3, 5.
    We now compute $ \oclr(\lnot x) $, which is the probability of cases 1, 2, 3, or 5 occurring. \begin{align*}
        \oclr(\lnot x) &= \pr{X_{\lnot x} \in \left\{ \left( 1,1,1 \right) ,  \left( 1,0,1 \right) ,  \left( 1,1,0 \right) ,  \left( 0,1,1 \right) \right\}} \\
                       &= p ( p + (1-p)\tfrac p2)) + (1-p)p(p + (1-p)\tfrac 12) \\
                       &= \tfrac p2 + \tfrac 32 p^2 - p^3.
    \end{align*}

    The cumulative learning rate of the clause gadget is then the sum of the above, \[
        \oclr(\cell x) = \tfrac 52 p + \tfrac 32 p^2 - p^3.
    \]
\end{proof}

To summarize, there exist two orderings for each cell which yield the same cell learning rate.
The two orderings place $d_x$ first, and correspond to either the ``on'' state (when $ \sigma(x) < \sigma(\lnot x) $) or the ``off'' state ($ \sigma(x) > \sigma(\lnot x) $).
For now, we will defer the question of which of the two is optimal in the overall network ordering.
Addressing this first requires examining the learning rates of the clause gadgets.
Since the graph contains edges from cells to clause gadgets, there is an optimal ordering which orders all cell nodes before all clauses gadgets.
We begin by examining the learning rate for arbitrary clause gadgets. 

\begin{lemma}\label{lemma:bayesian_clause}
    Let $ C = \alpha \lor \beta \lor \gamma $ be a clause of $ \varphi $.
    Let $ \sigma^* $ be an optimal ordering on $ \network $.
    Let $p'\deq\frac p2 + \frac 32 p^2 - p^3 $.
    Then if under $ \sigma^* $, exactly $ i $ cells of $ \alpha, \beta, \gamma $ are ``on'', then $ \clr(\gadget C, \sigma^*) = \clr_i $, where \begin{align*}
        \pzero &\deq p (2 p^4-5 p^3+5 p+1). \\
        \pone &\deq  p^4 (2 p'-1)+p^3 (2-4 p')+p^2 (1-2 p')+4 p p'+p', \\
        \ptwo &\deq 4 p^3 (p'-1) p'+p^2 (-6 (p')^2+4 p'+1)+2 p p'+p' (p'+1), \\
        \pthree &\deq p' (p^2 (2 (p')^2-3 p'+1)-p (2 (p')^2+p'-3)+2 p'+1).
    \end{align*}
    Furthermore, if $ D \in \varphi $ is a satisfied clause under the ordering $ \sigma^* $, meaning at least one of the literals of $ D $ is ``on'' under $ \sigma^* $, then 
    \[
        \pthree \geq \clr(\gadget D, \sigma^*) \geq \pone \geq \pzero.
    \]
\end{lemma}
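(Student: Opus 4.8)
The \emph{furthermore} part is, in the end, a statement purely about the four numbers $\pzero,\pone,\ptwo,\pthree$. A satisfied clause $D$ has $i\in\{1,2,3\}$ ``on'' literals, so by the first part of the lemma $\clr(\gadget D,\sigma^*)=\clr_i\in\{\pone,\ptwo,\pthree\}$, and the chain $\pthree\ge\clr(\gadget D,\sigma^*)\ge\pone\ge\pzero$ follows immediately from the monotonicity $\pzero\le\pone\le\ptwo\le\pthree$ for every $p\in(\tfrac12,1)$. (Note that only weak inequalities are asserted, so no strictness argument is needed here.) Thus the plan has two halves: (i) justify the displayed formulas for $\clr_i$, and (ii) prove the monotonicity.

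For (i), I would invoke the structural observations made just before the lemma: there is an optimal ordering $\sigma^*$ that places all cell vertices before all clause-gadget vertices, orders each cell optimally, and whose restriction to $\gadget C$ is therefore itself an optimal internal ordering of that gadget (the gadget has no edges to the rest of the graph besides the incoming parent edges). By \Cref{lemma:bayesian_cellLearningRate}, under $\sigma^*$ the parent of a gadget vertex announces a bit that is correct with probability $p'=\tfrac p2+\tfrac 32p^2-p^3$ if its literal is ``on'' and with probability $p$ otherwise; by the $\theta\leftrightarrow 1-\theta$ symmetry of the model this bit is a \emph{symmetric} binary signal, the three parent bits are mutually independent given $\theta$ (they come from three distinct variable cells) and independent of the three private signals. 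So $\gadget C$ reduces to a directed triangle in which each vertex additionally observes one symmetric parent bit of accuracy $p$ or $p'$, exactly $i$ of them equal to $p'$. I would then compute $\Pr[a_v=\theta]$ for each of the three vertices by the same likelihood-ratio case analysis used in the proof of \Cref{lemma:bayesian_cellLearningRate} --- enumerate the input patterns, determine the Bayes action and tie-break on each, sum --- and optimize over the $3!$ internal orderings; by symmetry of the triangle and interchangeability of the parents the result depends only on $i$, yielding $\clr_i$. This is the ``fairly technical'' bookkeeping the paper defers to the appendix: the genuinely delicate case is the \emph{last} gadget vertex, whose two in-gadget inputs are the correlated announcements of the first two vertices, so the likelihood ratio does not factor.

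For (ii), note first that $p'-p=p(p-\tfrac12)(1-p)>0$ on $(\tfrac12,1)$ (and $1-p'=(1-p)(1+\tfrac p2-p^2)>0$, so in fact $p<p'<1$). Substituting the definition of $p'$ into the four polynomials and expanding, the consecutive differences factor cleanly: $\clr_i-\clr_{i-1}=(p'-p)\,g_i(p)$ for $i\in\{1,2,3\}$, with explicit low-degree $g_i$; for instance $g_1(p)=2p^4-4p^3-2p^2+4p+1$, which is $\ge\tfrac12$ on $[\tfrac12,1]$ since $2p^3(p-2)\ge-2$ and $-2p^2+4p\ge\tfrac32$ there. Checking that $g_2,g_3$ are likewise positive on $(\tfrac12,1)$ is the same kind of elementary bounding (and is among the Mathematica-verified computations the authors mention). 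This gives the monotonicity, hence the lemma. The $(p'-p)$ factorization is not accidental: each $\Pr[a_v=\theta]$, and hence $\clr_i$, is a symmetric and multilinear function of the three parent accuracies, so passing from $i-1$ to $i$ strong parents changes it by a discrete derivative proportional to $p'-p$. One could instead try to argue monotonicity conceptually --- a $p$-accurate symmetric bit is a garbling of a $p'$-accurate one, so strengthening a parent can only help the gadget --- but making that rigorous in the sequential model is awkward precisely because later vertices observe correlated earlier actions, so the direct expansion is the cleaner route.

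The main obstacle is step (i): pinning down the four polynomials exactly. Selecting the optimal internal ordering when the parents have heterogeneous accuracies, and correctly handling the dependence among the announcements seen by the last gadget vertex, is where all the technical weight sits --- which is why the computation is Mathematica-checked and deferred to the appendix. Once the formulas are in hand, step (ii) is routine polynomial arithmetic.
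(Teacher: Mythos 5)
Your proposal follows essentially the same route as the paper's proof (Appendix~\ref{app:bayesian_gadgets}): place all cell vertices before the gadget, compute the three literal nodes' learning rates by likelihood-ratio case analysis with parent accuracies in $\{p,p'\}$, take the max over internal orderings to get the $\clr_i$, and then verify $\pzero\le\pone\le\ptwo\le\pthree$ by polynomial arithmetic. The $(p'-p)$ factorization of the consecutive differences that you identify is a cleaner way to organize that last step than the paper's raw Mathematica check, and your lower bound on $g_1$ is correct.

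One caveat on your ``not accidental'' explanation: the $\clr_i$ are \emph{not} globally multilinear functions of $(p_\alpha,p_\beta,p_\gamma)$, because the Bayes-optimal action at the last gadget vertex changes regime with the parents' accuracies. In the appendix's case analysis for node $c$, the sign of the likelihood ratio at input $(s_c,\gamma,a,b)=(1,1,0,0)$ flips between the $(\text{on},\text{on},\text{off})$ pattern and the rest, and similarly at $(1,0,0,1)$; so the learning rate of $c$ is only piecewise multilinear in the parent accuracies, and the max over internal orderings is certainly not multilinear. Hence the clean factorization is a fact about the particular evaluation points $\{p,p'\}$ and must be verified directly from the explicit polynomials, which is what you (and the paper) do anyway. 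Your closing remark, that a conceptual garbling argument is awkward here because later vertices see correlated earlier actions, is exactly the obstruction; it is why the appendix proceeds by full case analysis rather than by a Blackwell-dominance argument.
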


\begin{figure}[t!]
	\centering
\begin{tikzpicture}
	\begin{axis}[
    xlabel={$ p $},
    ylabel={$ \clr(\gadget C) $},
    legend pos=south east,
    axis lines=left,
]

\addplot [
    domain=1/2:1, 
    samples=\figSamples,
    color=red,
    smooth
]
{x*(2+14*x+27*x^2-6*x^3-59*x^4-7*x^5+62*x^6+21*x^7-78*x^8+44*x^9-8*x^10)/4};
\addlegendentry{\(\pthree\)}

\addplot [
    domain=1/2:1, 
    samples=\figSamples,
    color=green,
    smooth
]
{x*(2+15*x+22*x^2+7*x^3-84*x^4+14*x^5+92*x^6-72*x^7+16*x^8)/4};
\addlegendentry{\(\ptwo\)}

\addplot [
    domain=1/2:1, 
    samples=\figSamples,
    color=blue,
    smooth
]
{(x+9*x^2+12*x^3-20*x^4-6*x^5+14*x^6-4*x^7)/2};
\addlegendentry{\(\pone\)}

\addplot [
    domain=1/2:1, 
    samples=\figSamples,
    color=black,
    smooth
]
{x*(1+5*x-5*x^3+2*x^4)};
\addlegendentry{\(\pzero\)}

\end{axis}
\end{tikzpicture}
\caption{The relationship between learning rates of $ \gadget C $, depending on the number of ``on'' literals, as a function of $ p $.}
\Description{All are equal to $\frac 12$ for $p=\frac 12$, and $1$ for $p=1$. In the interval between those two values, $\pzero \leq \pone \leq \ptwo \leq \pthree$.}
\label{fig:bayes_ps}
\end{figure}
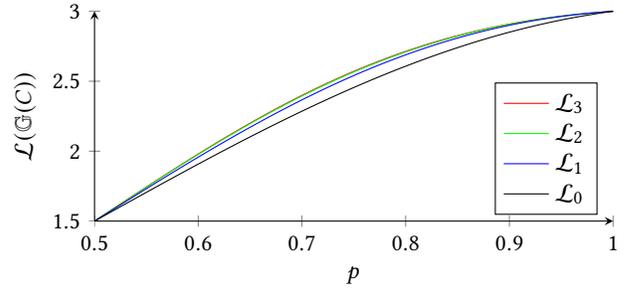

\begin{proof}[Proof Idea.]
    The full proof is long and technical.
    Here we offer the main idea, which is similar to that of \Cref{lemma:bayesian_cellLearningRate}.

    The actions of the nodes in $ \gadget C $ are exactly determined by \begin{enumerate}[ ]
        \item the ordering of the nodes in $ \gadget C $,
        \item the private signals of the nodes in $ \gadget C $,
        \item the actions taken by $ \alpha, \beta, \gamma $, as well as $ \alpha, \beta, \gamma $ being on or off.
    \end{enumerate}
    We compute the learning rate in each case.
    We then compute the expected value over the private signals and the actions of $ \alpha,\beta,\gamma $, the probabilities of which are given by \Cref{lemma:bayesian_cellLearningRate}.

    The resulting cumulative learning rate of $ \gadget C $ depends only on the states of $ \alpha, \beta, \gamma $.
    Among these are the learning rates $ \pone $ and $ \pthree $, which are the learning rates of the clause gadget when one or all of the literals are on, respectively.
     
    The case of $\pone$ actually corresponds to three sub-cases, depending on whether the node corresponding to the ``on'' literal is first, second, or third in the ordering. Notice that swapping the ordering of the three vertices inside the clause gadget does not affect the learning rate of other vertices in the network. Thus, since $ \sigma^* $ is an optimal ordering, it maximizes the learning rate for this clause, and as such, $\pone$ is the maximum over the three subcases.

    The full proof can be found in \appbayes.
\end{proof}

See \Cref{fig:bayes_ps} for the relationship between the different values as a function of $ p $.
Notice that $ \pthree,\ptwo,\pone$ and $\pzero $ converge as $p$ approaches $1$ and as it approaches $\frac 12$, regardless of the graph topology and ordering.
This is exactly what we expect: if $ p=1 $, the agents have perfect information from their private signals alone, while if $ p=\frac 12 $, then the private signals give the agents no extra information, and thus their the LR approaches $ \frac 12 $.

\subsection{Optimal Ordering \& Restrictions on p}
\label{ssec:assignment_bayes}
We can now determine the optimal ordering by examining the learning rates derived in the previous section. First, we define an \emph{induced ordering} below:

\begin{definition}
    Let $\mathcal{A}: \chi \to \{0,1\}$ be any assignment of values to variables. Define the (partial) \emph{ordering $\sigma(\mathcal{A})$ induced by $\mathcal{A}$} as follows: if $\mathcal{A}(x_i)=1$, then cell $\cell {x_i}$ is on; otherwise, $\cell {x_i}$ is off.
\end{definition}

In particular, the above definition gives a bijection between assignments and partial ordering over variables. We further say that a total ordering $\sigma$ \emph{respects} an assignment $\mathcal{A}$ (denoted by $\sigma \sim \mathcal{A}$) if it contains $\sigma(\mathcal{A})$ as a partial ordering over variable nodes. We also write $\sigma^*(\mathcal{A}) \deq \arg\max_{\sigma \sim \mathcal{A}} \mathcal{L}(\mathcal{N}, \sigma, \mu^B)$.

\begin{definition} \label{def:max_assignment}
    Let $\mathcal{A}: \chi \to \{0,1\}$ be an assignment of values to variables maximizing the number of satisfied clauses. Then $\mathcal{A}$ is a \emph{maximal assignment}.
\end{definition}

\begin{lemma}[Optimal Ordering] \label{lemma:bayes_bestOrder}
    Let $\mathcal{A}^*$ be a maximal assignment.  Let $p(M) \deq (3M-4)/(3M-3)$ be a threshold probability determined by $M$, the number of clauses. Then for all $p \geq p(M) $, $\sigma^*(\mathcal{A}^*)$ is an optimal ordering.
\end{lemma}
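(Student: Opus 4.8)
The plan is to decompose the network learning rate into cell and clause contributions, recognise that an optimal ordering is determined by a variable assignment together with the internal orders of the clause gadgets, and then reduce the whole statement to a single polynomial inequality in $p$ whose threshold turns out to be exactly $p(M)$.

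\emph{Decomposition.} Recall from the structural observations of this section that no edge enters a cell from outside it, that only the literal nodes $x_i,\lnot x_i$ (not the dummy $d_{x_i}$) have edges leaving a cell, that clause gadgets are sinks, and that reordering the three nodes inside one gadget affects no other vertex. Combined with \Cref{lemma:bayesian_cellLearningRate} — a cell attains its optimal internal rate $\tfrac52 p+\tfrac32 p^2-p^3$ by placing $d_{x_i}$ first, regardless of whether it is ``on'' or ``off'' — these imply that for every assignment $\mathcal{A}$ the best ordering $\sigma^*(\mathcal{A})$ respecting $\mathcal{A}$ may be taken to place $d_{x_i}$ first in each cell, all cell nodes before all clause‑gadget nodes, and each gadget ordered so as to maximise its own rate. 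Hence, by \Cref{lemma:bayesian_cellLearningRate,lemma:bayesian_clause},
\[
   \clr(\network,\sigma^*(\mathcal{A}),\mu^B)=N\Bigl(\tfrac52 p+\tfrac32 p^2-p^3\Bigr)+\sum_{C\in\varphi}\clr_{i_C(\mathcal{A})},
\]
where $i_C(\mathcal{A})\in\{0,1,2,3\}$ is the number of ``on'' literals of $C$ under the ordering induced by $\mathcal{A}$. Since every total ordering induces some assignment through its cell states, $\oclr(\network,\mu^B)=\max_{\mathcal{A}}\clr(\network,\sigma^*(\mathcal{A}),\mu^B)$, so it remains to show that the maximum of $\sum_{C}\clr_{i_C(\mathcal{A})}$ over all assignments is attained at a maximal assignment.

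\emph{Reduction to one inequality.} Let $m_0(\mathcal{A})$ be the number of clauses with no ``on'' literal under $\mathcal{A}$ (i.e.\ unsatisfied by $\mathcal{A}$), and put $m_0^\ast=\min_{\mathcal{A}}m_0(\mathcal{A})$, attained by maximal assignments. By \Cref{lemma:bayesian_clause} a satisfied clause contributes at least $\pone$ and at most $\pthree$, and an unsatisfied one exactly $\pzero$, so for a maximal $\mathcal{A}^\ast$ and any $\mathcal{A}$,
\[
   \sum_{C}\clr_{i_C(\mathcal{A}^\ast)}\ge(M-m_0^\ast)\,\pone+m_0^\ast\,\pzero,\qquad \sum_{C}\clr_{i_C(\mathcal{A})}\le(M-m_0(\mathcal{A}))\,\pthree+m_0(\mathcal{A})\,\pzero .
\]
Writing $m_0(\mathcal{A})=m_0^\ast+t$, the lower bound for $\mathcal{A}^\ast$ minus the upper bound for $\mathcal{A}$ equals $(M-m_0^\ast)(\pone-\pthree)+t(\pthree-\pzero)$; for a non‑maximal $\mathcal{A}$ we have $t\ge1$, and using $\pone\le\pthree$ and $m_0^\ast\ge0$ this is at least $M(\pone-\pthree)+(\pthree-\pzero)=(\pone-\pzero)-(M-1)(\pthree-\pone)$. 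Therefore the lemma follows once we establish the key inequality
\[
   (M-1)(\pthree-\pone)\le\pone-\pzero\qquad\text{for all }p\ge p(M).
\]

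\emph{The key inequality — the main obstacle.} Substituting the explicit polynomials for $\pzero,\pone,\pthree$ from \Cref{lemma:bayesian_clause} (with $p'=\tfrac p2+\tfrac32 p^2-p^3$), the quantity $(\pone-\pzero)-(M-1)(\pthree-\pone)$ is a polynomial in $p$ whose coefficients are affine in $M$; it vanishes at $p=1$, and the claim is that $p(M)=(3M-4)/(3M-3)=1-\tfrac1{3(M-1)}$ is precisely its largest root in $(\tfrac12,1)$, so that it stays nonnegative on $[p(M),1]$. The intuition: near $p=1$, writing $\delta=1-p$, one has $\pone-\pzero=\tfrac\delta2+O(\delta^2)$ while $\pthree-\pone=O(\delta^2)$ — the ``one on'' and ``three on'' clause rates agree to first order in $\delta$ — so the inequality begins to hold once $\delta$ drops below roughly $\tfrac1{3(M-1)}$. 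The delicate part, and the main obstacle, is verifying the inequality over the \emph{entire} interval $[p(M),1]$ rather than just asymptotically near $1$; this is a finite but somewhat involved polynomial computation (of the kind the paper verifies with Mathematica). Granting it, the previous step gives $\sum_{C}\clr_{i_C(\mathcal{A}^\ast)}\ge\sum_{C}\clr_{i_C(\mathcal{A})}$ for every non‑maximal $\mathcal{A}$, so the maximum is attained at a maximal assignment and $\sigma^*(\mathcal{A}^\ast)$ is an optimal ordering.
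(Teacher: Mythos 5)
Your proposal follows essentially the same reduction as the paper: split the CLR into a cell part (which is assignment-independent by \Cref{lemma:bayesian_cellLearningRate}) plus a clause part, use the bounds $\pzero, \pone, \pthree$ from \Cref{lemma:bayesian_clause}, and reduce optimality of $\sigma^*(\mathcal{A}^*)$ to the single polynomial inequality $(\pone-\pzero)\ge(M-1)(\pthree-\pone)$. Your handling of the general case via $m_0^*$ and $t=m_0(\mathcal{A})-m_0^*\ge1$ is in fact slightly cleaner than the paper's, which passes to the extreme $S^*=M,\ S'=M-1$ with only an informal remark that this is the worst case; your parameterization shows directly that the bound is worst when $m_0^*=0$ and $t=1$, i.e.\ exactly that extreme.

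One imprecision worth fixing: you frame the final step as showing that $p(M)=(3M-4)/(3M-3)$ is ``precisely the largest root'' of the polynomial $(\pone-\pzero)-(M-1)(\pthree-\pone)$ in $(\tfrac12,1)$. That is not what the paper proves and is not needed; the paper instead establishes the simpler, uniform lower bound $\frac{\pone-\pzero}{\pthree-\pone}\ge\frac{1}{3-3p}$ on $(\tfrac12,1)$, and then $p(M)$ is just the solution of $\frac{1}{3-3p}=M-1$. So $p(M)$ is a (clean) sufficient threshold, not necessarily the exact root. Your asymptotic computation ($\pone-\pzero=\tfrac\delta2+O(\delta^2)$, $\pthree-\pone=O(\delta^2)$ with $\delta=1-p$) correctly explains why the threshold behaves like $1-\Theta(1/M)$, but proving it is the exact root would be a harder and unnecessary task; the $\frac{1}{3-3p}$ bound is the intended route.
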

\begin{proof}
    Note that an assignment-induced ordering only specifies whether each cell is on or off. From \Cref{lemma:bayesian_cellLearningRate},  an optimal cell ordering exists both if the cell is on or off, and yields the same learning rate in either case. Therefore, the optimal ordering is determined by comparing clause learning rates. 
    
    Also note that any total ordering must respect some assignment. So we argue the optimality of $\sigma^*(\mathcal{A}^*)$ by comparing it to $\sigma^*(\mathcal{A}')$ for any non-maximal $\mathcal{A'}$. Let $S^*$ be the number of satisfied clauses under $\mathcal{A}^*$ and $S' < S^*$ that under $\mathcal{A'}$. By \Cref{lemma:bayesian_clause}, for any clause satisfied under an arbitrary assignment $\mathcal{A}$, the corresponding clause gadget under an ordering $\sigma \sim \mathcal{A}$ achieves learning rate lower bounded by $\pone$. Further, by \Cref{lemma:bayesian_clause}, $\pone >\pzero$,  so having more satisfied clauses in an assignment can never decrease the CLR under the induced ordering. However, note also that $\pthree >\pone$  (by \Cref{lemma:bayesian_clause}). Hence, in the most extreme case, all $S'$ satisfied clauses under $\mathcal{A}'$ are satisfied with three true literals, while all $S^*$ satisfied clauses under $\mathcal{A}^*$ are satisfied with one true literal. 

    Consider that extreme case, and further impose the worst-case choices of $S'$ and $S^*$ by setting $S^* = M$ and $S' = M-1$. Then over all clause gadgets, $\sigma^*(\mathcal{A}^*)$ gives a CLR of $M \pone$, $\sigma^*(\mathcal{A}')$ gives a CLR of $(M-1) \pthree + \pzero$. We can now solve for conditions on $p$ such that $\sigma^*(\mathcal{A}^*)$ achieves a higher network learning rate:
    \begin{align*}
        M \pone &> (M-1) \pthree + \pzero\\
        \tfrac{\pone - \pzero}{\pthree - \pone} &> M-1.
    \end{align*}
    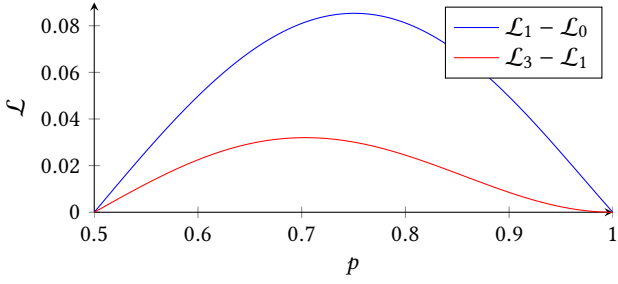
\begin{figure}[t!]
	\centering
\begin{tikzpicture}
	\begin{axis}[
    xlabel={$ p $},
    ylabel={$ \clr $},
    axis lines=left,
    ymax=0.09,
    scaled ticks=false,
    yticklabel style={
	/pgf/number format/precision=3,
	/pgf/number format/fixed,
    }
]

\addplot [
    domain=1/2:1, 
    samples=\figSamples,
    color=blue,
    smooth
]
{-(1/2)*x*(1+x-12*x^2+10*x^3+10*x^4-14*x^5+4*x^6)};
\addlegendentry{\(\pone-\pzero\)}

\addplot [
    domain=1/2:1, 
    samples=\figSamples,
    color=red,
    smooth
]
{-(1/4)*(-1+x)^2*x^2*(4+5*x-28*x^2-14*x^3+35*x^4+14*x^5-28*x^6+8*x^7)};
\addlegendentry{\(\pthree-\pone\)}

\end{axis}
\end{tikzpicture}
\caption{The plot of the numerator and the denominator of the condition on $ M-1 $ from \Cref{lemma:bayesian_clause}.
The denominator approaches zero noticably faster, suggesting the fraction approaches infinity as $ p $ goes to 1.}
\label{fig:pdff}
\Description{The difference $\pone-\pzero$ is always strictly larger than $\pthree-\pone$. As $p$ goes to 1, their ratio grows.}
\end{figure}

    The numerator and denominator is plotted in \Cref{fig:pdff} as a function of $ p $.
    Since the left-hand side can be lower bounded by $\frac{1}{3-3p}$ for all $p \in (0.5,1)$, it is sufficient to have $ \frac 1{3-3p} \geq M-1 $.
    We can thus define $ p(M) \deq \frac{3M-4}{3M-3} $, which respects the condition. Note that $p(M)$ is well-defined for any $ M \geq 2 $, and lies in the interval $ (\frac 12, 1) $.
\end{proof}

To recap, given any instance $\varphi$ of \sat{} with $N$ variables and $M$ clauses, we can construct a network $G_\varphi$ with ground truth prior $q = 1/2$ and signal accuracy $p = p(M)$. In particular, whenever $\varphi$ is satisfiable under some maximal assignment $\mathcal{A}^*$, there is an optimal decision ordering $\sigma^*$ which respects the ordering induced by $\mathcal{A}^*$, and which achieves a network CLR of at least $N(2p+3p^2-2p^3) + M\pone$. Otherwise, if $\varphi$ is non-satisfiable under any maximal assignment $\mathcal{A}^*$, then the optimal decision ordering $\sigma^*$ respecting the ordering induced by $\mathcal{A}^*$ achieves a network CLR of no more than $N(2p+3p^2-2p^3) + (M-1)\pthree + \pzero$. Choosing $p = p(M)$ allowed us to show the optimality of $\sigma^*$, as well as to separate the learning rates in networks corresponding to satisfiable and non-satisfiable formulas. All that remains to complete the reduction is to pick an appropriate choice of $\varepsilon$, such that a formula graph $G_\varphi$ achieves expected network learning rate greater than $\varepsilon$ under an optimal ordering iff $\varphi$ is satisfiable.

\subsection{Picking the Epsilon}\label{sec:bayes_epsilon}
We will simply pick $\varepsilon$ to lie exactly halfway between the satisfiable and non-satisfiable network learning rates. Recalling that each variable and clause gadget contains $3$ vertices, we can compute the learning rates below. For networks corresponding to satisfiable formulas, we have $\frac{N\pcell + M\pone}{3(N+M)},$ and for networks corresponding to non-satisfiable formulas, we have $\frac{N\pcell + (M-1)\pthree + \pzero}{3(N+M)}.$ This gives 
\begin{align*}
    \varepsilon = \frac{1}{2}\left(\frac{2N\pcell + M\pone + (M-1)\pthree + \pzero}{3(N+M)}\right),
\end{align*}
thus completing the reduction. 

\subsection{Generalization for a constant range of p}

Finally, we remark that, while this construction proves the Theorem, it uses $ p $ approaching 1 as the size of $ \varphi $ increases.
A natural question, then, is whether it is hard to compute the maximum learning rate for other values of $p$, especially when $ p $ is a fixed constant.
In fact, \Cref{sec:approx} shows a stronger version of \Cref{thm:nphardness_bayes}, stated as follows. 

\begin{theorem}[ ]
	\netlearn{} with Bayesian learning rule $\mu = \mu^B$ and a fixed $ p \in ( \sqrt{7/8},1 ) $ is \np-hard.
\end{theorem}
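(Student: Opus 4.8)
The idea is to replace the instance-dependent choice $p=p(M)$ from the proof of \Cref{thm:nphardness_bayes} by a reduction from the \emph{gap} version of \maxsat{}, using H\aa stad's inapproximability theorem~\cite{Hastad2001-fg}: for every constant $\delta>0$ it is \np-hard to distinguish a satisfiable 3-CNF formula $\varphi$ (each clause a disjunction of exactly three literals) from one in which no assignment satisfies more than a $(\tfrac78+\delta)$-fraction of its $M$ clauses. We may assume each clause uses three distinct variables and is not tautological, since a tautological clause is always satisfied and may be dropped without decreasing the number of simultaneously satisfiable clauses; this matches the structural restrictions of \Cref{def:bayesian_graph}. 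Fixing the target accuracy $p\in(\sqrt{7/8},1)$, I would keep the formula graph $G_\varphi$ and the prior $q=\tfrac12$ unchanged, relying on the fact that \Cref{lemma:bayesian_cellLearningRate} and \Cref{lemma:bayesian_clause} hold for \emph{every} $p\in(\tfrac12,1)$, so $\pcell,\pzero,\pone,\ptwo,\pthree$ and the monotonicity $\pzero\le\pone\le\ptwo\le\pthree$ are all still available; only the optimality argument of \Cref{lemma:bayes_bestOrder}---which is where $p\to1$ was used---gets replaced.

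The heart of the proof is a pair of bounds on $\oclr(G_\varphi,\mu^B)$ at this fixed $p$. If $\varphi$ is satisfiable, the ordering that puts all dummy nodes first, sets the cells according to a satisfying assignment, and orders each clause gadget optimally internally gives every cell rate exactly $\pcell$ and every clause gadget at least one ``on'' literal, hence rate at least $\pone$; thus $\oclr(G_\varphi,\mu^B)\ge N\pcell+M\pone$. Conversely, any total ordering respects some assignment $\mathcal A$ (its cell states determine $\mathcal A$), so by \Cref{lemma:bayesian_cellLearningRate} and \Cref{lemma:bayesian_clause}, together with the fact that internal clause-gadget orderings do not affect the rest of the graph, every cell contributes at most $\pcell$, every clause gadget satisfied by $\mathcal A$ at most $\pthree$, and every clause gadget unsatisfied by $\mathcal A$ at most $\pzero$. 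In a ``no'' instance at most $(\tfrac78+\delta)M$ clauses are satisfiable by any assignment, so $\oclr(G_\varphi,\mu^B)\le N\pcell+(\tfrac78+\delta)M\,\pthree+(\tfrac18-\delta)M\,\pzero$. The two cases are separated precisely when
\[
    \pone-\pzero \;>\; \bigl(\tfrac78+\delta\bigr)(\pthree-\pzero);
\]
this is a strict polynomial inequality at $\delta=0$, so it suffices to prove $\tfrac{\pone-\pzero}{\pthree-\pzero}>\tfrac78$ (equivalently $8\pone>7\pthree+\pzero$) and then fix $\delta>0$ small, depending only on $p$.

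With the separation in hand, I would finish as in \Cref{sec:bayes_epsilon}: the reduction maps $\varphi$ to $G_\varphi$ (with this $p$ and $q=\tfrac12$) and to the threshold $\varepsilon=\varepsilon(\varphi)$ for which $1-\varepsilon$ is the midpoint of $\tfrac{N\pcell+M\pone}{3(N+M)}$ and $\tfrac{N\pcell+(\frac78+\delta)M\pthree+(\frac18-\delta)M\pzero}{3(N+M)}$. This $\varepsilon$ lies in $(0,1)$ and is computable in polynomial time (the $\clr_i$ are fixed polynomials evaluated at the fixed $p$), and by the two bounds above $\olr(G_\varphi,\mu^B)\ge1-\varepsilon$ holds if and only if $\varphi$ is satisfiable. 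Note that, unlike in \Cref{sec:bayes}, neither $G_\varphi$ nor $\varepsilon$ depends on $M$ through $p$; only the proof of ordering-optimality changes, and it is there that H\aa stad's $\tfrac78$ gap does the job the old $p\to1$ limit used to do.

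The main obstacle is the polynomial inequality behind the admissible range: substituting $p'=\tfrac p2+\tfrac32p^2-p^3$ into the closed forms of $\pzero,\pone,\pthree$ from \Cref{lemma:bayesian_clause} and clearing denominators, the separation condition $8\pone>7\pthree+\pzero$ becomes an explicit univariate polynomial inequality in $p$, which one must verify holds throughout $(\sqrt{7/8},1)$---by isolating the relevant roots and checking signs of the cofactors, in the spirit of the Mathematica-assisted computations already used in \Cref{sec:bayes}. (If one additionally wants the threshold $\sqrt{7/8}$ to be sharp rather than merely sufficient, this may call for a minor modification of the clause gadget, along the lines of the tweaked construction behind the approximation-hardness result of \Cref{sec:approx}.) A secondary, minor point is confirming that H\aa stad's hardness applies to formulas meeting the structural restrictions of the construction, which the tautology-deletion remark above handles.
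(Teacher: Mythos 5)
Your proposal is correct and lands on the same essential reduction as the paper, but it is packaged differently: the paper obtains this theorem as a direct corollary of the \apx{}-hardness result (\Cref{thm:approx}), which is a PTAS reduction to \maxsat{} invoking H\aa stad's $\tfrac78$ bound, whereas you run a direct gap reduction from Gap-\maxsat{}. The underlying mathematics is identical in both cases---the same formula graph $G_\varphi$ with $q=\tfrac12$, the same upper and lower bounds on $\oclr$ in terms of satisfied clauses, and the same decisive polynomial inequality $\tfrac{\pone-\pzero}{\pthree-\pzero}>\tfrac78$ (equivalently $8\pone>7\pthree+\pzero$) on the range $p\in(\sqrt{7/8},1)$. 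The paper establishes that range by lower-bounding $\tfrac{\pone-\pzero}{\pthree-\pzero}$ by $p^2$ on $(\tfrac12,1)$; your plan to verify $8\pone>7\pthree+\pzero$ directly by univariate root-isolation is equivalent. Your version buys a marginally tighter argument: by calling the gap form of H\aa stad directly you bypass the intermediate \apx{}-hardness statement and do not need the paper's ``WLOG $N\le M$ by clause duplication'' step, since the $N\pcell$ term cancels when comparing yes and no instances. One minor inaccuracy in a parenthetical: you refer to ``the tweaked construction behind the approximation-hardness result of \Cref{sec:approx},'' but \Cref{sec:approx} reuses the clause gadget of \Cref{sec:bayes} unchanged; only $p$ and the gap accounting differ there. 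This does not affect the validity of your main argument.
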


This follows as a direct corollary of \Cref{thm:approx}.

\section{Proof of \np-hardness for Majority Dynamics}
\label{sec:maj}

In this section, we adapt the statement of \Cref{thm:nphardness_bayes} to the Majority dynamics setting.

\begin{theorem}[ ]\label{thm:nphardness_maj}
	\netlearn{} with the majority vote rule $\mu = \mu^M$ is \np-hard.
\end{theorem}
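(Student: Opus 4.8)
The plan is to replay the proof of \Cref{thm:nphardness_bayes} with a reduction from \sat{}, but with variable cells and clause gadgets redesigned for the rule $\mu^M$. The high-level architecture is unchanged: from a 3-CNF formula $\varphi$ with $N$ variables and $M$ clauses we build a formula graph as a disjoint union of one cell per variable (with no incoming edges from outside the cell) and one gadget per clause (sharing no vertices with other gadgets), with edges running from the literal nodes of each cell into the corresponding nodes of each clause gadget. We fix $q=\tfrac12$, set $p$ to a threshold $p_M\to 1$ depending on $M$, and pick $\varepsilon$ halfway between the satisfiable and non-satisfiable learning rates, exactly as in \Cref{sec:bayes_epsilon}. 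The bijection between assignments and cell states (``on''/``off''), the notion of an induced ordering, and the ordering-respects-assignment relation all carry over verbatim; the goal is again to show that orderings respecting a maximal assignment are precisely the optimal ones and that $\olr(\network,\mu^M)\ge 1-\varepsilon$ iff $\varphi$ is satisfiable.

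The genuinely new ingredient is the gadget design. Under $\mu^M$ a node seeing only its private signal and a single in-neighbor simply tiebreaks to its private signal, so the original Bayesian cell is degenerate for majority and must be thickened: I would add a suitable fixed number of auxiliary dummy nodes to each cell --- nodes with no incoming edges, hence placed first in any optimal ordering --- so that the literal node which is ``on'' aggregates strictly more correct-in-expectation inputs than the one which is ``off''. By symmetry under the relabeling $x\leftrightarrow\lnot x$, the cell attains the \emph{same} learning rate whether it is on or off, and that value is optimal; this is the majority analogue of \Cref{lemma:bayesian_cellLearningRate}, and it again pins down a fixed learning rate for each literal node depending only on whether it is on or off. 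The exact number of dummies and the edge pattern are dictated by the requirement below that the clause learning rate be monotone in the number of ``on'' literals; one must also be careful that the chosen count genuinely helps, since under majority an extra correct-with-probability-$p$ signal can create a losing tie, so the parity of each node's in-degree matters.

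Next I would prove the majority analogue of \Cref{lemma:bayesian_clause}: under an optimal ordering, if exactly $i$ of a clause gadget's three literals are on then the gadget has learning rate $\clr_i$, and $\clr_0\le\clr_1\le\clr_2\le\clr_3$ on $(\tfrac12,1)$ --- in fact only $\clr_0<\clr_1$ and $\clr_1<\clr_3$ are needed downstream. The computation is the same finite case analysis as in the Bayesian proof: enumerate the orderings of the three clause nodes, condition on their private signals and on the (known-distribution) announcements of the feeding literal nodes, take expectations, and obtain explicit polynomials in $p$ whose ordering is a routine polynomial inequality check, deferred to \appmaj. The subtlety specific to majority is that $\mu^M$ lacks the Bayesian ``a node does at least as well as copying its best earlier in-neighbor'' monotonicity used freely in \Cref{sec:bayes}, so I must separately justify that some optimal ordering still places every cell node before every clause node; this follows because the cell-to-clause edges are fixed in $G_\varphi$ and the gadget is built so that an ``on'' literal reaching its clause node --- and, more generally, a clause node seeing more of these feeding signals --- only raises the gadget's learning rate.

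With the two gadget lemmas in hand the rest is mechanical, following \Cref{lemma:bayes_bestOrder}: in the worst case a non-maximal assignment satisfies $M-1$ clauses each with three ``on'' literals while a maximal one satisfies $M$ clauses each with a single ``on'' literal, so an ordering respecting a maximal assignment wins as soon as $(\clr_1-\clr_0)/(\clr_3-\clr_1)>M-1$; since this ratio tends to $\infty$ as $p\to1$, a threshold $p_M\in(\tfrac12,1)$ with $p_M\to1$ exists, and for $p=p_M$ the optimal orderings are exactly those respecting a maximal assignment. Choosing $\varepsilon$ as the midpoint of the resulting satisfiable and non-satisfiable learning rates then makes $\olr(\network,\mu^M)\ge 1-\varepsilon$ equivalent to $\varphi$ being satisfiable, establishing \np-hardness. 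The main obstacle is the gadget redesign in the second and third steps: finding cell and clause gadgets that are simultaneously symmetric (so ``on'' and ``off'' cost the same), monotone (so more ``on'' literals means a higher clause learning rate), and ``front-loadable'' (so ordering cells before clauses is optimal), all without the Bayesian monotonicity that made these properties automatic; everything afterward is recomputation of the learning-rate polynomials and re-derivation of the $p_M$ threshold and $\varepsilon$.
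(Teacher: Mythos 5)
Your overall skeleton is right, but you misidentify where the construction must change, and the reason you give for modifying the cell is based on a false premise.

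You claim the Bayesian variable cell is degenerate under $\mu^M$ because ``a node seeing only its private signal and a single in-neighbor simply tiebreaks to its private signal,'' and therefore the cell must be thickened with extra dummy nodes. This is incorrect. Take an off cell with ordering $d_x, x, \lnot x$. It is true that $x$ tiebreaks to $s_x$ and so $a_x = s_x$ with learning rate $p$; but this is precisely the desired behavior for the off literal. The crucial point is the \emph{on} literal node $\lnot x$: it sees $s_{\lnot x}$, $a_{d_x}$, and $a_x = s_x$, and these are three mutually \emph{independent} signals (independence holds exactly \emph{because} the majority tiebreak made $a_x$ ignore $a_{d_x}$). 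So $\lnot x$ follows a 3-way majority of i.i.d.\ signals and has rate $3p^2 - 2p^3 > p$. The cell already distinguishes on from off and already attains the same CLR $2p + 3p^2 - 2p^3$ under either state; the paper keeps \Cref{def:cell} verbatim (see \Cref{lemma:maj_MD_cellLearningRate}). In fact the tiebreaking property you were worried about is what makes the majority cell computation \emph{easier} than the Bayesian one.

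The paper's actual modification is to the \emph{clause gadget}, not the cell: it adds two dummy nodes $d^1, d^2$ with edges into all three literal nodes. The reason is the one the paper states in \Cref{ssec:maj_graph}: with the original three-node gadget, the worst-case satisfied clause's learning rate drops \emph{below} the best-case unsatisfied one, so no threshold $\varepsilon$ can separate satisfiable from unsatisfiable, even as $p\to 1$. Your proposal does vaguely say ``clause gadgets redesigned,'' but all your concrete reasoning is aimed at the cell, which needs no change, while the gadget that actually breaks is left unspecified. Finally, your claim that ``a clause node seeing more of these feeding signals only raises the gadget's learning rate'' is not a free step under majority — you yourself observe two sentences earlier that an extra correct-with-probability-$p$ input can create a losing tie. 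The paper handles the ``cells before gadgets'' and ``dummies first'' ordering claims, and the monotonicity $\pzero \le \pone \le \ptwo \le \pthree$, by explicit polynomial computation (\Cref{lemma:maj_100CLR}, \Cref{lemma:CLRComp}); in the majority model this genuinely cannot be waved through by appeal to ``more information is better.''
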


The proof is again a reduction from \sat.
The main idea is identical to that of \Cref{thm:nphardness_bayes}, so we offer here only the main points and construction, with emphasis on differences from the proof of \Cref{thm:nphardness_bayes}.
We offer the full proof in \appmaj.

\subsection{Adapted Graph Construction}
\label{ssec:maj_graph}

Unfortunately, if we wanted to directly apply the previous construction, we would find that the worst-case satisfied ordering reaches a \emph{lower} learning rate than the best-case unsatisfied ordering (as discussed in \Cref{ssec:assignment_bayes}).
Then there is no $ \varepsilon $ which separates the learning rates corresponding to satisfied and unsatisfied assignments.
We thus adapt the construction from \Cref{ssec:graph}, modifying the clause gadget.
We keep the variable cell unchanged.

\variableCell*

Intuitively, we need to give the nodes in the clause gadgets more input, so they are better equipped to use the input from the ``on'' cells.
To achieve this, we add two dummy nodes to the clause gadget.

\begin{definition}[Clause gadget]
    Let $ C = j \lor k \lor \ell $ be a clause of a 3-CNF formula $ \varphi $, where $ j \neq k \neq \ell $ are some literals.
    Then the \emph{clause gadget} is $ \gadget C = \left( V_C, E_C \right) $, where \begin{enumerate}[ ]
    	\item $ V_C = \left\{ j,k,\ell, d^1, d^2 \right\} $,
    	\item $ E_C = \{ ( x,y ) \mid x,y \in \{ j,k,\ell \}, x \neq y \} \cup \{ ( d^i, x ) \mid i \in \{ 1,2 \} , x \in \{ j,k,\ell \} \} $.
    \end{enumerate}
\end{definition}

We now define the formula graph using the same definition, only with the new clause gadget.

\bayesianGraph*

See \Cref{fig:gphi} for an illustration of this construction for the simple formula $ \varphi = x \lor y \lor z $.
Note that we also use the ``on''/``off'' states of a variable/literal, as defined for the Bayesian proof (see \Cref{par:ordering_literals}).

\begin{figure}[t!]
	\centering
	\begin{tikzpicture}[
		xsh/.style = { yshift=-9mm },
		xxsh/.style = { yshift=-8mm },
		dcs/.style = { yshift=-3mm,xshift=-6mm },
		node distance = 5mm and 5mm,
	stff/.style={circle, draw=black, \figThickness, minimum size=\figCircSize, inner sep=0pt},
	]
		\node[stff]        (x_i)                  {$ x $};
		\node[stff]        (nx_i)   [left=of x_i]   {$ \lnot x $};
		\node[stff]        (d_i)   [above=of x_i,dcs]   {$ d_x $};

		\node[stff]        (x_j)    [left=of nx_i]  {$ y $};
		\node[stff]        (nx_j)   [left=of x_j]   {$ \lnot y $};
		\node[stff]        (d_j)   [above=of x_j,dcs]   {$ d_y $};

		\node[stff]        (x_k)    [left=of nx_j]  {$ z $};
		\node[stff]        (nx_k)   [left=of x_k]   {$ \lnot z $};
		\node[stff]        (d_k)   [above=of x_k,dcs]   {$ d_z $};

		\node[stff]        (l_m^1)    [below=of x_i,xsh]  {$ x_C $};
		\node[stff]        (l_m^2)    [below=of x_j]  {$ y_C $};
		\node[stff]        (l_m^3)    [below=of nx_k,xsh]  {$ \lnot z_C $};
		\node[stff]        (d1)    [below=of l_m^2, xxsh,xshift=20]  {$ d_C^1 $};
		\node[stff]        (d2)    [below=of l_m^2, xxsh,xshift=-40]  {$ d_C^2 $};

		\draw[->, \figThickness] (d_i)  to  (x_i);
		\draw[->, \figThickness] (d_i)  to (nx_i);
		\draw[<->, \figThickness] (x_i)  to (nx_i);

		\draw[->, \figThickness] (d_j)  to  (x_j);
		\draw[->, \figThickness] (d_j)  to (nx_j);
		\draw[<->, \figThickness] (x_j)  to (nx_j);

		\draw[->, \figThickness] (d_k)  to  (x_k);
		\draw[->, \figThickness] (d_k)  to (nx_k);
		\draw[<->, \figThickness] (x_k)  to (nx_k);

		\draw[->, \figThickness] (x_i)  to (l_m^1);
		\draw[->, \figThickness] (x_j)  to (l_m^2);
		\draw[->, \figThickness] (nx_k)  to (l_m^3);

		\draw[<-, \figThickness] (l_m^1)  to  (d1);
		\draw[<-, \figThickness] (l_m^2)  to  (d1);
		\draw[<-, \figThickness] (l_m^3)  to  (d1);
		\draw[<-, \figThickness] (l_m^1)  to  (d2);
		\draw[<-, \figThickness] (l_m^2)  to  (d2);
		\draw[<-, \figThickness] (l_m^3)  to  (d2);
		\draw[<->, \figThickness] (l_m^1)  to (l_m^2);
		\draw[<->, \figThickness] (l_m^1)  to (l_m^3);
		\draw[<->, \figThickness] (l_m^2)  to (l_m^3);
	\end{tikzpicture}
	\caption{The graph $ G_\varphi $ for $ \varphi = C = \left( x \lor y \lor \lnot z \right)$.}
 \Description{A construction for the Majority vote reduction, as described in \Cref{def:bayesian_graph} for a simple formula.}
	\label{fig:gphi}
\end{figure}

Next, we compute the learning rates of the variable cell, and of the clause gadgets, depending on whether its literals are on or off.

\begin{lemma}[Majority Dynamics Cell LR] \label{lemma:maj_MD_cellLearningRate}
    Let $x \in \vars$.
    Let $ q = \frac 12 $, and $ p $ be given.
    Then \[
	\oclr (\cell x) = 2p + 3p^2 -2p^3.
    \]
\end{lemma}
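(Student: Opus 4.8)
The plan is to enumerate the orderings of the three-vertex cell $\cell x$ directly and compute each node's success probability under the rule $\mu^M$, conditioning on $\theta = 1$ (which is without loss of generality since $q = \tfrac12$). The key simplification is that under $\mu^M$ an agent with a single input, or with two inputs that disagree, outputs its own private signal $s_v$: in the two-input case the rule either sees both inputs agree (so the majority is already $s_v$) or sees a tie (and tiebreaks to $s_v$). Hence a node whose input tuple is $(s_v)$ or $(s_v, a_u)$ always announces $s_v$ and is correct with probability exactly $p$.

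First I would record two trivial facts. Since $d_x$ has no incoming edges, $a_{d_x} = s_{d_x}$ in every ordering, so the node $d_x$ has learning rate $p$. And by the observation above, whichever of $x, \lnot x$ is first among the two also has learning rate $p$, because its inputs are either $(s)$ or $(s, a_{d_x})$. It then remains only to analyze the later of $x$ and $\lnot x$; say it is $\lnot x$, whose inputs are $s_{\lnot x}$, $a_x$, and possibly $a_{d_x}$.

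Next I would split on the position of $d_x$. If $d_x$ comes last, then $\lnot x$ has only the two inputs $s_{\lnot x}, a_x$ and is again correct with probability $p$, giving a cell total of $3p$. If instead $d_x$ precedes $\lnot x$, then $\lnot x$ has exactly the three inputs $s_{\lnot x}$, $a_x = s_x$, and $a_{d_x} = s_{d_x}$ (the identities holding by the observation above, since $x$ and $d_x$ each precede only $\lnot x$), which conditioned on $\theta$ are i.i.d.\ and each correct with probability $p$; the majority of three such bits is correct with probability $p^3 + 3p^2(1-p) = 3p^2 - 2p^3$, giving a cell total of $2p + (3p^2 - 2p^3)$.

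Finally I would compare the two candidate values: $(2p + 3p^2 - 2p^3) - 3p = -p(1-p)(1-2p) > 0$ for $p \in (\tfrac12, 1)$, so an optimal ordering is any one placing $d_x$ before the later of $x$ and $\lnot x$, and therefore $\oclr(\cell x) = 2p + 3p^2 - 2p^3$. I do not expect a real obstacle here; the only point requiring care is the tiebreaking convention of $\mu^M$, which is exactly what collapses the one- and two-input nodes to signal-echoers, and everything else is a finite case check together with one sign computation.
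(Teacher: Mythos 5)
Your proposal is correct and follows the same decomposition as the paper's proof: $d_x$ contributes $p$, the earlier of $x,\lnot x$ contributes $p$ (because a tied or single-input majority vote echoes the private signal), and the later contributes $3p^2-2p^3$ when it has all three i.i.d.\ inputs. The one genuine improvement in your write-up is that you explicitly compare the $d_x$-last ordering (cell total $3p$) against the $d_x$-not-last ordering (cell total $2p+3p^2-2p^3$) and verify the sign of the difference, whereas the paper justifies placing $d_x$ first by the informal claim that extra inputs ``can only increase'' a neighbor's accuracy---a monotonicity principle the paper itself notes does not hold for majority dynamics in general. Since the cell is tiny, either route gives the same answer, but your explicit case check is the more careful argument in this model.
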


\begin{lemma}[Majority Dynamics Gadget Learning Rate]\label{lemma:maj_100CLR} 
    Let $C$ be a clause.
    Suppose that $\sigma^*$ is an optimal learning rate.
    Then, in the gadget for $C$, $\sigma^*$ places the cells first, then the dummy nodes, and finally the three literal nodes.
    Further, \begin{enumerate}
        \item if one literal is on, then $\clr(\gadget C, \sigma^*) $ is
        \begin{align*}
                \pone \deq p &\left(2 + 2 p + 6 p^2 + 11 p^3 + 4 p^4 - 51 p^5 - 6 p^6  + 21 p^7 \right. \\
                               & \left. + 115 p^8 - 136 p^9 + 13 p^{10} + 36 p^{11} - 12 p^{12} \right).
            \end{align*}
        \item if one literal is on, then $\clr(\gadget C, \sigma^*)$ is \begin{align*}
                \ptwo \deq p & \left(12 p^8-54 p^7+76 p^6-14 p^5 \right. \\
                               & \left. -40 p^4+9 p^3+12 p^2+2 p+2\right).
            \end{align*}
        \item if one literal is on, then $\clr(\gadget C, \sigma^*) $ is \begin{align*}
                \pthree \deq p & \left(2 + 2 p + 3 p^2 + 14 p^3 + 22 p^4 - 66 p^5 - 69 p^6 + 310 p^7 \right.\\
                            & \left.- 688 p^8 + 710 p^9 + 756 p^{10} - 2581 p^{11} + 2304 p^{12} - 558 p^{13}\right.\\ 
                            & \left.- 372 p^{14} + 264 p^{15} - 48 p^{16} \right).
            \end{align*}
    \end{enumerate}
    Furthermore, for a clause $ D $ satisfied under $ \sigma^* $, it holds \[
	    \pthree \geq \clr(\gadget D, \sigma^*) \geq \pone \geq \pzero.
	\]
\end{lemma}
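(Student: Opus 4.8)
I would run the argument in three stages: (i) a structural reduction of the optimal ordering that cuts the problem down to a finite case analysis, (ii) the conditional probability computation producing the polynomials $\pzero,\pone,\ptwo,\pthree$, and (iii) verification of the chain $\pzero\le\pone\le\ptwo\le\pthree$ on $p\in(\tfrac12,1)$.

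\emph{Stage (i): structural reduction.} Exactly as in the Bayesian construction, the only edges leaving a variable cell go into clause gadgets, so there is an optimal ordering placing every cell node before every gadget node; fix such a $\sigma^*$. Inside a gadget $\gadget C$, the dummy nodes $d^1,d^2$ are sources — they have no incoming edges in $\gadget C$ and none are added by the formula graph — so each has learning rate exactly $p$ no matter where it sits, and moving a dummy earlier only enlarges the input bags of $j,k,\ell$. Since feeding one more i.i.d.\ vote of quality $p$ into a majority rule is weakly beneficial (which I would justify by the monotonicity/coupling argument below, or simply read off the explicit computation in Stage (ii) — e.g.\ an ``off'' literal node goes from learning rate $p$ with no dummies to $3p^2-2p^3>p$ with two), we may assume $\sigma^*$ places both dummies before all three literal nodes. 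Finally, the literal nodes $j,k,\ell$ have all their outgoing edges inside $\gadget C$, so permuting them changes no learning rate outside the gadget; hence $\sigma^*$ uses the internal order maximizing the gadget's contribution, and the claimed $\clr_i$ are maxima over the at most $3!$ internal permutations (fewer, by symmetry between equally labelled literals). This already establishes the first sentence of the lemma, and by the symmetry of $\gadget C$ in $j,k,\ell$ and the fact that all ``on'' literals feed in an identical-quality signal (and likewise all ``off'' ones), the gadget CLR depends only on the \emph{number} $i$ of ``on'' literals.

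\emph{Stage (ii): the per-case computation.} Condition on $\theta=1$ throughout, as in the text (legitimate since $q=\tfrac12$). After the reduction, each literal node $m\in\{j,k,\ell\}$ deterministically outputs the majority of the bits available to it, ties broken to $s_m$. Those bits come from the mutually independent collection: private signals $s_j,s_k,s_\ell,s_{d^1},s_{d^2}$, each $1$ (``correct'') with probability $p$; and cell actions $a_{x_j},a_{x_k},a_{x_\ell}$, where, by \Cref{lemma:maj_MD_cellLearningRate} and its proof, $a_{x_m}$ is correct with probability $p$ if literal $m$ is ``off'' and with probability $p''\deq 3p^2-2p^3$ if literal $m$ is ``on'' — independence of the three cell actions uses that a clause has three literals of three \emph{distinct} variables, and their independence from $s_j,\dots,s_{d^2}$ is clear. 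Since $a_{d^1}=s_{d^1}$ and $a_{d^2}=s_{d^2}$, the literal node first in the internal order sees a bag of size $4$ ($s_m,a_{x_m},a_{d^1},a_{d^2}$), the second sees size $5$, the third sees size $6$. For each $i\in\{0,1,2,3\}$ and each internal permutation I would sum $\Pr[a_j=1]+\Pr[a_k=1]+\Pr[a_\ell=1]$ over the $\le 2^6$ joint configurations of the relevant bits, getting a polynomial in $p$, and take the maximum over permutations. The case $i=1$ splits into three sub-cases according to whether the unique ``on'' literal is first, second, or third in the internal order (mirroring the $\pone$ case in \Cref{lemma:bayesian_clause}), and $\pone$ is the largest of the three.

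\emph{Stage (iii): the monotone chain.} A clause $D$ satisfied under $\sigma^*$ has at least one ``on'' literal, so $\clr(\gadget D,\sigma^*)\in\{\pone,\ptwo,\pthree\}$, and it suffices to prove $\pzero\le\pone\le\ptwo\le\pthree$ on $(\tfrac12,1)$. I would argue this at the level of the random process rather than by polynomial sign-chasing: passing from $i$ to $i+1$ ``on'' literals replaces one cell action of quality $p$ by one of quality $p''=3p^2-2p^3>p$, which can be coupled so the better signal is $1$ whenever the worse one is; since each literal node's majority-with-tiebreak output is monotone in each of its inputs, this coupling can only increase the correctness probability of the literal node directly fed by that cell, and — because every later literal node takes a majority that includes the earlier literal nodes' actions — it propagates to a weak increase of every term of the sum, for every internal order; taking maxima over internal orders gives $\clr_i\le\clr_{i+1}$ without ever touching the explicit polynomials. (Alternatively, one checks directly that $\pone-\pzero$, $\ptwo-\pone$, $\pthree-\ptwo$ have no root in $(\tfrac12,1)$ and are positive there, after factoring out the obvious roots at $p=\tfrac12$ and $p=1$ — the analogue of \Cref{fig:bayes_ps}.)

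\emph{Main obstacle.} The delicate part is Stage (ii): keeping straight, across all internal permutations and all values of $i$, which bit in a node's bag has quality $p$ versus $p''$, and handling correctly the even-sized ($4$- and $6$-element) bags where ties are broken to the private signal — a bare enumeration is error-prone, so I would organize both the derivation and the verification the way the authors did for the Bayesian gadget. The conceptually load-bearing point — the one I would take care to make fully rigorous rather than ``reading off the plot'' — is the process-level monotonicity/coupling, which simultaneously justifies front-loading the dummy nodes in Stage (i) and yields $\pzero\le\pone\le\ptwo\le\pthree$ in Stage (iii); once it is nailed down, the rest is lengthy but routine, and the full calculation belongs in \appmaj.
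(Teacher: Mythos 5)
Your Stage (iii) coupling argument for the chain $\pzero \le \pone \le \ptwo \le \pthree$ is sound and is a genuinely nicer route than the paper's, which just records the explicit polynomials (\Cref{lemma:100CLR,lemma:110CLR,lemma:000/111CLR}) and then invokes ``simple algebra'' in \Cref{lemma:CLRComp}. The key point you use is correct: with the set of inputs held fixed, swapping one independent cell-action bit from quality $p$ to quality $p''=3p^2-2p^3$ under a stochastic-domination coupling and appealing to monotonicity of majority-with-tiebreak (and its composition down the chain $a \to b \to c$) weakly increases every term of the gadget CLR, for every internal permutation; taking maxima over permutations gives $\clr_i \le \clr_{i+1}$. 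This avoids the sign-chasing and transfers cleanly.

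However, the claim that the same monotonicity/coupling argument ``simultaneously justifies front-loading the dummy nodes in Stage (i)'' is wrong, and this is a real gap. Front-loading dummies changes the \emph{size} of a literal node's input bag, not the quality of a fixed bit, and adding one more independent quality-$p$ vote to a majority rule is \emph{not} weakly beneficial in general. Concretely, for the first literal node $a$ with bag $(s_a,a_\alpha,a_{d^1},a_{d^2})$, a direct computation from the formulas in \Cref{lemma:MD_genOrderingCLR} shows the gain from adding the second dummy is $p(1-p)(p-p_\alpha)$, which is strictly negative when $\alpha$ is ``on'' ($p_\alpha=p''>p$): the weak extra vote dilutes a stronger majority and the tie-break advantage of $s_a$ is lost. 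So $\Pr[a]$ is unimodal, not monotone, in the number of dummies seen. The gadget CLR as a whole does improve (because $b$ and $c$ gain more than $a$ loses), but that is exactly what has to be checked by the explicit computation — it does not follow from the coupling. Your fallback (``read off the explicit computation'') is the right justification here, and it is also essentially the paper's, though the paper's phrasing ``otherwise the literal nodes receive strictly less information'' glosses over the same subtlety: with majority dynamics, more information does not automatically help. So the conceptual decomposition you propose is attractive for the $i \to i+1$ comparison, but Stage (i) is genuinely computational and should not be bundled under the monotonicity umbrella.

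Two minor points: (1) your per-node bag sizes ($4,5,6$) and independence claims are all correct and match the paper's setup in \Cref{lemma:MD_genOrderingCLR}; (2) the lemma statement in the paper repeats ``if one literal is on'' in all three items — items 2 and 3 should read ``two'' and ``three'' respectively, as your reading correctly assumes.
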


Finally, we can now determine the optimal ordering, compute its learning rate, and show that there is an $ \varepsilon $ such that the learning rate is above $ \varepsilon $ if and only if the induced ordering is satisfied.

\begin{lemma}[Optimal Ordering] \label{lemma:maj_bestOrder}
    Let $\mathcal{A}^*$ be a maximal assignment.  Let $p(M) < 1$ be a threshold probability determined by $M$, the number of clauses. Then for all $p \geq p(M)$, the decision ordering $\sigma^*$ which places all dummy nodes first, then all variable nodes respecting the partial ordering induced by $\mathcal{A}^*$, and finally all literal nodes in the clause gadgets, maximizes the network learning rate.
\end{lemma}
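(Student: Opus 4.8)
The plan is to follow the skeleton of the Bayesian \Cref{lemma:bayes_bestOrder}, substituting the majority-dynamics gadget rates of \Cref{lemma:maj_MD_cellLearningRate,lemma:maj_100CLR} for the Bayesian ones. First I would reduce the problem to a comparison over assignments only. By \Cref{lemma:maj_MD_cellLearningRate} the optimal CLR of a cell is $2p+3p^2-2p^3$, attained with $d_x$ before $x,\lnot x$; by the $x\leftrightarrow\lnot x$ symmetry of the cell this optimum is attained both in the ``on'' and ``off'' configuration, so the $N$ cells jointly contribute exactly $N(2p+3p^2-2p^3)$ under any assignment‑induced ordering. Since the only edges leaving a cell enter clause gadgets, some optimal ordering schedules all cell nodes before all clause‑gadget nodes, and within each gadget \Cref{lemma:maj_100CLR} forces the two dummies first and the three literal nodes last (with the on‑literal in the best of the three intra‑gadget positions). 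Hence the only remaining freedom is the induced assignment $\mathcal{A}$, and the network CLR is $N(2p+3p^2-2p^3)+\sum_{C\in\varphi}\clr(\gadget C,\sigma^*(\mathcal{A}))$; the ordering $\sigma^*$ described in the statement is precisely the optimal ordering respecting the maximal assignment $\mathcal{A}^*$.

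Next I would bound each gadget's contribution using \Cref{lemma:maj_100CLR}: a clause satisfied under $\mathcal{A}$ contributes between $\pone$ and $\pthree$ (the two‑on case $\ptwo$ lies in this range by the displayed chain), while an unsatisfied clause contributes exactly $\pzero<\pone$. Comparing $\sigma^*(\mathcal{A}^*)$ with $S^*$ satisfied clauses against $\sigma^*(\mathcal{A}')$ for an arbitrary non‑maximal $\mathcal{A}'$ with $S'<S^*$ satisfied clauses, the worst case for $\mathcal{A}^*$ is that all its satisfied clauses are counted at the minimum rate $\pone$ while all of $\mathcal{A}'$'s satisfied clauses are counted at the maximum rate $\pthree$; pushing $S^*=M$, $S'=M-1$, it suffices to show
\begin{equation*}
    M\,\pone \;>\; (M-1)\,\pthree + \pzero, \qquad\text{i.e.}\qquad \frac{\pone-\pzero}{\pthree-\pone} \;>\; M-1 .
\end{equation*}

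Finally I would verify this inequality for $p$ near $1$. All of $\pzero,\pone,\ptwo,\pthree$ equal the same value at $p=1$ (with perfect private signals every node in a $5$‑node gadget predicts correctly), so both $\pone-\pzero$ and $\pthree-\pone$ vanish at $p=1$; I would Taylor‑expand the explicit polynomials of \Cref{lemma:maj_100CLR} (and the appendix's $\pzero$) about $p=1$ and check that $\pone-\pzero$ vanishes to first order in $(1-p)$ while $\pthree-\pone$ vanishes to strictly higher order, so the left‑hand side is $\Omega\!\left(1/(1-p)\right)$ and eventually exceeds $M-1$. Solving the inequality — or, as in the Bayesian case, lower‑bounding the ratio by a simple rational function of $p$ — yields an explicit threshold $p(M)<1$, well defined and lying in $(\tfrac12,1)$ for all $M$ above a small constant, such that the displayed inequality holds for every $p\ge p(M)$. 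For such $p$, $\sigma^*$ strictly beats every ordering respecting a non‑maximal assignment and ties every ordering respecting a maximal one; since every total ordering respects some assignment and the within‑cell and within‑gadget parts of $\sigma^*$ are already optimal, $\sigma^*$ maximizes the network learning rate. The main obstacle is this last step: the gadget polynomials have degree up to $16$, so confirming the precise orders of vanishing of $\pone-\pzero$ and $\pthree-\pone$ at $p=1$ (and that the ratio stays positive on the relevant interval) is the delicate, computation‑heavy part, and is where a computer‑algebra verification is needed — exactly as was done for \Cref{lemma:bayesian_clause}.
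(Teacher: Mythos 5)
Your proposal is correct and follows essentially the same approach as the paper's proof: reduce to a comparison over assignment-induced orderings, bound satisfied clauses between $\pone$ and $\pthree$ and unsatisfied ones by $\pzero$, and require $M\pone > (M-1)\pthree + \pzero$, i.e.\ $(\pone-\pzero)/(\pthree-\pone) > M-1$. The only cosmetic difference is that you propose a Taylor expansion at $p=1$ to show the ratio grows like $\Omega(1/(1-p))$, whereas the paper exhibits the explicit rational lower bound $1/(6-6p)$ and sets $p(M) = (6M-7)/(6M-6)$; both are the same computer-algebra-verified estimate in slightly different clothing.
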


The proof of this Lemma is very similar that of \Cref{lemma:bayes_bestOrder}, and is included in \appmaj.

We now apply the same reasoning as in \Cref{sec:bayes_epsilon}.
If $ p = p(M) $, then the worst-case learning rate of a satisfying assignment is $\frac{N\pcell + M\pone}{3N + 5M},$ strictly higher than the best-case non-satisfying LR, $\frac{N\pcell + (M-1)\pthree + \pzero}{3N + 5M}$. We can thus define the threshold (mind the new number of vertices---$ G_\varphi $ now has 5 for each clause):
\begin{align*}
    \varepsilon = \frac{1}{2}\left(\frac{2N\pcell + M\pone + (M-1)\pthree + \pzero}{3N + 5M}\right).
\end{align*}
This concludes the proof.

\section{Approximating the Optimal LR}
\label{sec:approx}

In this section, we extend the ideas from the 3-SAT reduction 
to show that even approximating a solution to \netlearn{} is hard.
More precisely, we show hardness for the optimization version of the problem, defined in \Cref{sec:problem} as \netlearnopt{}.
We give a proof for the Bayesian learning rule; we believe for the proof for majority vote to be similar.

\begin{theorem}[ ]\label{thm:approx}
    \netlearnopt{} with the Bayesian inference $\mu = \mu^B$ is \apx-hard. 
\end{theorem}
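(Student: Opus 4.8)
The plan is to reduce from \maxsat{} (more precisely, from the gap version of \sat{} underlying H\r{a}stad's $\frac78$-inapproximability result), reusing the graph construction $G_\varphi$ from \Cref{ssec:graph} together with the gadget learning rates of \Cref{lemma:bayesian_cellLearningRate,lemma:bayesian_clause}. The key structural fact, already established, is that the optimal CLR of $G_\varphi$ decomposes as a fixed contribution $N\pcell$ from the variable cells plus a sum over the $M$ clause gadgets, where a clause gadget contributes $\pone$, $\ptwo$, or $\pthree$ according to whether $1$, $2$, or $3$ of its literals are ``on'' under the induced assignment, and $\pzero$ if the clause is unsatisfied; moreover $\pzero < \pone \le \ptwo \le \pthree$. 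Thus $\oclr(G_\varphi,\mu^B) = N\pcell + \sum_{i} \clr_{k_i}$ where $k_i \in \{0,1,2,3\}$ encodes the satisfaction status of clause $C_i$ under a \emph{maximal} assignment (by \Cref{lemma:bayes_bestOrder}, for $p \ge p(M)$ the optimal ordering respects a maximal assignment). First I would make this exact: show that $\oclr(G_\varphi,\mu^B)$ is an affine function of $M^*$, the maximum number of satisfiable clauses, sandwiched between $N\pcell + M^*\pone + (M-M^*)\pzero$ and $N\pcell + M^*\pthree + (M-M^*)\pzero$. Both endpoints are linear in $M^*$ with the \emph{same} positive slope on the ``all clauses packed at the same level'' extreme — but they differ, so a naive argument only gives a constant-factor gap after accounting for the additive $N\pcell$ term, which can dwarf $M$ if $\varphi$ has few clauses. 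To handle this I would pad: take many disjoint copies of $\varphi$ (or, cleaner, blow up $M$ relative to $N$ by replicating clauses), so that the clause contribution dominates and the additive cell term becomes a lower-order perturbation; then a $(1-\delta)$-approximation to $\olr$ for small enough $\delta$ would distinguish $M^* = M$ from $M^* \le \frac78 M$, contradicting $\np \ne \p$.

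The second ingredient is to decouple the construction from the shrinking window $p \in [p(M),1)$. In \Cref{sec:bayes} the choice $p = p(M) \to 1$ was needed only to guarantee that the optimal ordering respects a maximal assignment — i.e., that it never ``sacrifices'' a satisfied clause to over-satisfy another. For the \apx-hardness statement I would instead fix $p$ to be any constant in $(\sqrt{7/8},1)$ and argue directly: even if the optimal ordering does \emph{not} respect a maximal assignment, the total clause contribution is still monotone in the number of satisfied clauses because replacing an ``off'' literal of an unsatisfied clause by an ``on'' one only moves that gadget from $\pzero$ up to at least $\pone$, while (by a local-exchange argument on the cells, as in the proof of \Cref{lemma:bayes_bestOrder}) the net change elsewhere is controlled. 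More carefully: for fixed constant $p$, the ratios $\pone/\pzero$, $\pthree/\pone$ etc.\ are absolute constants, so $\oclr$ and the quantity ``$N\pcell$ plus $\pone$ times (number of satisfied clauses)'' agree up to a multiplicative constant; combined with the $\frac78$ gap and padding, this suffices for \apx-hardness. Concretely, I would show there exist constants $c_1 > c_2$ (depending only on the fixed $p$) such that a satisfiable $\varphi$ yields $\olr(G_\varphi) \ge c_1$ and a $\varphi$ with $M^* \le \frac78 M$ yields $\olr(G_\varphi) \le c_2$, after padding to make the cell term negligible; an $L$-reduction (in the sense of Papadimitriou–Yannakakis) from \maxsat{} then gives the result, and also — by taking the satisfiable-vs-$\frac78$ gap — reproves \Cref{thm:nphardness_bayes} with fixed $p \in (\sqrt{7/8},1)$, which is the promised corollary.

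The main obstacle is the additive cell term $N\pcell$. Unlike a clean \maxsat{} instance where the objective \emph{is} the number of satisfied clauses, here the learning rate is a weighted sum in which the $N$ cells always contribute a large fixed amount that is not under the optimizer's control and does not encode any satisfiability information. A $(1-\delta)$-approximation guarantee on the \emph{rate} (which is normalized by the total vertex count $3(N+M)$) is therefore only as useful as the ratio of the ``signal'' (the $\Theta(M)$ spread between $\pone$ and $\pthree$, times the number of clauses whose status changes) to the ``baseline'' $N\pcell + M\pzero$. The fix — replicating clauses so $M \gg N$, or equivalently using \sat{} instances where every variable appears in many clauses (which is exactly the regime of H\r{a}stad's hard instances, since bounded-occurrence \maxsat{} is still \apx-hard) — must be carried out so that it preserves both the gap and the fixed value of $p$; checking that $p = p(M)$ can be replaced by a single constant once $M/N$ is large enough, and that the optimal ordering still respects (or $\varepsilon$-approximately respects) a maximal assignment in that regime, is the technical heart of the argument. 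A secondary obstacle is verifying the local-exchange monotonicity at fixed constant $p$ without the $p \to 1$ slack; this amounts to re-examining the inequality $\frac{\pone - \pzero}{\pthree - \pone} > M - 1$ from \Cref{lemma:bayes_bestOrder} and replacing it, in the padded instance, by a per-clause argument that does not depend on $M$.
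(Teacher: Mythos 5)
Your high-level plan --- a PTAS reduction from \maxsat{} using the same $G_\varphi$, the gadget CLR bounds of \Cref{lemma:bayesian_clause}, Håstad's $\tfrac78$-hardness, and a fixed constant $p$ close to $1$ --- is the same as the paper's. But you introduce two complications that the paper's proof shows are unnecessary, and one of them has a potential gap.

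First, the additive baseline $N\pcell$. You worry that it can dwarf the $\Theta(M)$ spread among clause gadgets and propose padding (many disjoint copies, or blowing up $M/N$). This misses a cancellation: $N\pcell$ appears identically in the CLR of \emph{every} assignment, so it drops out of the comparison $\clr(\asg^*) - \clr(\asg')$. The only place the vertex count enters is in converting an additive slack $\varepsilon$ on the \emph{rate} to a slack $\varepsilon \cdot 3(N+M)$ on the \emph{cumulative} rate, and the paper controls that by the one-line ``WLOG $N \leq M$ by clause duplication,'' giving a slack of at most $6\varepsilon M$. No padding or $L$-reduction machinery is needed; a direct plug-in of $\pzero$, $\pone$, $\pthree$ and $M^* \geq \tfrac78 M$ yields $M' \geq M^*\bigl(\tfrac{\pone-\pzero}{\pthree-\pzero} - \tfrac{48\varepsilon}{7(\pthree-\pzero)}\bigr)$ for the number of clauses satisfied by the induced assignment $\asg'$.

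Second, you worry that with $p$ fixed (rather than $p = p(M) \to 1$) the optimal ordering might not respect a maximal assignment, and propose to restore monotonicity by a local-exchange argument on cells. This is both unnecessary and not clearly correct as stated: flipping a single cell changes the on/off status of every clause gadget containing a literal of that variable, some of which may drop from satisfied to unsatisfied, so ``the net change elsewhere is controlled'' is not an obvious per-clause claim. The paper sidesteps the whole issue. It only uses (a) $\olr(\network) \geq \lr(\asg^*)$, which holds trivially because $\sigma^*(\asg^*)$ is one valid ordering, and (b) any ordering $\sigma$ induces an assignment $\asg(\sigma)$ with $\lr(\asg(\sigma)) \geq \lr(\sigma)$. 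Whether the \emph{globally} optimal ordering respects a maximal assignment is irrelevant to the approximation argument. The constraint on $p$ then emerges naturally: requiring $\varepsilon > 0$ forces $\delta < \tfrac{\pone-\pzero}{\pthree-\pzero}$, which is lower-bounded by $p^2$, so $p \geq \sqrt\delta$ (hence $p > \sqrt{7/8}$ for $\delta = 7/8$) suffices --- it is a byproduct of the computation rather than a hypothesis you must defend separately.
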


\begin{proof}
    We perform a PTAS reduction to \maxsat, where again each clause has exactly three literals (also referred to as \problem{Max E3-SAT}), which is known to be \apx-hard.
    In particular, we show that an approximation scheme of $ \olr(\network) $ implies an approximation scheme of \maxsat{}.

    Given an instance of \maxsat, we construct the graph $ G_\varphi $ from \Cref{def:bayesian_graph}.
    Let $ \asg^* $ be a maximal assignment, defined in \Cref{def:max_assignment}, satisfying $ M^* $ clauses.
    To prove the theorem, it suffices to show that for every $ \delta \in \left( 0,1 \right) $, there is an $ \alpha \in \left( 0,1 \right) $ such that if an ordering $ \sigma $ achieves a LR which is $ \alpha  $-close to the optimum, then the induced assignment $ \asg(\sigma) $ satisfies $ \delta M^*$ clauses.
    Abusing notation, for an assignment $ \asg $ we denote $ \clr(\asg) \deq \clr(\network,\sigma^*(\asg)) $.
    
    Note that for any 3-CNF formula, assigning truth values independently and uniformly at random satisfies $\frac 7 8$ of the clauses in expectation, implying that $ M^* \geq \frac 78 M $~\cite{Hastad2001-fg}. 
    From \Cref{lemma:bayesian_clause}, it follows that unsatisfied clauses receive optimal CLR $ \pzero $, and satisfied clauses receive at least $ \pone $ and at most $ \pthree $.
    It then follows that \begin{align*}
         \clr(\asg^*)&\geq M^*\pone+(M-M^*)\pzero+N \pcell.
    \end{align*}
    Let $ \asg' $ be an assignment which achieves $ \lr(\asg') \geq \olr(\network) - \varepsilon \geq \lr(\asg^*)  - \varepsilon $ for some $ \varepsilon > 0 $ specified later.
    Multiplying by the number of vertices in $ G_\varphi $, we get \[
        \clr(\asg') \geq \clr(\asg^*) - \varepsilon (3M+3N) \geq \clr(\asg^*) - 6\varepsilon M,
    \]
    where the second inequality holds since WLOG $ N \leq M $, by duplication of clauses.
    Re-arranging, we have that the number of satisfied clauses under $\asg'$ is
    $$M'\geq \tfrac{\pone - \pzero}{\pthree - \pzero}M^*-\tfrac{6\eps M}{\pthree - \pzero} \geq M^*\left( \tfrac{\pone - \pzero}{\pthree - \pzero}-\tfrac{48\eps}{7(\pthree - \pzero)}\right), $$
    where the second inequality holds since $M^*\geq \frac 78 M$.
    If we now set the final expression equal to $ \delta M^* $, we get an equality in $ \varepsilon $ and $ p $.
    Solving for $ \varepsilon $, we get
    \begin{equation}\label{eq:eps}
        \eps=\tfrac{7}{48}( (\pone - \pzero)-\delta(\pthree - \pzero)).
    \end{equation}
    This is a function of $ p $ and $ \delta $. 
    We only require that this $\varepsilon > 0$.
    Using that $p \in (\frac 12,1)$, this gives us \[
    \delta < \tfrac{4 p^4-8 p^3-4 p^2+8 p+2}{4 p^8-16 p^7+13 p^6+17 p^5-12 p^4-23 p^3+5 p^2+12 p+2}.
    \]
    On $p \in (\frac 12, 1)$, the RHS can be strictly lower-bounded by $p^2$. It thus suffices to set $p \geq \sqrt \delta$.
    
    This proves that an approximation to \netlearnopt{} within an additive bound $ \varepsilon $ for any $ p \geq \sqrt \delta$ implies an approximation to \maxsat{} within a factor of $\delta$. 
    Converting $\varepsilon$ to a multiplicative bound, since $ \olr \in (0,1) $, \[
        \olr - \varepsilon = \left( 1- \tfrac{\varepsilon}{\olr} \right)\olr \leq \left( 1-\varepsilon \right) \olr.
    \]
    Thus we also have a multiplicative bound $ \alpha = (1-\varepsilon) $ for each $ \delta $, which is what we wanted to prove. 
\end{proof}

\begin{corollary}[ ]
    For any fixed constant $ p \in (\sqrt{ {7}/{8}}, 1) $, an $ \alpha $-approximation of \netlearnopt{} with Bayesian inference rule is \np-hard for every $ \alpha > \alpha(p) $, defined as \[
        \alpha(p) \deq 1-\tfrac{7}{48}( (\pone - \pzero)-\tfrac 78(\pthree - \pzero)).
    \]
\end{corollary}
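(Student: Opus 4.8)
The plan is to obtain this corollary from the PTAS reduction built in the proof of \Cref{thm:approx}, specialized to the H\aa{}stad threshold $\tfrac78$. Recall from that proof that, for a formula graph $G_\varphi$ with a \emph{fixed} signal accuracy $p$ and prior $q=\tfrac12$, an ordering $\sigma$ whose learning rate lies within an additive $\eps$ of $\olr(\network)$ induces an assignment $\asg(\sigma)$ satisfying at least $\delta(\eps)\,M^*$ clauses, where $\eps\mapsto\delta(\eps)$ is the affine, strictly decreasing map behind \eqref{eq:eps}, namely $\delta(\eps)=\tfrac{\pone-\pzero}{\pthree-\pzero}-\tfrac{48\eps}{7(\pthree-\pzero)}$. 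By \Cref{lemma:bayesian_clause} we have $\pone>\pzero$ and $\pthree>\pzero$, and for $p\in(\sqrt{7/8},1)$ the bound $\tfrac{\pone-\pzero}{\pthree-\pzero}>p^2>\tfrac78$ --- already established inside the proof of \Cref{thm:approx} --- shows that $\eps_0\deq\tfrac{7}{48}\bigl((\pone-\pzero)-\tfrac78(\pthree-\pzero)\bigr)=1-\alpha(p)$ is strictly positive and that $\delta(\eps_0)=\tfrac78$.

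So fix $p\in(\sqrt{7/8},1)$ and some $\alpha_0>\alpha(p)$; it suffices to treat $\alpha_0\in(\alpha(p),1)$, since an algorithm meeting a guarantee $\alpha_0\ge1$ would solve \netlearnopt{} exactly and hence also meet every guarantee $\alpha'<1$ ruled out below. Suppose for contradiction that \netlearnopt{} under $\mu^B$ at accuracy $p$ admits a polynomial-time $\alpha_0$-approximation. Given a \maxsat{} instance $\varphi$ --- where, exactly as in \Cref{thm:approx}, we may assume $N\le M$ after duplicating clauses, which changes neither $M^*/M$ nor the approximation ratio --- build $G_\varphi$ with $q=\tfrac12$ and accuracy $p$ and run the algorithm to obtain an ordering $\sigma'$ with $\lr(\network,\sigma',\mu^B)\ge\alpha_0\,\olr(\network)$. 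Since $\olr(\network)\in(0,1]$, the additive gap obeys $\olr(\network)-\lr(\network,\sigma',\mu^B)\le(1-\alpha_0)\,\olr(\network)\le 1-\alpha_0<1-\alpha(p)=\eps_0$. As $\delta(\cdot)$ is strictly decreasing with $\delta(\eps_0)=\tfrac78$, the value $\delta_0\deq\delta(1-\alpha_0)$ satisfies $\delta_0>\tfrac78$, and the computation of \Cref{thm:approx} then gives that $\asg(\sigma')$ satisfies at least $\delta_0 M^*>\tfrac78 M^*$ clauses. Hence the whole procedure is a polynomial-time $\delta_0$-approximation of \maxsat{} with $\delta_0>\tfrac78$, contradicting the inapproximability result of~\cite{Hastad2001-fg}; so no such algorithm exists unless $\p=\np$.

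The hypothesis $p>\sqrt{7/8}$ enters only to guarantee $\eps_0>0$ --- equivalently, that some $\delta_0>\tfrac78$ still lies below $\tfrac{\pone-\pzero}{\pthree-\pzero}$, so the reduction of \Cref{thm:approx} yields a strictly positive separation --- which is exactly the condition ``$p\ge\sqrt\delta$'' of \Cref{thm:approx} evaluated at $\delta=\tfrac78$. The one delicate point I expect is strictness: the inapproximability of \maxsat{} holds for every factor \emph{strictly} above $\tfrac78$ but not at $\tfrac78$, so the argument must deliver $\delta_0>\tfrac78$ rather than $\delta_0=\tfrac78$. The strict gap $\alpha_0>\alpha(p)$ supplies precisely the slack needed --- pushed through the affine correspondence $\eps\mapsto\delta(\eps)$ --- to move $\delta_0$ strictly past $\tfrac78$ while keeping $\delta_0<\tfrac{\pone-\pzero}{\pthree-\pzero}$, so that \Cref{thm:approx} still applies verbatim.
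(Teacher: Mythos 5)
Your argument is correct and follows the same route as the paper: both substitute the Håstad threshold $\delta=\tfrac78$ into the formula \eqref{eq:eps} from the proof of \Cref{thm:approx}, use $p>\sqrt{7/8}$ to guarantee $\eps_0=1-\alpha(p)>0$, and conclude via the inapproximability of \maxsat{}. The paper's own proof is a two-line appeal to \Cref{thm:approx}; you have merely unpacked the affine $\eps\leftrightarrow\delta$ correspondence, the multiplicative-to-additive conversion, and the strictness bookkeeping that the paper leaves implicit.
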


\begin{proof}
    This follows from \Cref{thm:approx}, since approximating \maxsat{} is \np-hard for $ \frac 78 + \xi $, where $ \xi > 0 $ \cite{Hastad2001-fg}.
    The condition of $ \alpha(p) $ is achieved by substituting $ \delta = \frac 78 $ to \Cref{eq:eps}.
\end{proof}

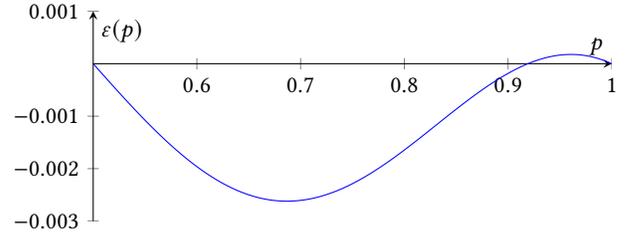
\begin{figure}[t!]
\centering
\begin{tikzpicture}
	\begin{axis}[
    xlabel={$ p $},
    ylabel={$ \varepsilon(p) $},
    legend pos=south east,
    ymax=0.001,
    ymin=-0.003,
    axis lines=center,
    scaled ticks=false,
    yticklabel style={
	/pgf/number format/precision=3,
	/pgf/number format/fixed,
    }
]

\addplot [
    domain=1/2:1, 
    samples=\figSamples,
    color=blue,
    smooth
]
 {(7*x*(-2+26*x+3*x^2-258*x^3+309*x^4+273*x^5-498*x^6-147*x^7+546*x^8-308*x^9+56*x^10))/1536};

\end{axis}
\end{tikzpicture}
\caption{The value of $ \varepsilon $ w.r.t. choice of $p$ below which (additive) approximation is hard (that is, $ \delta = \frac 78 $ \cite{Hastad2001-fg}). The requirement of $\varepsilon > 0$ gives us that $p \geq \sqrt{7/8}$.}
\Description{A plot of $\varepsilon$, initially zero, then negative, and in the interval $(\sqrt{7/8}, 1)$ slightly positive.}
\label{fig:jie_epsilon}
\end{figure}

\section{Conclusion}

In this paper, we tackle the \emph{complexity} of judging how well-equipped a given network is for social truth learning.
We consider this question in the setting of sequential decision-making by agents with bounded belief.
We then show that it is \np-hard to decide whether a large proportion of the network successfully learns a binary ground truth, both when agents are fully rational and when agents have bounded rationality. Finally, we show that it is actually hard to even approximate the learning rate of a fully rational network.

\subsection*{Future Work}

There are many remaining open directions for future work.
A natural question to ask is whether this problem belongs to the class \np, or not. Namely, it remains open whether there exists some efficiently verifiable characterization of networks which achieve high learning rates.
We conjecture that the answer is no, but we have so far been unable to prove it.

Yet another interesting direction is that of finding connections between \netlearn{} and other combinatorial problems.
\netlearn{} seems to be somewhat connected to finding big independent sets, which can allow for successful aggregation of information, along with big enough cliques for efficient spread of information. Clarifying the balance between these two contributing factors could improve our understanding of both network learning and its relation to the rest of combinatorics.

Finally, it would be fascinating to look for classes of networks for which \netlearn{} is easy.
For example, the problem of graph coloring is \np-hard in general, but easy for certain classes of graphs, such as perfect graphs. Similarly, it would be useful to develop characterizations of graphs which achieve high network learning rates and graphs whose learning rates remain bounded away from $1$. An alternative quantity of interest is the network learning rate under random, rather than optimal, decision orderings. Further, one can also consider robustness of different networks under networks with adversarial agents. 
We suspect that finding such characterizations and families of networks would give us a better understanding of the previously mentioned question of connecting this problem to the rest of the field of combinatorics.



\begin{acks}
  \Acknowledgments{}
\end{acks}



\bibliographystyle{ACM-Reference-Format} 
\bibliography{bibliography.bib}

\clearpage
\counterwithin{figure}{section}
\onecolumn
\appendix

\section{Bayesian General Learning Rates}
\label{app:bayesian_gadgets}

\begin{lemma}[ ]\label{lemma:twoInformedOneUninformed}
    Let $ \frac 12 < p < 1 $.
    Let $ p' = \frac p2 + \frac 32 p^2 - p^3 $.
    Then \[
        \left( \frac p{1-p} \right)^2 \frac {1-p'}{p'} > 1.
    \]
\end{lemma}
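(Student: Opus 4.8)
The plan is to clear denominators, reduce to a quartic polynomial inequality in $p$, and factor it. First I would rewrite $p' = \frac p2 + \frac 32 p^2 - p^3 = \frac p2\bigl(1 + 3p - 2p^2\bigr)$ and note that on $(\frac12, 1)$ both factors are strictly positive: the quadratic $1 + 3p - 2p^2$ vanishes at $\frac{3\pm\sqrt{17}}{4}$, both of which lie outside $(\frac12,1)$. Hence $p' > 0$, so $(1-p)^2 p' > 0$, and the claimed inequality $\bigl(\tfrac{p}{1-p}\bigr)^2 \tfrac{1-p'}{p'} > 1$ is equivalent to $p^2(1-p') > (1-p)^2 p'$, i.e.\ (collecting the $p'$ terms) to $p^2 > p'\bigl(p^2 + (1-p)^2\bigr) = p'(2p^2 - 2p + 1)$.

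Substituting $p' = \frac p2(1+3p-2p^2)$ and dividing by $p > 0$, this becomes $2p > (1+3p-2p^2)(2p^2 - 2p + 1)$. Expanding the right-hand side gives $-4p^4 + 10p^3 - 6p^2 + p + 1$, so the inequality is equivalent to
\[
    f(p) \deq 4p^4 - 10p^3 + 6p^2 + p - 1 > 0.
\]
The key observation is that $f(\tfrac12) = f(1) = 0$, so $f$ is divisible by $(2p-1)(p-1) = 2p^2 - 3p + 1$; carrying out the division yields
\[
    f(p) = (2p-1)(p-1)(2p^2 - 2p - 1).
\]

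It then suffices to check the sign of each factor on $(\frac12, 1)$: the first factor is positive, the second is negative, and the third is negative as well, since $2p^2 - 2p - 1$ has roots $\frac{1\pm\sqrt3}{2} \approx -0.37,\ 1.37$ and is negative strictly between them. A product of one positive and two negative numbers is positive, so $f(p) > 0$ on $(\frac12, 1)$, which is exactly what is needed. There is no genuine obstacle here; the computation is elementary, and the only points requiring (routine) care are the sign analyses of the auxiliary quadratics $1 + 3p - 2p^2$ and $2p^2 - 2p - 1$ and the polynomial expansion, which I would double-check with a computer algebra system as the authors do elsewhere.
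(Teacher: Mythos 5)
Your proof is correct, and I have verified each step: the rewriting $p' = \tfrac p2(1+3p-2p^2)$, the positivity of $1+3p-2p^2$ on $(\tfrac12,1)$, the equivalence $p^2(1-p') > (1-p)^2 p' \iff p^2 > p'(2p^2-2p+1)$, the expansion to $f(p) = 4p^4 - 10p^3 + 6p^2 + p - 1 > 0$, the factorization $f(p) = (2p-1)(p-1)(2p^2-2p-1)$, and the sign analysis of all three factors on $(\tfrac12,1)$. The paper's own ``proof'' of this lemma merely asserts that it follows by solving the inequality, with calculations verified in Mathematica; your argument is the same elementary route carried out explicitly. The key observation that makes the hand computation painless---that $p = \tfrac12$ and $p = 1$ are roots of $f$, so $(2p-1)(p-1)$ divides it---is a nice touch that reduces the problem to the trivial sign check of the residual quadratic $2p^2 - 2p - 1$, and is exactly the kind of human-readable argument one would hope accompanies the paper's computer-algebra verification.
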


\begin{proof}
    This Lemma is proven by solving the stated inequality.
    We check all our calculations using Wolfram Mathematica.
\end{proof}

\ourparagraph{General Clause Setup}
Consider a clause gadget with literal nodes $a$, $b$, and $c$.
WLOG, suppose the optimal ordering $\sigma^*$ orders $a$ before $b$ before $c$.
Let $\alpha$, $\beta$, and $\gamma$ be the literals corresponding to $ a $, $ b $, and $ c $, respectively.
We also use $\alpha$, $\beta$, and $\gamma$, to denote the corresponding nodes from their respective cells. (see \Cref{fig:bayesian_alphabetagamma} for details of the construction).
Suppose that $\alpha$, $\beta$, and $\gamma$ correctly predict the ground truth with probabilities $p_\alpha$, $p_\beta$, and $p_\gamma$, respectively.
Note that since the literal nodes have no outgoing edges except to each other, ordering all literal nodes after variable nodes gives the literal nodes more information at no cost to any other nodes.
So we assume $ \sigma^* $ orders all literal nodes after variable nodes, and also $a$ before $b$ before $c$. 

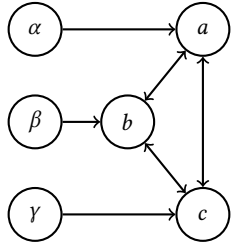
\begin{figure}[t!]
	\centering
	\begin{tikzpicture}[
		xsh/.style = { xshift=10mm },
		node distance = 5mm and 5mm,
	stff/.style={circle, draw=black, \figThickness, minimum size=\figCircSize, inner sep=0pt},
	]
		\node[stff]        (alpha)                  {$ \alpha $};
		\node[stff]        (beta)    [below=of alpha]  {$ \beta $};
		\node[stff]        (gamma)    [below=of beta]  {$ \gamma $};

		\node[stff]        (a)    [right=of alpha,xsh]  {$ a $};
		\node[stff]        (b)    [right=of beta]  {$ b $};
		\node[stff]        (c)    [right=of gamma,xsh]  {$ c $};

		\draw[->, \figThickness] (alpha)  to (a);
		\draw[->, \figThickness] (beta)  to (b);
		\draw[->, \figThickness] (gamma)  to (c);

		\draw[<->, \figThickness] (a)  to (c);
		\draw[<->, \figThickness] (b)  to (c);
		\draw[<->, \figThickness] (b)  to (a);
	\end{tikzpicture}
	\caption{Construction used in General Ordering Learning Rate proof. $\alpha, \beta$ and $\gamma$ are the vertices from the cell gadgets corresponding to the literals from $C$.}
 \Description{A graph of the Bayes gadget  with literals $a,b,c$, along with $alpha,beta,gamma$, from which edges go only to $a,b,c,$ respectively.}
     \label{fig:bayesian_alphabetagamma}
\end{figure}

\begin{observation}
    The predictions of $ \alpha $, $ \beta $, $ \gamma $ are independent.
\end{observation}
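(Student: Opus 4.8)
The claim, read in the context of this section, is that \emph{conditioned on the ground truth $\theta$}, the actions $a_\alpha$, $a_\beta$, $a_\gamma$ of the three cell nodes that feed into the gadget for $C$ are mutually independent. (Averaged over $\theta$ they are of course correlated, but $\theta$ is held fixed throughout these computations, so this conditional statement is what is needed.)

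First I would record a structural fact. Since $C = \alpha \lor \beta \lor \gamma$ is a clause of a 3-CNF formula with three distinct literals and no literal appearing alongside its negation, the literals $\alpha$, $\beta$, $\gamma$ belong to three \emph{distinct} variables, so the variable cells $\cell\alpha$, $\cell\beta$, $\cell\gamma$ are pairwise vertex-disjoint subgraphs of $G_\varphi$. Moreover, by \Cref{def:bayesian_graph} and the remark following it, the only edges entering a variable cell originate inside that same cell; hence in the sequential process no announcement made outside $\cell\alpha$ is ever an input to a node of $\cell\alpha$, and likewise for $\cell\beta$ and $\cell\gamma$.

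Next I would show that $a_\alpha$ depends only on data local to $\cell\alpha$. For the fixed graph $G_\varphi$ and fixed ordering $\sigma^*$, each node's rule ($\mu^B$ here) is a deterministic function of that node's inputs together with an independent $\berd{1/2}$ coin used for tie-breaking. Unrolling the process inside $\cell\alpha$: $d_\alpha$ acts on $s_{d_\alpha}$ alone, and each of the other two cell nodes acts on its own private signal and the already-announced actions of its in-neighbors inside $\cell\alpha$. By induction on position in the ordering, every announcement inside $\cell\alpha$ — in particular $a_\alpha$ — is a deterministic function of the three private signals of $\cell\alpha$ and the three tie-breaking coins of $\cell\alpha$. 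The same holds for $a_\beta$ and $a_\gamma$ with the signals and coins of $\cell\beta$ and $\cell\gamma$.

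Finally, conditioned on $\theta$, all private signals are mutually independent (by the definition of a social network), and all tie-breaking coins are mutually independent and independent of the signals. Since the vertex sets of $\cell\alpha$, $\cell\beta$, $\cell\gamma$ are pairwise disjoint, the three collections of inputs driving $a_\alpha$, $a_\beta$, $a_\gamma$ are (conditionally on $\theta$) mutually independent, and measurable functions of disjoint independent collections are independent; hence $a_\alpha$, $a_\beta$, $a_\gamma$ are mutually independent conditioned on $\theta$. There is no real obstacle here: the content is purely structural, resting on disjointness of the cells and the absence of inter-cell edges. The only points requiring care are (i) invoking the ``three distinct literals, no literal with its negation'' assumption so the cells are genuinely disjoint, and (ii) phrasing independence conditionally on $\theta$, since unconditionally the three predictions are correlated through $\theta$.
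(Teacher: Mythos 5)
Your proof is correct and follows essentially the same structural argument as the paper: since no edges enter a variable cell from outside (and the cells for $\alpha,\beta,\gamma$ are vertex-disjoint because the clause has three distinct literals), each prediction is a function of randomness local to its own cell, giving independence. You are somewhat more careful than the paper in two respects that are worth keeping in mind — spelling out that the relevant statement is independence \emph{conditional on} $\theta$ (the paper fixes $\theta=1$ for the whole section, so this is implicit), and making the "disjoint cells $\Rightarrow$ independent" step rigorous via the deterministic-function-of-disjoint-inputs argument — but the underlying idea is the same as the paper's.
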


\begin{proof}
    The nodes $ \alpha $, $ \beta $, $ \gamma $ are from their variable cells.
    From each cell, there are only outward edges.
    This means that nothing outside of a cell can affect any of its nodes.
    This includes the nodes of other cells.
    Hence, the predictions are independent.
\end{proof}

\begin{lemma}\label{lemma:bayesian_l1_LR}
    For $a$ the earliest literal node in its clause gadget, $\oclr(a) = p$ when its corresponding literal $\alpha$ is off, and $\oclr(a) = p' \deq \frac p2 + \frac 32 p^2 - p^3$ when $\alpha$ is on. 
\end{lemma}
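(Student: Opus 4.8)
The plan is a short, direct Bayesian computation, essentially a one-signal specialization of the argument proving \Cref{lemma:bayesian_cellLearningRate}. By the General Clause Setup we may assume that $\sigma^*$ orders every literal node after every variable node and that $a$ precedes $b$ and $c$ inside the gadget. The only in-edges of $a$ come from $b$ and $c$ (both later than $a$) and from its cell literal $\alpha$ (a variable node, hence earlier than $a$), so the inputs of $a$ are exactly $X_a = (s_a, a_\alpha)$. Since $a_\alpha$ is a function only of the private signals inside the cell $\cell\alpha$, which do not include $s_a$, the two inputs $s_a$ and $a_\alpha$ are conditionally independent given $\theta$. As in the rest of the section I would condition on $\theta = 1$; the symmetry of the construction with $q = \tfrac12$ makes this lossless.

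First I would record the reliability of the input $a_\alpha$. By \Cref{lemma:bayesian_cellLearningRate}, writing $p_\alpha$ for the probability that $a_\alpha = \theta$, we have $p_\alpha = p$ when $\alpha$ is off and $p_\alpha = p'$ when $\alpha$ is on, where $p' \deq \tfrac p2 + \tfrac 32 p^2 - p^3$; in both cases the law of $a_\alpha$ is symmetric, so $\Pr[a_\alpha = 1 \mid \theta = 0] = 1 - p_\alpha$. The posterior likelihood ratio of $a$ then factors as $\Lambda = \tfrac{\Pr[s_a\mid\theta=1]}{\Pr[s_a\mid\theta=0]}\cdot\tfrac{\Pr[a_\alpha\mid\theta=1]}{\Pr[a_\alpha\mid\theta=0]}$, which over the four values of $(s_a,a_\alpha)$ equals $\bigl(\tfrac{p}{1-p}\bigr)^{\pm 1}\bigl(\tfrac{p_\alpha}{1-p_\alpha}\bigr)^{\pm 1}$.

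In the off case $p_\alpha = p$, so the two factors are reciprocal: $\Lambda > 1$ on $(1,1)$, $\Lambda < 1$ on $(0,0)$, and $\Lambda = 1$ on the two mixed inputs, where $a$ tiebreaks fairly. Summing, $\oclr(a) = p^2 + \tfrac12\cdot 2p(1-p) = p$. In the on case $p_\alpha = p'$, and the one algebraic fact needed is that $p' > p$ on $(\tfrac12,1)$ --- equivalently $p' - p = -p(p-1)(p-\tfrac12) > 0$ --- together with $p' > \tfrac12$ there. Consequently $\Lambda \neq 1$ on all four inputs, and $\Lambda > 1$ exactly when $a_\alpha = 1$; that is, $a$ simply copies its more reliable input $a_\alpha$, giving $\oclr(a) = \Pr[a_\alpha = 1 \mid \theta = 1] = p'$.

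There is no real obstacle here; the only point requiring care is the tiebreak bookkeeping --- confirming that the mixed inputs give $\Lambda = 1$ exactly in the off case (immediate from $p_\alpha = p$) and $\Lambda \neq 1$ in the on case (from $p' > p$ and $p' > \tfrac12$ on $(\tfrac12,1)$). These are elementary polynomial inequalities, which I would dispatch exactly as the analogous checks elsewhere in this section.
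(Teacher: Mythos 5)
Your proof is correct and follows essentially the same route as the paper: condition on $\theta = 1$, factor the likelihood ratio using the conditional independence of $s_a$ and $a_\alpha$, plug in $p_\alpha = p$ (off) versus $p_\alpha = p'$ (on), and in the off case account for the two $\Lambda = 1$ ties while in the on case observe that $a$ follows the strictly more reliable signal $a_\alpha$. The explicit factorization $p' - p = -p(p-1)(p-\tfrac12)$ is a slightly crisper justification of $p' > p$ than the paper gives, but the argument is otherwise the same.
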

\begin{proof}
    We compute the learning rate of $ a $ under Bayesian aggregation.
    Recall that Bayesian agents base their predictions on the posterior probability for $\theta$ given their inputs, tie-breaking by selecting an action randomly.
    So, the prediction of node $a$ is determined by the posterior of $\theta$ given their private signal $s_a$ and their in-neighbor $\alpha$.
    Therefore, $a$ predicts correctly whenever its posterior for $\theta = 1$ exceeds $1/2$ or tiebreaks correctly.
    Equivalently, $a$ predicts correctly whenever its corresponding likelihood ratio exceeds $1$ or tiebreaks correctly.
    We denote the ratio as $ \Lambda $, defined as
    \[
        \Lambda(\theta = 1 : \theta = 0 \mid s_a, \alpha) \deq \frac {\Pr[s_a, \alpha \mid \theta = 1]} {\Pr[s_a = 1, \alpha = 0 \mid \theta = 0]}                                                            = \frac {\Pr[s_a \mid \theta = 1]} {\Pr[s_a \mid \theta = 0]} \cdot \frac {\Pr[ \alpha \mid \theta = 1]} {\Pr[\alpha \mid \theta = 0]},
    \]
    where the equality holds due to $ s_a $ and $ \alpha $'s prediction being independent.

    In general, there are four cases which may arise.
    \[
                \Lambda(1:0 \mid s_a, \alpha) = \begin{cases}
                     \frac p{1-p} \cdot \frac {p_\alpha}{1-p_\alpha} > 1 & \text{if $s_a = 1, \alpha = 1$,} \\
                     \frac {1-p}p \cdot \frac {1-{p_\alpha}}{p_\alpha} < 1 & \text{if $s_a = 0, \alpha = 0$,} \\
                     \frac {1-p}p\cdot \frac{p_\alpha}{1-p_\alpha} & \text{if $s_a = 0, \alpha = 1$,} \\
                     \frac p{1-p} \cdot\frac{1-p_\alpha}{p_\alpha} & \text{if $s_a = 1, \alpha = 0$.} \\
                \end{cases}
    \]
    The inequality in the first two cases holds because $ p_\alpha \geq p $.
    Cases 3 and 4 need to be analyzed separately for $ \alpha $ on and off.
    \begin{itemize}
        \item If $\alpha$ is off, then $ p_\alpha = \pr{\alpha = 1 \mid  \theta = 1} = \pr{\alpha = 0 \mid \theta = 0} = p $, as discussed in \Cref{lemma:bayesian_cellLearningRate}.
            Thus \begin{align*}
                \Lambda(1:0 \mid s_a = 0, \alpha=1) = 
                \Lambda(1:0 \mid s_a = 1, \alpha=0)                 = \frac p{1-p} \cdot\frac{1-p}{p} = 1.
            \end{align*}
        \item If $\alpha$ is on, then $ p_\alpha = \pr{\alpha = 1 \mid  \theta = 1} = \pr{\alpha = 0 \mid \theta = 0} = \frac p2 + \frac 32 p^2 - p^3 > p $, again, as discussed in \Cref{lemma:bayesian_cellLearningRate}.
            Thus $\Lambda(1:0 \mid s_a = 0, \alpha=1) > 1$ and $\Lambda(1:0 \mid s_a = 1, \alpha=0) < 1.$ \end{itemize}
    
            Now, we compute the final learning rate of $ a $, which is the probability $ \pr{a=\theta} $. Since we assume $ q = \frac 12 $, this is the same as $ \pr{a=1 \suchthat \theta=1} $.
    
	Therefore, the probability of $a$ being correct is the probability of any input configuration yielding either $\Lambda > 1$, plus the probability of a configuration yielding $\Lambda = 1$ and tie breaking successfully with probability $ \frac 12 $.
    \begin{align*}
        \Pr[a = 1 \mid \alpha \text{ off}\,] &= \Pr[s_a = \alpha = 1 \mid \theta = 1] + \frac 12 \pr{s_a \neq \alpha \mid \theta = 1} = p^2 + \frac 12 \left( p(1-p) + (1-p)p \right) = p, \\
            \Pr[a = 1 \mid \alpha \text{ on}\,] &= \Pr[\alpha = 1 \mid \theta = 1] = p_\alpha = \frac p2 + \frac 32 p^2 - p^3.
	\end{align*}
\end{proof}

\begin{lemma}\label{lemma:bayesian_l2_LR}
    For $b$ the second earliest literal node in its clause gadget, $\oclr(b)$ takes the following values:
    \begin{align*}
        \oclr(b) = \begin{cases}
        3p^2-2p^3 & \text{if $ \alpha $, $ \beta $ both off,} \\
        p'(p'+2p-2pp') & \text{if $ \alpha $, $ \beta $ both on,} \\
        p(p+2p'-2pp') & \text{otherwise,} \\
    \end{cases}
    \end{align*}
    where $p' = \frac p2 + \frac 32 p^2 - p^3 $.
\end{lemma}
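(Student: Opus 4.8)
The plan is to compute $\oclr(b)$ directly by case analysis, exactly mirroring the computation of $\oclr(a)$ in \Cref{lemma:bayesian_l1_LR}, but now accounting for the additional in-neighbor. First I would fix $\theta = 1$ and observe that $b$ receives three inputs: its private signal $s_b$, the prediction $a$ of its clause-gadget predecessor, and the prediction $\beta$ of its own cell node. Crucially, by the Observation that cell predictions $\alpha,\beta,\gamma$ are mutually independent (since cells have only outgoing edges), and since $s_b$ is independent of everything, the triple $(s_b, a, \beta)$ consists of three \emph{mutually independent} bits — note $a$'s correctness probability $p_a$ already folds in whatever correlation $a$ had with $\alpha$, but $a$ is independent of $\beta$ and of $s_b$. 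Each of $s_b$, $\beta$ is correct with probability $p$ or $p' = \frac p2 + \frac 32 p^2 - p^3$ depending on whether the relevant cell is off or on, and $a$ is correct with probability $p_a \in \{p, p'\}$ likewise, by \Cref{lemma:bayesian_l1_LR}.

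Next I would set up the likelihood ratio $\Lambda(1:0 \mid s_b, a, \beta)$, which factors as a product of three independent ratios, each of the form $\frac{q}{1-q}$ or $\frac{1-q}{q}$ for the appropriate parameter $q \in \{p, p', p_a\}$. Since $p < p'$ and both exceed $\frac12$, the sign of $\log\Lambda$ is governed by a weighted majority of the three input bits; when all three parameters are equal (the ``$\alpha,\beta$ both off'' case, where $p_a = p$ too, giving all parameters $=p$), this is an honest unweighted majority with ties broken by the rule, and when parameters differ one must check which bit ``wins'' a $2$–$1$ split. The key point, as in \Cref{lemma:bayesian_l1_LR}, is that no genuine tie ($\Lambda = 1$) can occur when the parameters are distinct and strictly between $\frac12$ and $1$ — this needs the observation that $\frac{p}{1-p}\cdot\frac{1-p'}{p'} \neq 1$, which is implied by \Cref{lemma:twoInformedOneUninformed}, or rather its weaker consequence — so in the ``on'' cases $b$ simply follows the strict majority of correctness-indicators, possibly weighted. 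Then $\oclr(b) = \Pr[b = 1 \mid \theta = 1]$ is the probability that this (weighted) majority lands on $1$, which for the ``both off'' case is $\Pr[\text{at least two of three }p\text{-coins are }1] = p^3 + 3p^2(1-p) = 3p^2 - 2p^3$, and for the ``both on'' case is $\Pr[\text{at least two of: one }p\text{-coin ($=a$), two }p'\text{-coins}] = (p')^2 + 2p'(1-p')p = p'(p' + 2p - 2pp')$; the mixed case (exactly one of $\alpha,\beta$ on, but recall $a$'s parameter tracks $\alpha$) similarly yields $p(p + 2p' - 2pp')$ — here two parameters are $p'$-type... wait, one must be careful: if $\alpha$ on and $\beta$ off, then $p_a = p'$ and $\beta$'s parameter is $p$, so it's ``two $p'$-coins ($a$ and, no — $a$ and $s_b$? $s_b$ is always parameter $p$).'' I would redo the bookkeeping cleanly: the three parameters are $(p_{s_b}, p_a, p_\beta) = (p, p_a, p_\beta)$ with $p_a$ depending on $\alpha$'s state and $p_\beta$ on $\beta$'s state, and in every mixed scenario exactly two parameters equal $p$ and one equals $p'$, giving $\oclr(b) = \Pr[\text{majority with params }(p,p,p')] = p(p + 2p' - 2pp')$ after the weighted-majority check confirms the single $p'$-bit overrides a lone $p$-bit (true since $p' > p$) but not two agreeing $p$-bits.

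The main obstacle I anticipate is precisely this weighted-majority verification in the mixed cases: I must confirm that a single ``on'' input (parameter $p'$) outvotes a single ``off'' input (parameter $p$) when the third input disagrees with one of them, i.e. that $\frac{p'}{1-p'} > \frac{p}{1-p}$ resolves the $1$-$1$ tie correctly, and simultaneously that two agreeing ``off'' inputs still outvote one disagreeing ``on'' input, i.e. that $\left(\frac{p}{1-p}\right)^2 > \frac{p'}{1-p'}$ — equivalently $\left(\frac{p}{1-p}\right)^2 \cdot \frac{1-p'}{p'} > 1$, which is \emph{exactly} \Cref{lemma:twoInformedOneUninformed}. So the ``both-on'' / mixed distinctions hinge on that inequality, and the rest is the routine binomial bookkeeping sketched above; I would present the three resulting expressions and note the Mathematica verification, as the authors do elsewhere.
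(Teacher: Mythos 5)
Your proposal is correct and follows essentially the same route as the paper's own proof: establish mutual independence of $(s_b, a, \beta)$, factor the likelihood ratio, verify via \Cref{lemma:twoInformedOneUninformed} that in the mixed case the lone ``on'' signal does not outvote two agreeing ``off'' signals (so $b$ in fact follows the plain majority of its three inputs), and then compute the resulting binomial expressions case by case. The only cosmetic remark is that your ``weighted majority'' framing ultimately collapses to plain majority once the likelihood checks go through, which is exactly the paper's phrasing (``$b$ simply follows the majority of signals''); your final formulas and the role assigned to \Cref{lemma:twoInformedOneUninformed} match the paper precisely.
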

\begin{proof}
    As in the previous Lemmas, we first compute the ratio of $ \frac{\pr{\theta=1 \mid X_b}}{\pr{\theta=0 \mid X_b}} $, denoted as $ \Lambda $, to get $ b $'s predictions for each of its input configurations. We then compute the probability that $b$ sees inputs that yield a correct prediction.

    The information available to $ b $ is the predictions of $ a $ and $ \beta $, along with $ s_b $, $ b $'s private signal.
    Observe that these are all independent.
    We again define the ratio $ \Lambda $ as \begin{align*}
        \Lambda &\deq \frac {\Pr[s_b, a, \beta \mid \theta = 1]} {\Pr[s_b, a, \beta \mid \theta = 0]} = \frac {\Pr[s_b \mid \theta = 1]} {\Pr[s_b \mid \theta = 0]} \cdot \frac {\Pr[ a \mid \theta = 1]} {\Pr[a \mid \theta = 0]}\cdot\frac {\Pr[ \beta \mid \theta = 1]} {\Pr[\beta \mid \theta = 0]}.
    \end{align*}
    The values of $ \Lambda $ are once again determined by $ p \deq \pr{s_b = \theta} $ and $ p_a \deq \pr{a = \theta} $, $ p_\beta = \pr{\beta = \theta} $.
    Notice also that $ p_a, p_\beta \geq p $. We now list $\Lambda$ for all possible inputs.

    \begin{enumerate}[ref=(\theenumi)]
        \item If $ s_b = a = \beta = 1 $, \[
            \Lambda = \frac p{1-p} \cdot \frac {p_a}{1-p_a} \cdot \frac {p_\beta}{1-p_\beta} > 1
        \]
    \item\label{badcase1} If $ s_b = a = 1, \beta = 0 $, \[
                \Lambda = \frac p{1-p} \cdot \frac {p_a}{1-p_a} \cdot \frac {1-p_\beta}{p_\beta}.
            \]
        \item\label{badcase2} If $ s_b = 1,  a = 0, \beta = 1 $, \[
                \Lambda = \frac p{1-p} \cdot \frac {1-p_a}{p_a} \cdot \frac {p_\beta}{1-p_\beta}.
            \]
	\item If $ s_b = 1,  a = \beta = 0 $, \[
                \Lambda = \frac p{1-p} \cdot \frac {1-p_a}{p_a} \cdot \frac {1-p_\beta}{p_\beta} < 1.
            \]
    \end{enumerate}
    The rest of the cases follow trivially from symmetry, since
    \[
        \Lambda(s_b, a, \beta) = 1/\Lambda(\bar s_b, \bar a, \bar \beta).
    \]

    For the cases \labelcref{badcase1,badcase2}, we perform case analysis on whether $ \alpha $ and $ \beta $ are on or off.
    By \Cref{lemma:bayesian_l1_LR}, $p_a = p$ when $\alpha$ is off and $p_a = p'$ when $\alpha$ is on.
    From \Cref{lemma:bayesian_cellLearningRate}, $\beta$ is correct with probability $p$ when $\beta$ is off and $p'$ when $\beta$ is on.
    We consider two cases:
    \begin{itemize}[ ]
        \item If $ \alpha, \beta $ are both on, or both off, then $ p_a = p_\beta $, leaving $\Lambda = \frac p{1-p} > 1 $ in both of the middle cases.
        \item Otherwise, WLOG $ \alpha $ is on, $ \beta $ off.
            Then $ p_a = p', p_\beta = p $, resulting in $\Lambda( s_b= a=1, \beta=0) = \frac {p'}{1-{p'}} > 1 $ and $\Lambda( s_b= 
            \beta=1, a=0) = \left( \frac{p}{1-p} \right)^2 \cdot \frac {1-p'}{p'} > 1 $ by \Cref{lemma:twoInformedOneUninformed}.
    \end{itemize}

    We can see that $ b $ simply follows the majority of signals.
    \begin{align*}
        \pr{b = \theta} &= \pr{b = 1 \mid \theta = 1} = \pr{s_b + a + \beta \geq 2} = 1-\pr{s_b + a + \beta \leq 1} \\
                        &= 1 - (1-p)(1-p_\alpha)(1-p_\beta) - p(1-p_\alpha)(1-p_\beta) - (1-p)p_\alpha(1-p_\beta) - (1-p)(1-p_\alpha)p_\beta.
    \end{align*}
    Substituting $ p_a = p $ (similarly, $p_b = p$) if $ \alpha $ ($\beta$) is off and $ p' $ otherwise yields the desired probabilities.
\end{proof}

\begin{lemma}\label{lemma:bayesian_l3_LR}
    For $c$ the last literal node in its clause gadget, $\oclr(c)$ takes the following values, depending on the state of $ \left( \alpha, \beta, \gamma \right) $:
    \begin{align*}
\text{(off},\text{off},\text{off)}&\to            p^2 \left(2 p^3-5 p^2+2 p+2\right), \\
\text{(off},\text{off},\text{on)}&\to            p \left(-p^2-p (p'-2)+p'\right), \\
\text{(off},\text{on},\text{off)}&\to            p \left(p^3 (2 p'-1)+p^2 (1-4 p')+p (p'+1)+p'\right), \\
\text{(off},\text{on},\text{on)}&\to            p \left(2 p^2 (p'-1) p'+p \left(-3 (p')^2+p'+1\right)+p' (p'+1)\right), \\
\text{(on},\text{off},\text{off)}&\to            p^4 (2 p'-1)+p^3 (2-4 p')+2 p p', \\
\text{(on},\text{off},\text{on)}&\to            p \left(p^2 \left(2 (p')^2-2 p'+1\right)-3 p (p')^2+p' (p'+2)\right), \\
\text{(on},\text{on},\text{off)}&\to            p \left(4 p^2 (p'-1) p'+p \left(-6 (p')^2+4 p'+1\right)+2 (p')^2\right), \\
\text{(on},\text{on},\text{on)}&\to            p' \left(p^2 \left(2 (p')^2-3 p'+1\right)+p \left(-2 (p')^2+p'+1\right)+p'\right).
    \end{align*}
\end{lemma}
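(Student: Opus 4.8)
The plan is to carry out, for the last literal node $c$, the same likelihood-ratio computation used for $a$ and $b$ in \Cref{lemma:bayesian_l1_LR,lemma:bayesian_l2_LR}. Node $c$'s inputs are $X_c = \{s_c, a, b, \gamma\}$: its private signal, the actions of the two earlier literal nodes $a,b$ of the same clause gadget (the gadget is a complete digraph), and the action $\gamma$ of the matching cell node (each gadget node has an in-edge from its cell literal). The crucial structural observation is the dependence pattern among these four: $s_c$ is independent of everything given $\theta$; the cell of $\gamma$ is disjoint from the cells feeding $a$ and $b$, so $\gamma$ is conditionally independent of $(a,b)$ given $\theta$; but $a$ and $b$ are \emph{not} conditionally independent, because $b$ observed $a$. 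Hence the posterior odds factor as
\[
  \Lambda \;=\; \frac{\Pr[s_c \mid \theta=1]}{\Pr[s_c \mid \theta=0]}\cdot\frac{\Pr[\gamma \mid \theta=1]}{\Pr[\gamma \mid \theta=0]}\cdot\frac{\Pr[a,b \mid \theta=1]}{\Pr[a,b \mid \theta=0]},
\]
and $c$ outputs $1$ exactly when $\Lambda > 1$, tie-breaking with a fair coin when $\Lambda = 1$.

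The first step is to pin down the joint law of $(a,b)$ conditioned on $\theta$. By \Cref{lemma:bayesian_l1_LR}, node $a$ is correct with probability $p_a$, where $p_a = p$ if $\alpha$ is off and $p_a = p'$ if $\alpha$ is on. By the majority characterization established inside the proof of \Cref{lemma:bayesian_l2_LR}, node $b$ always outputs the majority of $\{s_b, a, \beta\}$, where $s_b$ is correct with probability $p$, the cell node $\beta$ with probability $p_\beta \in \{p, p'\}$, and $s_b, a, \beta$ are conditionally independent given $\theta$. Conditioning first on $a$ and then on $\{s_b,\beta\}$ gives, for $\theta = 1$,
\begin{align*}
  \Pr[a=1,b=1\mid\theta=1] &= p_a\bigl(1-(1-p)(1-p_\beta)\bigr), \\
  \Pr[a=1,b=0\mid\theta=1] &= p_a(1-p)(1-p_\beta), \\
  \Pr[a=0,b=1\mid\theta=1] &= (1-p_a)\,p\,p_\beta, \\
  \Pr[a=0,b=0\mid\theta=1] &= (1-p_a)(1-p\,p_\beta),
\end{align*}
and the $\theta = 0$ values follow from the global symmetry $\Pr[a=i,b=j\mid\theta=0] = \Pr[a=1-i,b=1-j\mid\theta=1]$. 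Dividing gives a closed form for the third factor of $\Lambda$ in each of the four states of $(\alpha,\beta)$.

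Next I would perform the case analysis. For each of the eight states of $(\alpha,\beta,\gamma)$ I substitute the corresponding $p_a, p_\beta, p_\gamma \in \{p, p'\}$; then, for each of the sixteen configurations of $(s_c, a, b, \gamma)$ — really eight, up to the $\theta$-flip symmetry — I evaluate $\Lambda$, classify it as $\Lambda > 1$, $\Lambda < 1$, or $\Lambda = 1$, and accumulate $\Pr[\text{config}\mid\theta=1]$ over the configurations that make $c$ predict $\theta$ (giving weight $\tfrac12$ to the ties). Summing and simplifying yields $\oclr(c) = \Pr[c=\theta]$ in each of the eight states, which one checks equals the eight polynomials listed in the statement (as in the rest of the paper, these simplifications are verified in Mathematica).

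The main obstacle is the same one encountered for $b$: deciding the sign of $\Lambda - 1$ in the mixed configurations. Whenever $a$, $\beta$, or $\gamma$ is on, its contribution to $\Lambda$ carries the factor $p'/(1-p')$ instead of $p/(1-p)$, and since $p' > p$ on $(\tfrac12,1)$ these factors do not cancel cleanly against each other or against the private-signal factor; resolving the resulting inequalities — for instance whether $\bigl(\tfrac{p}{1-p}\bigr)^2\tfrac{1-p'}{p'}$ exceeds $1$, which is precisely \Cref{lemma:twoInformedOneUninformed}, or whether the joint $(a,b)$ factor exceeds a given power of $\tfrac{p}{1-p}$ — requires a handful of auxiliary estimates in $p$, each proved by direct manipulation. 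One must also be careful that, unlike node $b$, node $c$ does \emph{not} simply follow a (weighted) majority once $a$ and $b$ are correlated, so the classification must genuinely be done configuration by configuration, and every true tie ($\Lambda = 1$) must be tracked in order to recover the constant terms of the final polynomials.
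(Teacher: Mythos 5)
Your proposal is correct and follows essentially the same approach as the paper. The only cosmetic difference is that you factor $\Lambda$ into three terms with a joint $\Pr[a,b\mid\theta]$ factor, whereas the paper uses the equivalent four-term factorization $\Pr[a\mid\theta]\cdot\Pr[b\mid a,\theta]$; your four joint probabilities for $(a,b)$ match the paper's conditional ones exactly, and both proofs reduce to the same configuration-by-configuration classification of $\Lambda$ versus $1$, invoking \Cref{lemma:twoInformedOneUninformed} for the borderline mixed cases. One small remark: for node $c$ there turn out to be no genuine ties ($\Lambda=1$ never occurs across the eight on/off states, in contrast to node $a$), so the bookkeeping for the tie-break weight $\tfrac12$ is vacuous here — being careful about it is harmless, but the constant terms of the final polynomials come entirely from the strict-inequality configurations.
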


\begin{proof}
    We again compute the value of $\Lambda = \frac{\pr{\theta = 1 \mid X_c}}{\pr{\theta = 0 \mid X_c}}$ for all possible configurations of $ X_c $.
    From that, we determine the action $ c $ chooses and compute the learning rate.

    First, let us rewrite \[
        \Lambda = \frac{\pr{s_c \mid \theta = 1}}{\pr{s_c \mid \theta = 0}} \cdot \frac{\pr{\gamma \mid \theta = 1}}{\pr{\gamma \mid \theta = 0}} \cdot \frac{\pr{a \mid \theta = 1}}{\pr{a \mid \theta = 0}} \cdot \frac{\pr{b \mid a,\theta = 1}}{\pr{b \mid a,\theta = 0}}.
    \]

    From the proofs of the previous two lemmas, we have \begin{align*}
        \pr{b = 1 \mid a = 1, \theta = 1} &= \pr{b = 0 \mid a = 0, \theta = 0} = p + p_\beta - pp_\beta, \\
        \pr{b = 1 \mid a = 0, \theta = 1} &= \pr{b = 0 \mid a = 1, \theta = 0} = pp_\beta, \\
        \pr{a = 1 \mid \theta = 1} &= \pr{a = 0 \mid \theta = 0} = p_\alpha, \\
    \end{align*}
    where $ p_\alpha = p' $ if $ \alpha  $ is on, $ p $ otherwise.
    We again use $ p_\beta, p_\gamma  $ similarly.

    We now perform case analysis on $ \left( s_c, \gamma, a, b \right) $:
    \begin{enumerate}[ ]
        \item $ \left( 1,1,1,1 \right) $: \[
            \Lambda = \frac{p}{1-p} \cdot \frac{p_\gamma}{1-p_\gamma} \cdot \frac{p_\alpha}{1-p_\alpha} \cdot \frac{p+p_\beta - pp_\beta}{1-pp_\beta} > 1,
        \]
        since it holds that $ 1>p,p_\alpha, p_\beta, p_\gamma> \frac 12  $.
        \item $ \left( 1,1,1,0 \right) $: \begin{align*}
                \Lambda &= \frac{p}{1-p} \cdot \frac{p_\gamma}{1-p_\gamma} \cdot \frac{p_\alpha}{1-p_\alpha} \cdot \frac{\left( 1-p \right)\left( 1-p_\beta \right)}{pp_\beta} = \frac{p_\gamma}{1-p_\gamma} \cdot \frac{p_\alpha}{1-p_\alpha} \cdot \frac{1-p_\beta}{p_\beta} \geq \frac{p}{1-p} \cdot \frac{p}{1-p}  \cdot \frac{1-p'}{p'} > 1.
        \end{align*}
        \item $ \left( 1,1,0,1 \right) $: \begin{align*}
                \Lambda &= \frac{p}{1-p} \cdot \frac{p_\gamma}{1-p_\gamma} \cdot \frac{1-p_\alpha}{p_\alpha} \cdot \frac{pp_\beta}{\left( 1-p \right)\left( 1-p_\beta \right)} \geq \left( \frac{p}{1-p} \right)^4 \cdot \frac{1-p'}{p'} > 1.
        \end{align*}
        \item $ \left( 1,1,0,0 \right) $: \begin{align*}
                \Lambda &= \frac{p}{1-p} \cdot \frac{p_\gamma}{1-p_\gamma} \cdot \frac{1-p_\alpha}{p_\alpha} \cdot \frac{1-pp_\beta}{p+p_\beta - pp_\beta}.
        \end{align*}
        This is $ <1 $ if $ \alpha, \beta  $ are on and $ \gamma $ is off, and $ >1 $ otherwise. (By Wolfram, need to write this out)
        \item $ \left( 1,0,1,1 \right) $: \begin{align*}
                \Lambda &= \frac{p}{1-p} \cdot \frac{1-p_\gamma}{p_\gamma} \cdot \frac{p_\alpha}{1-p_\alpha} \cdot \frac{p+p_\beta - pp_\beta}{1-pp_\beta} \geq \left( \frac{p}{1-p} \right)^2 \cdot \frac{1-p'}{p'} \cdot \frac{p+p_\beta - pp_\beta}{1-pp_\beta} > \frac{p+p_\beta - pp_\beta}{1-pp_\beta} > 1.
        \end{align*}
        \item $ \left( 1,0,1,0 \right) $: \begin{align*}
                \Lambda &= \frac{p}{1-p} \cdot \frac{1-p_\gamma}{p_\gamma} \cdot \frac{p_\alpha}{1-p_\alpha} \cdot \frac{\left( 1-p \right)\left( 1-p_\beta \right)}{pp_\beta} \leq \frac{p'}{1-p'} \cdot \left( \frac{1-p}{p} \right)^2 <1.
        \end{align*}
        \item $ \left( 1,0,0,1 \right) $: \begin{align*}
                \Lambda &= \frac{p}{1-p} \cdot \frac{1-p_\gamma}{p_\gamma} \cdot \frac{1-p_\alpha}{p_\alpha} \cdot \frac{pp_\beta}{\left( 1-p \right)\left( 1-p_\beta \right)} \geq \left( \frac{1-p'}{p'} \right)^2 \cdot \left( \frac{p}{1-p} \right)^3.
        \end{align*}
        The lower bound on $ \Lambda $ is reached for $ \gamma $ on and $ \alpha, \beta $ off, and in such a case it is $ <1 $.
        However, in any other state of $ \alpha, \beta, \gamma $, the value of $ \Lambda $ is lower-bounded by \[
             \frac{1-p'}{p'} \cdot \left( \frac{p}{1-p} \right)^2 > 1,
        \]
        by \Cref{lemma:twoInformedOneUninformed}.
        \item $ \left( 1,0,0,0 \right) $: \begin{align*}
                \Lambda &= \frac{p}{1-p} \cdot \frac{1-p_\gamma}{p_\gamma} \cdot \frac{1-p_\alpha}{p_\alpha} \cdot \frac{1-pp_\beta}{1-\left( 1-p \right)\left( 1-p_\beta \right)} \leq \frac{1-p}{p} \cdot \frac{1-p^2}{2p-p^2}<1.
        \end{align*}
    \end{enumerate}
    The remaining cases follow from symmetry.

    We now compute the learning rates in each case, and multiply them by the probability of that case taking place.
    Due to symmetry of $ \theta $ and $ 1-\theta $, this is equivalent to $\pr{\Lambda > 1 \mid \theta = 1}$.
    This can be written as the sum of the probabilities of all cases yielding $ \Lambda > 1 $ assuming $ \theta = 1 $.

    The configurations which always yield $ \Lambda > 1 $ are $ \left( 1,1,1,1 \right)$, $(1,1,1,0)$, $(1,1,0,1)$, $(1,0,1,1)$, $(0,1,0,1)$, $(0,1,1,1) $.
    Then, if $ \alpha,\beta $ are off and $ \gamma $ is on, then $ (0,1,1,0) $ gives $ \Lambda > 1 $, otherwise $ (1,0,0,1) $ gives $ \Lambda > 1 $.
    Finally, if $ \alpha,\beta $ are on and $ \gamma $ is off, then $ (1,1,0,0) $ gives $ \Lambda > 1 $, otherwise $ (0,0,1,1) $ gives $ \Lambda > 1 $.
    The learning rate of $ c $ is then the sum of the probabilities of these cases, given that $ \theta = 1 $.
\end{proof}

We now compute the CLR in the following three cases: when all literals are off, when all literals are on, and when only one literal is on.

\begin{lemma}\label{lemma:bayesian_000CLR} 
    Suppose that an optimal ordering $\sigma^*$ sets all literals in a clause $ C $ to be off. Then the resulting CLR of $ \gadget C $ is \[
        \oclr (\gadget C) = \pzero \deq p \left(2 p^4-5 p^3+5 p+1\right).
     \]
\end{lemma}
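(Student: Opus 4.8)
The plan is to assemble the per-node learning rates already computed in \Cref{lemma:bayesian_l1_LR,lemma:bayesian_l2_LR,lemma:bayesian_l3_LR}, specialized to the all-off configuration, and add them. First I would recall the general clause setup: the gadget $\gadget C$ is a directed triangle on literal nodes $a,b,c$, each additionally receiving one edge from the corresponding literal node $\alpha,\beta,\gamma$ of its variable cell. Since the literal nodes of $\gadget C$ have no outgoing edges except among themselves, an optimal ordering $\sigma^*$ can be taken to place all of $a,b,c$ after every variable node, and WLOG to order $a$ before $b$ before $c$. Moreover $\alpha,\beta,\gamma$ live in distinct cells with only outgoing edges, so their actions are mutually independent, and by \Cref{lemma:bayesian_cellLearningRate} each predicts correctly with probability exactly $p$ when its cell is off. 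Hence in the all-off case $p_\alpha = p_\beta = p_\gamma = p$.

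Next I would read off the three literal-node learning rates from the specialized lemmas. \Cref{lemma:bayesian_l1_LR} with $\alpha$ off gives $\oclr(a) = p$. \Cref{lemma:bayesian_l2_LR} with $\alpha,\beta$ both off gives $\oclr(b) = 3p^2 - 2p^3$. \Cref{lemma:bayesian_l3_LR} in the row $(\text{off},\text{off},\text{off})$ gives $\oclr(c) = p^2(2p^3 - 5p^2 + 2p + 2)$. Because each of these lemmas already reports the learning rate of the node under the (WLOG optimal) internal ordering of the gadget, and because the three cell states are fixed by hypothesis, the gadget CLR is exactly the sum of these three quantities.

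Finally I would carry out the routine addition:
\[
  \oclr(\gadget C) = p + (3p^2 - 2p^3) + p^2(2p^3 - 5p^2 + 2p + 2) = p + 5p^2 - 5p^4 + 2p^5 = p(2p^4 - 5p^3 + 5p + 1) = \pzero .
\]
There is no genuine obstacle here beyond bookkeeping: the only point that requires a moment's care is confirming that ``all literals off'' is indeed the hypothesis under which to invoke the relevant cases of \Cref{lemma:bayesian_l1_LR,lemma:bayesian_l2_LR,lemma:bayesian_l3_LR} --- in particular that an off literal node contributes $p_\alpha = p$ rather than $p' = \tfrac p2 + \tfrac 32 p^2 - p^3$. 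After that the polynomial identity is immediate and, as elsewhere in the paper, is easily double-checked with a computer algebra system.
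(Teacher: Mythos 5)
Your proposal matches the paper's own (very terse) proof exactly: both invoke \Cref{lemma:bayesian_l1_LR,lemma:bayesian_l2_LR,lemma:bayesian_l3_LR} with all three cells off (so $p_\alpha=p_\beta=p_\gamma=p$) and sum the resulting per-node rates. Your explicit verification that $p + (3p^2-2p^3) + p^2(2p^3-5p^2+2p+2) = p(2p^4-5p^3+5p+1)$ is correct.
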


\begin{proof}
    Follows directly from \Cref{lemma:bayesian_l1_LR,lemma:bayesian_l2_LR,lemma:bayesian_l3_LR}.
\end{proof}

\begin{lemma}
    \label{lemma:bayesian_111CLR}
    Suppose that an optimal ordering $\sigma^*$ sets all literals in a clause $ C $ to be on. Then the resulting CLR of $ \gadget C $ is \begin{align*}
        \oclr(\gadget C) = \pthree \deq p' \left(p^2 \left(2 (p')^2-3 p'+1\right)-p \left(2 (p')^2+p'-3\right)+2 p'+1\right).
     \end{align*}
\end{lemma}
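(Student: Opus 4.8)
The plan is to obtain the lemma directly from the per-node learning rates established in \Cref{lemma:bayesian_l1_LR,lemma:bayesian_l2_LR,lemma:bayesian_l3_LR}, using that the cumulative learning rate is additive over vertices. By the general clause setup, an optimal ordering $\sigma^*$ places every literal node after all variable-cell nodes (this can only add information to the literal nodes, at no cost to any other node), and, since the three literal nodes are interchangeable, we may assume it orders $a$ before $b$ before $c$ inside $\gadget C$. Hence
\[
    \oclr(\gadget C) = \oclr(a) + \oclr(b) + \oclr(c).
\]

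When all three literals $\alpha,\beta,\gamma$ are on, each corresponding cell node predicts the ground truth correctly with probability $p' = \frac p2 + \frac 32 p^2 - p^3$ by \Cref{lemma:bayesian_cellLearningRate}. Substituting this into the three lemmas, I read off $\oclr(a) = p'$ (the ``$\alpha$ on'' branch of \Cref{lemma:bayesian_l1_LR}), $\oclr(b) = p'(p'+2p-2pp')$ (the ``$\alpha,\beta$ both on'' branch of \Cref{lemma:bayesian_l2_LR}), and $\oclr(c) = p'\bigl(p^2(2(p')^2-3p'+1) + p(-2(p')^2+p'+1) + p'\bigr)$ (the $(\text{on},\text{on},\text{on})$ row of \Cref{lemma:bayesian_l3_LR}).

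The remaining step is purely algebraic: add the three expressions, factor out the common $p'$, and collect coefficients of $p^2$, $p$, and the constant term. Grouping the $p$-linear contributions as $2p - 2pp' + p(-2(p')^2+p'+1) = -p\,(2(p')^2+p'-3)$ and the constant contributions as $1 + p' + p' = 2p'+1$, one arrives at
\[
    \oclr(\gadget C) = p'\Bigl(p^2\bigl(2(p')^2-3p'+1\bigr) - p\bigl(2(p')^2+p'-3\bigr) + 2p' + 1\Bigr) = \pthree ,
\]
exactly the claimed expression.

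There is no genuine obstacle beyond careful bookkeeping. The one point worth verifying is that in this fully symmetric case the internal order of $a,b,c$ inside $\gadget C$ is irrelevant: all three feeding cells share the same accuracy $p'$, so any permutation of $\{a,b,c\}$ yields the same multiset of per-node rates, and the ``WLOG $a$ before $b$ before $c$'' assumption costs nothing. As elsewhere in the paper, the final polynomial identity can also be confirmed with a computer algebra system.
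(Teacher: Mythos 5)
Your proposal is correct and follows exactly the paper's own approach: the appendix proof of this lemma simply states that the CLR is the sum $\oclr(a)+\oclr(b)+\oclr(c)$ read off from Lemmas~\ref{lemma:bayesian_l1_LR}, \ref{lemma:bayesian_l2_LR}, and \ref{lemma:bayesian_l3_LR} with all three cells on. You additionally carry out the polynomial bookkeeping (which the paper delegates to Mathematica) and note the symmetry making the internal order of $a,b,c$ irrelevant; both observations are correct.
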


\begin{proof}
    Follows again directly from \Cref{lemma:bayesian_l1_LR,lemma:bayesian_l2_LR,lemma:bayesian_l3_LR}.
    The expression in the lemma statement is just the sum of $ \oclr(a),\oclr(b),\oclr(c) $.
\end{proof}

\begin{lemma}
    \label{lemma:bayesian_100CLR}
    Suppose that an optimal ordering $\sigma^*$ sets exactly one literal to be on. Then the resulting CLR is \[
        \oclr(\gadget C) =\pone \deq p^4 (2 p'-1)+p^3 (2-4 p')+p^2 (1-2 p')+4 p p'+p'.
     \]
\end{lemma}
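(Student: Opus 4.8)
The plan is to mirror the proofs of \Cref{lemma:bayesian_000CLR,lemma:bayesian_111CLR}: the gadget CLR is just $\oclr(a)+\oclr(b)+\oclr(c)$, which can be read off from \Cref{lemma:bayesian_l1_LR,lemma:bayesian_l2_LR,lemma:bayesian_l3_LR}. The one genuinely new point is that ``exactly one literal on'' is not a single configuration: the on-literal may sit in the first, second, or third slot of the intra-gadget ordering, and since the literal nodes $a,b,c$ have outgoing edges only to one another, an optimal $\sigma^*$ is free to choose whichever of these three permutations maximizes the gadget's CLR without affecting any other node. So I would first fix labels so that $\sigma^*$ orders $a$ before $b$ before $c$, and then compute the CLR in each of the three cases $(\mathrm{on},\mathrm{off},\mathrm{off})$, $(\mathrm{off},\mathrm{on},\mathrm{off})$, $(\mathrm{off},\mathrm{off},\mathrm{on})$ and take the maximum.

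For the first case I substitute the relevant branches: $\oclr(a)=p'$ from \Cref{lemma:bayesian_l1_LR}, $\oclr(b)=p(p+2p'-2pp')$ from the ``otherwise'' branch of \Cref{lemma:bayesian_l2_LR}, and $\oclr(c)=p^4(2p'-1)+p^3(2-4p')+2pp'$ from the $(\mathrm{on},\mathrm{off},\mathrm{off})$ line of \Cref{lemma:bayesian_l3_LR}. Adding these and collecting terms yields precisely $p^4(2p'-1)+p^3(2-4p')+p^2(1-2p')+4pp'+p' = \pone$. The other two cases are handled identically, each producing a different polynomial in $p$ and $p'$.

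The crux is then to show that the $(\mathrm{on},\mathrm{off},\mathrm{off})$ ordering is optimal, i.e.\ that its CLR is at least that of the other two for every $p\in(\tfrac12,1)$. I would substitute $p'=\tfrac p2+\tfrac32 p^2-p^3$ to turn each of the three expressions into a univariate polynomial, form the two differences, and check that both are nonnegative (and strictly positive in the interior) on $(\tfrac12,1)$ --- e.g.\ by factoring out a power of $(1-p)$ and a power of $(p-\tfrac12)$ and checking the sign of the remaining factor, verifying with Wolfram Mathematica as elsewhere in the paper. The intuition is clean: placing the more accurate on-literal (accuracy $p'>p$) earliest lets both remaining gadget nodes condition on its signal, which can only increase their success probabilities, whereas placing it last wastes that extra accuracy, so the leftmost placement dominates. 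With the dominance established, $\oclr(\gadget C)=\pone$ follows.

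I expect the comparison of the three polynomials in the last step to be the only real obstacle, since the expressions do not factor cleanly and the argument ultimately rests on a symbolic sign check on $(\tfrac12,1)$; everything else is the same bookkeeping already carried out for \Cref{lemma:bayesian_000CLR,lemma:bayesian_111CLR}.
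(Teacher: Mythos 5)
Your proposal matches the paper's proof: it enumerates the three possible positions of the ``on'' literal within the intra-gadget ordering, reads off each case's CLR from the general-node Lemmas, shows the leftmost placement (on, off, off) dominates via a symbolic comparison on $p\in(\tfrac12,1)$ (verified by Mathematica), and concludes the optimal gadget CLR equals $\pone$. Your explicit sum $p' + p(p+2p'-2pp') + \bigl(p^4(2p'-1)+p^3(2-4p')+2pp'\bigr)$ indeed simplifies to the stated polynomial, so the bookkeeping checks out.
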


\begin{proof}
    This lemma is a bit more involved than the two ealrier ones, since there are now three cases which are not symmetrical---the vertex corresponding to the ``on'' literal is either first, second, or third (in other words, it corresponds to either $ a $, $ b $, or $ c $ from the general lemmas).

    It turns out that the highest learning rate is achieved when $ \alpha $ is on.
    Intuitively, the fact that $ \alpha $ is on means that $ a $ receives a stronger signal, and it is able to be spread all over the gadget.
    Formally, we show that the learning rate corresponding to $ \left( \alpha, \beta,\gamma \right) = \left( \text{on},\text{off},\text{off} \right) $ is higher than $ \left( \text{off},\text{off},\text{on} \right) $ and $ \left( \text{off},\text{on},\text{off} \right) $.

    By \Cref{lemma:bayesian_l1_LR,lemma:bayesian_l2_LR,lemma:bayesian_l3_LR}, and assuming that all the variable cells are ordered before the clause gadget, the learning rate of the clause $ C $ is as follows.
    \begin{enumerate}[ ]
        \item If $ \alpha $ is on, then \[
             \clr(\gadget C) = p^4 (2 p'-1)+p^3 (2-4 p')+p^2 (1-2 p')+4 p p'+p',
        \]
        \item if $ \beta $ is on, then \[
            \clr(\gadget C) = p \left(p^3 (2 p'-1)+p^2 (1-4 p')-p (p'-2)+3 p'+1\right),
        \]
        \item if $ \gamma $ is on, then \[
            \clr(\gadget C) = p \left(-3 p^2-p (p'-5)+p'+1\right).
        \]
    \end{enumerate}
    It is not hard to see that indeed the first quantity is, assuming $ \frac 12 > p > 1 $, the largest of the three, and it is thus the optimal value for the cumulative learning rate of a cell with only one variable turned on.
\end{proof}

\begin{lemma}
    \label{lemma:bayesian_110CLR}
    Suppose that an optimal ordering $\sigma^*$ sets exactly two literals to be on. Then the resulting CLR is \[
        \oclr(\gadget C) =\ptwo \deq 4 p^3 (p'-1) p'+p^2 \left(-6 (p')^2+4 p'+1\right)+2 p p'+p' (p'+1).
     \]
\end{lemma}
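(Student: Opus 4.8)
The plan is to mirror the proof of \Cref{lemma:bayesian_100CLR}. Since the literal nodes of $\gadget C$ have no outgoing edges except among themselves, an optimal ordering $\sigma^*$ may permute them among the positions $a$, $b$, $c$ of the general clause setup (with all variable cells ordered first) without affecting any node outside the gadget; it therefore selects the placement of the two ``on'' literals and the single ``off'' literal that maximizes the gadget's CLR. Up to the symmetry already exploited in \Cref{lemma:bayesian_l1_LR,lemma:bayesian_l2_LR,lemma:bayesian_l3_LR}, exactly three placements need to be compared, namely $(\alpha,\beta,\gamma)$ equal to $(\text{on},\text{on},\text{off})$, $(\text{on},\text{off},\text{on})$, or $(\text{off},\text{on},\text{on})$.

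For each of the three placements I would read off $\oclr(a)$, $\oclr(b)$, and $\oclr(c)$ from \Cref{lemma:bayesian_l1_LR,lemma:bayesian_l2_LR,lemma:bayesian_l3_LR} and sum them. For the placement $(\text{on},\text{on},\text{off})$ this gives $\oclr(a) = p'$, $\oclr(b) = p'(p'+2p-2pp')$, and $\oclr(c) = p(4p^2(p'-1)p'+p(-6(p')^2+4p'+1)+2(p')^2)$; here the cross terms $-2p(p')^2$ and $+2p(p')^2$ cancel, and collecting the rest gives exactly $\ptwo = 4p^3(p'-1)p'+p^2(-6(p')^2+4p'+1)+2pp'+p'(p'+1)$. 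The remaining two placements yield different explicit polynomials in $p$ after substituting $p' = \tfrac p2 + \tfrac 32 p^2 - p^3$.

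The one substantive step is to show that $\ptwo$ dominates the other two sums on $p \in (\tfrac 12, 1)$, so that $(\text{on},\text{on},\text{off})$ is the placement actually chosen by $\sigma^*$. Intuitively, exactly as in \Cref{lemma:bayesian_100CLR}, it is best for the strong (``on'') signals to be broadcast earliest, leaving the single weak literal last, so that the largest number of downstream nodes benefit from the stronger likelihood ratios. Formally this reduces to checking two polynomial inequalities in the single variable $p$ on the open interval $(\tfrac 12, 1)$; I would discharge them the same way the paper treats its other algebraic claims, by factoring the pairwise differences and verifying their sign with Wolfram Mathematica. This comparison is the only real obstacle, and it is routine, since each quantity is an explicit low-degree polynomial whose pairwise differences keep a constant sign on $(\tfrac 12, 1)$. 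Summing the three node learning rates in the winning placement then yields the claimed value of $\oclr(\gadget C)$.
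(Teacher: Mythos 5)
Your proposal is correct and follows essentially the same route as the paper: enumerate the three non-isomorphic placements of the single ``off'' literal, read off $\oclr(a),\oclr(b),\oclr(c)$ from the supporting lemmas, sum them, and show that the $(\text{on},\text{on},\text{off})$ case dominates on $p\in(\tfrac12,1)$ by a polynomial comparison verified symbolically. Your explicit cancellation of the $\pm 2p(p')^2$ terms and the resulting sum match the paper's value of $\ptwo$ exactly.
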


\begin{proof}
    As in the previous lemma, we need to decide which of the literals should be the ``off'' one---in this case it is $ \gamma $.
    Intuitively, again, the fact that $ \alpha $ and $ \beta $ are on means that their stronger signals can be better spread over the gadget.
    Formally, we show that the learning rate corresponding to $ \left( \alpha, \beta,\gamma \right) = \left( \text{on},\text{on},\text{off} \right) $ is the highest.

    By \Cref{lemma:bayesian_l1_LR,lemma:bayesian_l2_LR,lemma:bayesian_l3_LR}, and assuming that all the variable cells are ordered before the clause gadget, the learning rate of the clause $ C $ is as follows.
    \begin{enumerate}[ ]
        \item If $ \alpha $ is off, then \[
             \clr(\gadget C) = p \left(2 p^2 (p'-1) p'-p \left(3 (p')^2+p'-2\right)+(p')^2+3 p'+1\right),
        \]
        \item if $ \beta $ is off, then \[
            \clr(\gadget C) = p'+p p' (4+p')+p^2 (1-2 p'-3 (p')^2)+p^3 (1-2 p'+2 (p')^2),
        \]
        \item if $ \gamma $ is off, then \[
            \clr(\gadget C) = 4 p^3 (p'-1) p'+p^2 \left(-6 (p')^2+4 p'+1\right)+2 p p'+p' (p'+1).
        \]
    \end{enumerate}
    It is not hard to see that indeed the last quantity is, assuming $ \frac 12 > p > 1 $, the largest of the three, and it is thus the optimal value for the cumulative learning rate of a cell with exactly two variables turned on.
\end{proof}

Lastly, observe that as the number of literals which are set to on increases, the optimal gadget learning rate increases because the literal nodes get stronger signals for $\theta$. We can verify this for the cases computed above:
\begin{corollary}\label{cor:bayesian_LR_comparison}
    For $p \in (\frac 1 2, 1)$, \[
    \pthree \geq \ptwo \geq \pone \geq \pzero.
    \]
\end{corollary}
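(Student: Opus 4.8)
Substituting $p' = \frac p2 + \frac 32 p^2 - p^3$ from \Cref{lemma:bayesian_cellLearningRate} into the explicit formulas of \Cref{lemma:bayesian_000CLR,lemma:bayesian_100CLR,lemma:bayesian_110CLR,lemma:bayesian_111CLR} turns each of $\pzero,\pone,\ptwo,\pthree$ into a univariate polynomial in $p$, of degrees $5,7,9,11$ respectively. So the Corollary reduces to three one-variable polynomial inequalities on the open interval $(\tfrac12,1)$, namely $\pone-\pzero\ge 0$, $\ptwo-\pone\ge 0$, and $\pthree-\ptwo\ge 0$. I would handle each difference $D_k\deq\clr_k-\clr_{k-1}$ by first extracting its forced linear factors and then showing the remaining low-degree cofactor keeps a fixed sign.

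The forced factors come from the endpoint behaviour, which I would record first. At $p=\tfrac12$ every private signal is uninformative and, by the $\theta\leftrightarrow 1-\theta$ symmetry, so is every aggregated signal (note $p'=\tfrac12$ when $p=\tfrac12$), so every literal node has accuracy $\tfrac12$ and every clause gadget has CLR $\tfrac32$ regardless of how many of its literals are ``on''; at $p=1$ every node is correct with probability $1$, so every gadget has CLR $3$; and at $p=0$ all of $\pzero,\pone,\ptwo,\pthree$ vanish. Hence each $D_k$ vanishes at $p\in\{0,\tfrac12,1\}$, so $p$, $(2p-1)$ and $(1-p)$ all divide $D_k$. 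Dividing these out leaves a cofactor $r_k(p)$ of degree at most $8$, and since the product $p(2p-1)(1-p)$ is positive on $(\tfrac12,1)$, it suffices to check that each $r_k$ (after fixing the sign of the constant) is positive there. For $k=1$ this is short: one gets $\pone-\pzero=\tfrac12\,p(2p-1)(1-p)\,r_1(p)$ with $r_1(p)=2p^4-4p^3-2p^2+4p+1$, and $r_1$ is positive on all of $[0,1]$ (it has no root there; e.g.\ $r_1(\tfrac12)=\tfrac{17}{8}$, $r_1(1)=1$, and a crude monotonicity/derivative estimate rules out a root in between). The plots in \Cref{fig:bayes_ps,fig:pdff} are consistent with this picture.

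For the two remaining differences the cofactors $r_2,r_3$ have degree up to $8$; there I would certify positivity on $(\tfrac12,1)$ either by the substitution $p=\tfrac12+t$ with $t\in(0,\tfrac12)$, after which the leftover polynomial in $t$ can be shown nonnegative by grouping terms, or---consistent with the rest of the paper---by a root-isolation/Sturm computation in Wolfram Mathematica confirming no root lies in the open interval, together with one sample value of the correct sign. \textbf{The main obstacle} is purely the bookkeeping for $\pthree-\ptwo$: once $p'$ is expanded this is a degree-$11$ polynomial with sizeable coefficients, so producing a clean, human-checkable positivity certificate (rather than just a machine check) is the fiddly part. I note there is also a slicker conceptual route that avoids computation entirely: turning a cell ``on'' replaces an independent binary symmetric signal feeding the gadget with a strictly more informative one, and in the Bayesian model the accuracy of every downstream node is monotone under garbling of one of its inputs, so the gadget CLR should be weakly increasing in the number of ``on'' literals; but making the garbling relation rigorous along the internal cascade $a\to b\to c$ (where the two signals reaching $c$ are correlated) requires care, so I would fall back on the explicit-polynomial verification above.
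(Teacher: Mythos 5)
Your approach matches the paper's own proof, which simply substitutes $p' = \tfrac p2 + \tfrac 32 p^2 - p^3$ to obtain explicit polynomials in $p$ and then verifies the three resulting inequalities with Wolfram Mathematica. Your version adds some useful intermediate structure---noting that $p$, $(2p-1)$, and $(1-p)$ are forced factors of each difference and hand-checking the degree-4 cofactor for $\pone - \pzero$---but in the end, like the paper, you fall back on a machine (or Sturm/root-isolation) check for the larger cofactors, so it is essentially the same argument with the bookkeeping made explicit.
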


\begin{proof}
    Follows by substituting the expressions from \Cref{lemma:bayesian_000CLR,lemma:bayesian_111CLR,lemma:bayesian_100CLR,lemma:bayesian_110CLR}.
    The calculations were verified by Wolfram Mathematica.
    For a visual representation of these learning rates, see \Cref{fig:bayes_ps}.
\end{proof}

\begin{corollary} \label{obs:satisfied_bounds}
    For clause gadgets with at least one on literal, the optimal gadget learning rate is lower bounded by $\pone$ and upper bounded by $\pthree$.
\end{corollary}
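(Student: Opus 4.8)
The statement is an immediate consequence of the gadget-rate computations already carried out, so the plan is just to assemble them. First I would observe that a clause gadget with at least one ``on'' literal must have exactly $i \in \{1,2,3\}$ of its literals ``on''. I would then case-split on $i$ and invoke the matching lemma in each case: \Cref{lemma:bayesian_100CLR} shows that for $i=1$ the optimal gadget CLR is $\pone$ (after arguing that placing the single ``on'' literal first is optimal, which that lemma already does), \Cref{lemma:bayesian_110CLR} shows that for $i=2$ it is $\ptwo$ (with $\gamma$ the ``off'' literal), and \Cref{lemma:bayesian_111CLR} shows that for $i=3$ it is $\pthree$. In all three cases the relevant quantity is among $\{\pone,\ptwo,\pthree\}$.

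Next I would apply \Cref{cor:bayesian_LR_comparison}, which gives the chain $\pthree \geq \ptwo \geq \pone \geq \pzero$ for every $p \in (\tfrac12,1)$. Combining this with the case analysis, the optimal gadget learning rate of any clause gadget with at least one ``on'' literal equals one of $\pone$, $\ptwo$, or $\pthree$, each of which lies in the closed interval $[\pone,\pthree]$. Hence the optimal gadget learning rate is lower bounded by $\pone$ and upper bounded by $\pthree$, as claimed.

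I do not expect any real obstacle here: all the analytic work—identifying the optimal within-gadget ordering for each configuration, deriving the closed-form polynomials $\pone,\ptwo,\pthree,\pzero$, and verifying the ordering inequalities (Wolfram Mathematica)—has already been done in \Cref{lemma:bayesian_l1_LR,lemma:bayesian_l2_LR,lemma:bayesian_l3_LR} and \Cref{lemma:bayesian_100CLR,lemma:bayesian_110CLR,lemma:bayesian_111CLR,lemma:bayesian_000CLR,cor:bayesian_LR_comparison}. The only mild care needed is to note that ``optimal gadget learning rate'' is meaningful because, by the earlier discussion, an optimal network ordering $\sigma^*$ orders all cell nodes before all clause-gadget nodes and the within-gadget ordering of a clause affects only that clause, so the per-gadget optima are achieved simultaneously; this is exactly the hypothesis under which the cited lemmas are stated. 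With that remark in place the corollary follows by direct substitution and the monotonicity chain, so the write-up is a few lines.
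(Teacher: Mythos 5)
Your proof is correct and matches the paper's implicit reasoning: the corollary is stated without an explicit proof precisely because it follows directly from \Cref{lemma:bayesian_100CLR,lemma:bayesian_110CLR,lemma:bayesian_111CLR} (which identify the optimal CLR for $i=1,2,3$ ``on'' literals as $\pone,\ptwo,\pthree$) together with the chain $\pthree \geq \ptwo \geq \pone \geq \pzero$ from \Cref{cor:bayesian_LR_comparison}. The case split on $i$ and the final sandwiching argument are exactly the intended route.
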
 

\section{Full Proof of Theorem~\ref{thm:nphardness_maj}}
\label{app:majority_proof}

As stated in \Cref{sec:maj}, we perform a reduction from \sat{} to \netlearn{}, adapting the proof of \Cref{thm:nphardness_bayes}.
The graph construction we use is described in \Cref{ssec:maj_graph}.

\ourparagraph{Literal notation}
We now introduce some notation for use in subsequent sections.
Let $ \ell $ be a literal of some variable $ x \in \vars $, meaning either $ \ell = x $ or $ \ell = \lnot x $.
We extend the notation of the cell to be $ \cell \ell \deq \cell x $.
We say that cell $\cell x$ is ``on'' under a decision ordering $\sigma$ if $\sigma(\lnot x_i) < \sigma(x_i)$; otherwise, cell $\cell x$ is ``off''.
We say that a \emph{literal} $ \ell $ is \emph{on} if $ \ell = x $ and $ \cell x $ is on, or $ \ell = \lnot x $ and $ \cell x $ is off; otherwise, literal $ \ell $ is off.

\subsection{Gadget Learning Rates}
We begin by examining the learning rate of an arbitrary cell under a pair of orderings in which the cell is either ``on'' or ``off''. The following lemma shows that cells achieve the same learning rate under either of these orderings, and that this learning rate is the best possible over all orderings. 

\begin{lemma}[Majority Dynamics Cell LR] \label{lemma:MD_cellLearningRate}
    Let $x \in \vars$.
    Let $ q = \frac 12 $, and $ p $ be given.
    Then \[
	\oclr (\cell x) = 2p + 3p^2 -2p^3.
    \]
\end{lemma}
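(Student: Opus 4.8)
The plan is to decompose $\oclr(\cell x)$ into the sum of the success probabilities of the three nodes $d_x$, $x$, $\lnot x$, and to show that a single ordering maximizes all three contributions simultaneously. First I would observe that $d_x$ has no incoming edges, so regardless of its position its only input is $s_{d_x}$, and hence $d_x$ is correct with probability exactly $p$ under $\mu^M$; moreover, placing $d_x$ early can only add information to $x$ and $\lnot x$ and never hurts anyone, so WLOG an optimal ordering puts $d_x$ first. By the symmetry of the cell under the relabeling $x \leftrightarrow \lnot x$, I may also assume $\sigma(x) < \sigma(\lnot x)$.

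Next I would analyze node $x$. Its only possible preceding in-neighbor is $d_x$, so its input multiset is either $\{s_x\}$ or $\{s_x, a_{d_x}\}$. The key observation is that on a two-element input multiset that contains the node's own private signal, $\mu^M$ always outputs that signal: if the two inputs agree, the output is their common value, which equals $s_x$; if they disagree, the tie-break rule returns $s_x$. Hence $a_x = s_x$ in every configuration, and node $x$ contributes exactly $p$ to the CLR.

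Then I would analyze node $\lnot x$, which always has $x$ among its preceding in-neighbors and may or may not also have $d_x$. If $d_x$ is placed after $\lnot x$, then $\lnot x$ has the two-element input $\{s_{\lnot x}, a_x\}$ and, by the same tie-break argument, outputs $s_{\lnot x}$, contributing $p$. If instead $d_x$ precedes $\lnot x$, then $\lnot x$'s inputs are $\{s_{\lnot x}, a_x, a_{d_x}\} = \{s_{\lnot x}, s_x, s_{d_x}\}$, three independent signals each correct with probability $p$, so a majority of three is correct with probability $3p^2(1-p) + p^3 = 3p^2 - 2p^3$. Since $3p^2 - 2p^3 - p = p(1-p)(2p-1) \ge 0$ for $p \in (\tfrac12, 1)$, it is optimal to give $\lnot x$ all three inputs. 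Summing the three contributions, the ordering $d_x, x, \lnot x$ (equivalently the cell ``off'', and symmetrically ``on'') achieves $\oclr(\cell x) = p + p + (3p^2 - 2p^3) = 2p + 3p^2 - 2p^3$, and by the case analysis above no ordering can do better.

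The main obstacle — really the only delicate point — is getting the tie-breaking behavior of $\mu^M$ exactly right on two-element inputs: a careless reading would treat such a node as computing a ``majority of two'' and worry about the unbroken tie, but because one of the two inputs is the node's own private signal, the rule collapses to simply copying that signal. This is precisely what forces $a_x = s_x$ and keeps the three signals feeding into $\lnot x$ mutually independent (in contrast to the Bayesian cell of \Cref{lemma:bayesian_cellLearningRate}, where $a_x$ is correlated with $a_{d_x}$). The remaining work is the routine verification that $3p^2 - 2p^3 \ge p$ on $(\tfrac12, 1)$ and a brief check of the six orderings of $\{d_x, x, \lnot x\}$ to confirm optimality.
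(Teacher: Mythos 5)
Your proof is correct and follows essentially the same route as the paper's: fix $d_x$ first, note that tie-breaking forces $a_x = s_x$ so node $x$ contributes $p$, and then observe that $\lnot x$ sees three independent signals and succeeds with probability $3p^2 - 2p^3$. You are somewhat more careful than the paper in one respect: the paper asserts that placing $d_x$ early ``can only increase'' the neighbours' accuracy, which is not a safe blanket claim under majority vote, whereas you actually verify it by checking that $3p^2 - 2p^3 - p = p(1-p)(2p-1) \ge 0$ on $(\tfrac12,1)$.
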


\begin{proof}
    First, observe that for the optimal ordering $ \sigma^* $, it is always beneficial to put the dummy node $ d_x $ \emph{before} the nodes $ x $ and $ \lnot x $.
    This is because $ d_x $ itself has no incoming edges, so it cannot benefit from any further information, and it has edges going to $ x $ and $ \lnot x $, which can only increase their chances of getting the correct answer.
    Hence, we can see that $ \oclr(d_x) = p $.

    The case of the remaining two nodes is symmetric, WLOG let us assume that $ \sigma^*(x) < \sigma^*(\lnot x) $ (so the cell $ \cell x $ is ``off'').
    The node $ x $ then receives, apart from its private information, the action of $ d_x $.
    Due to tie-breaking, it still always chooses to believe its private signal, so $ \oclr (x) = p $.

    Finally, the node $ \lnot x $ receives the private signal, $ s_{\lnot x} $, the action of $ d_x $, $ a_{d_x} $ and the action of $ x $, $ a_x $.
    Notice that all these three pieces of information are independent (since $ x $ was not influenced by $ d_x $ due to tie-breaking).
    This makes it easy to compute the learning rate of node $ \lnot x $ as the probability that at least two of the three pieces of information are correct \begin{align*}
        \oclr(\lnot x) &= \prt{at least two of $ s_{\lnot x}, a_x, a_{d_x} $ are correct} \\
        &= 3p^2 - 2p^3.
    \end{align*}

    By linearity of expectation, it then holds that \[
	\oclr(\cell x) = 2p + 3p^2 -2p^3.
    \]
\end{proof}

We defer the question of which of the two orderings is optimal to the next section. Addressing this first requires examining the learning rates of the clause gadgets, which are also affected by cell node orderings. We begin with a general expression for the CLR of any clause gadget under a particular partial ordering. We then use this expression to compute the CLR for specific clauses, observing that the optimal clause node ordering must respect the partial ordering given earlier.

\begin{lemma}[Majority Clause LR] \label{lemma:MD_genOrderingCLR}
    For any clause $ C = \left( \ell_1 \lor \ell_2 \lor \ell_3 \right) $, let $ a,b,c $ be the literal nodes in the clause gadget corresponding to $\ell_1$, $\ell_2$, and $\ell_3$. Also, let $ \alpha, \beta, \gamma $ be the cell variable nodes corresponding to $\ell_1$, $\ell_2$, and $\ell_3$ (see \Cref{fig:alphabetagamma} for details of the construction). Suppose that $\alpha$, $\beta$, and $\gamma$ correctly predict the ground truth with probabilities $p_\alpha$, $p_\beta$, and $p_\gamma$, respectively.
    Consider a partial ordering $ \sigma $ which orders all variable and dummy nodes before all literal nodes. WLOG, suppose $ \sigma $ also orders $a$ before $b$ before $c$.
    Then the cumulative learning rate of the clause gadget under $\sigma$ is \begin{align*}
	\clr(\gadget C, \sigma) = 2p + \Pr[a] + \Pr[b] + \Pr[c], 
    \end{align*}
    where 
    \begin{align*}
        \Pr[a] &= p((1-p)p_\alpha + p(p+2)),\\
        \Pr[b] &= p^2 \left((1-p)^2 (p_\alpha + 5{p_\beta})+p (4-3 p)\right),\\
        \Pr[c] &= p^2 \left((1-p)^2 (p_\alpha(1-{p_\gamma} ({p_\beta}+p-1)+{p_\beta} p)+ 2p_\beta p -{p_\gamma}({p_\beta} (5 p-3)-3 p-2)) + p(4-3p)\right).
    \end{align*}
\end{lemma}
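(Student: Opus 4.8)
The proof is a direct computation in the spirit of \Cref{lemma:MD_cellLearningRate}: by linearity of expectation, $\clr(\gadget C,\sigma)$ equals the sum of the success probabilities of the five gadget nodes, and since $q=\frac12$ I may condition everything on $\theta=1$. The two dummy nodes $d^1,d^2$ have no incoming edges, so each simply reports its private signal and is correct with probability $p$; this accounts for the $2p$ term. What remains is to compute the success probabilities $\Pr[a]$, $\Pr[b]$, $\Pr[c]$ of the three literal nodes, taken in their decision order $a,b,c$.

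Node $a$ is the easiest: its inputs $X_a=\{s_a,a_{d^1},a_{d^2},a_\alpha\}$ are mutually independent given $\theta$, since the dummy actions coincide with the dummies' private signals, while $a_\alpha$ is determined inside the variable cell of $\alpha$, which has only outgoing edges, so $a_\alpha$ is independent of $s_a$ and of the dummies and, by hypothesis, is correct with probability $p_\alpha$. Hence $a$ takes a majority over four independent bits, tie-broken toward $s_a$, and $\Pr[a]$ is the sum of the probabilities of the input patterns on which that majority is correct; collecting terms gives the stated closed form.

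The substantive part is $b$ and $c$, whose inputs are no longer independent because the dummies feed $a$, which in turn feeds $b$ and $c$. For $b$, with $X_b=\{s_b,a_{d^1},a_{d^2},a_\beta,a_a\}$, the key object is the joint law of $(D,a_a)$ where $D=a_{d^1}+a_{d^2}$; I would obtain it by conditioning on $D\in\{0,1,2\}$ and marginalizing over $s_a,a_\alpha$ (for example $\Pr[a_a=1\mid D=1]=p$, whereas for $D=0$ and $D=2$ it is $p\,p_\alpha$ and $1-(1-p)(1-p_\alpha)$). Since $s_b$ and $a_\beta$ are independent of $(D,a_a)$, the majority of $b$'s five inputs depends only on $D+a_a+s_b+a_\beta$, so $\Pr[b]$ is a short weighted sum over the possible values of $(D,a_a)$, which simplifies to the stated expression. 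For $c$, with $X_c=\{s_c,a_{d^1},a_{d^2},a_\gamma,a_a,a_b\}$, I would push the same idea one level further: compute the joint law of $(D,a_a,a_b)$ by conditioning first on $D$, then on $a_a$, and then using that, given $D$ and $a_a$, node $b$ decides by a majority of $D+a_a+s_b+a_\beta$; the bits $s_c$ and $a_\gamma$ remain independent. The tie-broken majority of $c$'s six inputs then yields $\Pr[c]$ as a weighted sum over the configurations of $(D,a_a,a_b)$, which after substituting $p_\alpha,p_\beta,p_\gamma$ collapses to the claimed polynomial. Adding $2p+\Pr[a]+\Pr[b]+\Pr[c]$ finishes the proof.

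The main obstacle is precisely this dependency bookkeeping for $c$: its six inputs form a two-level DAG (the dummies feed $a$, $b$ and $c$ directly, and feed $b$ also through $a$), so one cannot pretend the inputs are independent and must carry the full joint distribution of $(a_{d^1}+a_{d^2},a_a,a_b)$ through the computation. The resulting case analysis and polynomial simplification are lengthy -- a natural place to invoke a computer-algebra check -- but conceptually routine once the joint laws are written down. A convenient sanity check at the end is that each $\Pr[\cdot]$ tends to $\frac12$ as $p\to\frac12$ and to $1$ as $p\to1$, and that substituting $p_\alpha=p_\beta=p_\gamma=p$ (all literals off) reproduces the all-off gadget learning rate.
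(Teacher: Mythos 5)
Your proposal is correct and takes essentially the same route as the paper's proof: both condition on the number of correct dummy actions and then track the joint law of $(a_a,a_b)$ conditional on that, before combining with the independent bits $s_c,a_\gamma$ to evaluate $\Pr[c]$; the remaining work is the same mechanical polynomial bookkeeping (which the paper also defers to a computer-algebra check). The only cosmetic difference is that you organize the conditioning by the sum $D=a_{d^1}+a_{d^2}$ while the paper conditions on the pair $(d_1,d_2)$ directly, but these decompositions are identical.
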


\begin{proof}
     Note that the dummy nodes have no incoming edges, and therefore each contributes a learning rate of $p$. We now compute the learning rates of $ a $, $ b $, and $ c $.
    
     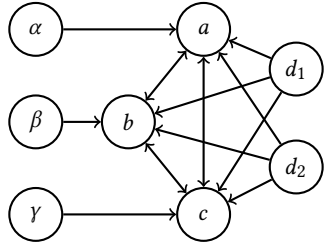
\begin{figure}[t!]
	\centering
	\begin{tikzpicture}[
		node distance = 5mm and 5mm,
		xsh/.style = { xshift=10mm },
	stff/.style={circle, draw=black, \figThickness, minimum size=\figCircSize, inner sep=0pt},
	]
		\node[stff]        (alpha)                  {$ \alpha $};
		\node[stff]        (beta)    [below=of alpha]  {$ \beta $};
		\node[stff]        (gamma)    [below=of beta]  {$ \gamma $};

		\node[stff]        (a)    [right=of alpha,xsh]  {$ a $};
		\node[stff]        (b)    [right=of beta]  {$ b $};
		\node[stff]        (c)    [right=of gamma,xsh]  {$ c $};
		\node[stff]        (d1)    [right=of b, xsh,yshift=20]  {$ d_1 $};
		\node[stff]        (d2)    [right=of b, xsh,yshift=-6mm]  {$ d_2 $};

		\draw[->, \figThickness] (alpha)  to (a);
		\draw[->, \figThickness] (beta)  to (b);
		\draw[->, \figThickness] (gamma)  to (c);

		\draw[<-, \figThickness] (a)  to  (d1);
		\draw[<-, \figThickness] (b)  to  (d1);
		\draw[<-, \figThickness] (c)  to  (d1);
		\draw[<-, \figThickness] (a)  to  (d2);
		\draw[<-, \figThickness] (b)  to  (d2);
		\draw[<-, \figThickness] (c)  to  (d2);
		\draw[<->, \figThickness] (a)  to (c);
		\draw[<->, \figThickness] (b)  to (c);
		\draw[<->, \figThickness] (b)  to (a);
	\end{tikzpicture}
	\caption{Construction used in General Ordering Learning Rate proof. $\alpha, \beta$ and $\gamma$ are the vertices from the cell gadgets corresponding to the literals from $C$.}
 \Description{A graph of the Majority gadget  with literals $a,b,c$, along with $alpha,beta,gamma$, from which edges go only to $a,b,c,$ respectively.}
     \label{fig:alphabetagamma}
\end{figure}

    The first node, $a$, is correct iff \begin{enumerate}[ ]
        \item both $d_1$ and $d_2$ are correct, and at least one of their private signal and $\alpha$ is correct: \[
        \Pr[a \mid d_1, d_2] = 1 - (1-p)(1-p_\alpha),
        \]
        \item or one of $d_1$ and $d_2$ are correct, and their private signal is correct: \[
        \Pr[a \mid d_1, \bar d_2] = \Pr[a \mid \bar d_1, d_2] = p,
        \]
        \item or neither $d_1$ nor $d_2$ are correct, but both $\alpha$ and their private signal are correct: \[
        \Pr[a \mid \bar d_1, \bar d_2] = pp_\alpha.
        \]
	\end{enumerate}

	Therefore, the total probability of $a$ being correct is \begin{align*}
            \Pr[a] &= p^2 \left( 1-(1-p)(1-p_\alpha) \right) + 2p(1-p)p + (1-p)^2pp_\alpha = p (p_\alpha- p(p_\alpha -p-2))= p((1-p)p_\alpha + p(p+2)).
	\end{align*}
    
    Node $b$ is correct with conditional probabilities:
    \begin{align*}
        \Pr[b \mid d_1, d_2] &= 1 - (1-p_\beta)(1-\Pr[a \mid d_1, d_2])(1-p), \\
        \Pr[b \mid d_1, \bar d_2] &= p_\beta\Pr[a \mid d_1, \bar d_2]p + (1-p_\beta)\Pr[a \mid d_1, \bar d_2]p + p_\beta(1-\Pr[a \mid d_1, \bar d_2])p + p_\beta\Pr[a \mid d_1, \bar d_2](1-p) \\
                                  & = \Pr[b \mid \bar d_1, d_2], \\
        \Pr[b \mid \bar d_1, \bar d_2] &= p_\beta\Pr[a \mid \bar d_1, \bar d_2]p.
    \end{align*}

    The total probability of $b$ being correct is then \[
	\pr{b} = p^2 \left((1-p)^2 (p_\alpha + 5{p_\beta})+p (4-3 p)\right).
    \]

    We can also compute the following joint probabilities between $a$ and $b$, conditioning again on the dummy nodes: 
    \begin{align*}
        \Pr[a,b \mid d_1, d_2] &= \Pr[a \mid d_1, d_2],\\
        \Pr[\bar a, \bar b \mid d_1, d_2] &= (1-p_\beta)(1-p)(1-\Pr[a \mid d_1, d_2]),\\
        \Pr[a,b \mid d_1, \bar d_2] &= (1 - (1-p_\beta)(1-p)) \Pr[a \mid d_1, \bar d_2],\\
        \Pr[\bar a, \bar b \mid d_1, \bar d_2] &= (1-p_\beta p)(1-\Pr[a \mid d_1, \bar d_2]),\\
        \Pr[a,b \mid \bar d_1, \bar d_2] &= p_\beta p \Pr[a \mid \bar d_1, \bar d_2],\\
        \Pr[\bar a, \bar b \mid \bar d_1, \bar d_2] &= 1-\Pr[a \mid \bar d_1, \bar d_2].
    \end{align*}
    
    Similarly, $c$ is correct with conditional probabilities 
    \begin{align*}
        \Pr[c \mid d_1, d_2] &= p + (1-p)(p_\gamma(1-\Pr[\bar a, \bar b \mid d_1, d_2]) + (1-p_\gamma)\Pr[a, b \mid d_1, d_2]), \\
        \Pr[c \mid d_1, \bar d_2] &= p(1-(1-p_\gamma)\Pr[\bar a, \bar b \mid d_1, \bar d_2]) + (1-p) p_\gamma \Pr[a, b \mid d_1, \bar d_2] \\
        & = \Pr[c \mid \bar d_1, d_2], \\
        \Pr[c \mid \bar d_1, \bar d_2] &= p(p_\gamma(1-\Pr[\bar a, \bar b \mid \bar d_1, \bar d_2]) + (1-p_\gamma)\Pr[a, b \mid \bar d_1, \bar d_2]).
    \end{align*}
    
    Therefore, the total probability of $c$ being correct is: \[
	\pr{c} = p^2 \left((1-p)^2 (p_\alpha(1-{p_\gamma} ({p_\beta}+p-1)+{p_\beta} p)+ 2p_\beta p -{p_\gamma}({p_\beta} (5 p-3)-3 p-2)) + p(4-3p)\right).
    \]
    
    Finally, nodes $ d_1 $ and $ d_2 $ are both correct with probability $ p $.
    So the total learning rate of the clause is \begin{align*}
	2p + \pr{a} + \pr{b} + \pr{c}.
    \end{align*}
\end{proof}

\begin{lemma}[$1 \lor 0 \lor 0$ Learning Rate]\label{lemma:100CLR} 
    Let $C=\ell_1 \lor \ell_2 \lor \ell_3$ be a clause.
    Suppose that an optimal ordering $\sigma^*$ sets the literal $\ell_1$ to be ``on'', and the rest of the literals to be ``off''.
    Then, in the gadget for $C$, $\sigma^*$ places $\ell_1$ first, then the other two, and finally the node $C$.
    Further, the resulting CLR is \[
        \pone = p \left(2 + 2 p + 6 p^2 + 11 p^3 + 4 p^4 - 51 p^5 - 6 p^6 + 21 p^7 + 
   115 p^8 - 136 p^9 + 13 p^{10} + 36 p^{11} - 12 p^{12} \right).
    \]
\end{lemma}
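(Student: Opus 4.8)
The plan is to reduce the lemma to the general clause formula of \Cref{lemma:MD_genOrderingCLR} and then optimize over the few remaining degrees of freedom. Fix the cell states as in the hypothesis: $\ell_1$ on, $\ell_2$ and $\ell_3$ off. Recalling the proof of \Cref{lemma:MD_cellLearningRate}, with the dummy cell node placed first the cell node feeding the gadget node of an \emph{off} literal outputs its private signal and is correct with probability $p$, while the cell node feeding the gadget node of an \emph{on} literal is the last node of its cell, aggregates three independent $p$-accurate bits, and is correct with probability $p'' \deq 3p^2 - 2p^3 > p$ for $p\in(\tfrac12,1)$. So in \Cref{lemma:MD_genOrderingCLR} we must take $\{p_\alpha,p_\beta,p_\gamma\}$ to be the multiset $\{p'',p,p\}$, with $p''$ assigned to whichever of the three literal nodes corresponds to $\ell_1$.

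Next I would pin down the order of the gadget's five nodes. The two dummy nodes have zero in-degree, so each has learning rate exactly $p$ irrespective of position, contributing a fixed $2p$; hence maximizing the gadget CLR is the same as maximizing the sum of the three literal-node learning rates. Placing both dummy nodes (and the three cell nodes) before all literal nodes is optimal---this is precisely the regime covered by \Cref{lemma:MD_genOrderingCLR}---so the only remaining freedom is the relative order of the three literal nodes, i.e.\ which of them plays the role of $a$, $b$, $c$. Up to the symmetry of the two off literals this leaves three cases: the on literal's node is first, second, or third, i.e.\ $(p_\alpha,p_\beta,p_\gamma)$ equals $(p'',p,p)$, $(p,p'',p)$, or $(p,p,p'')$. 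Substituting each into $\Pr[a]+\Pr[b]+\Pr[c]$ from \Cref{lemma:MD_genOrderingCLR}, simplifying, and comparing, I would show that $(p'',p,p)$---the on literal first---strictly dominates the other two on all of $(\tfrac12,1)$; intuitively, placing the on node first lets its stronger signal reach and propagate through the whole gadget, whereas a later placement wastes part of that advantage. This step is a pair of one-variable polynomial inequalities, which I would verify symbolically, as is done elsewhere in the paper. Finally, plugging $p_\alpha = p'' = 3p^2-2p^3$ and $p_\beta = p_\gamma = p$ into $2p + \Pr[a]+\Pr[b]+\Pr[c]$ and expanding yields the claimed value $\pone = p\big(2 + 2p + 6p^2 + 11p^3 + 4p^4 - 51p^5 - 6p^6 + 21p^7 + 115p^8 - 136p^9 + 13p^{10} + 36p^{11} - 12p^{12}\big)$.

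The main obstacle is purely computational: once $p''$ is substituted, $\Pr[b]$ and $\Pr[c]$ become high-degree polynomials in $p$, and one must confirm that the on-first ordering beats the other two for \emph{every} $p\in(\tfrac12,1)$, not merely asymptotically as $p\to1$. A second, minor subtlety is that under majority dynamics extra inputs are not automatically helpful, so the claim that the dummy nodes should precede the literal nodes does not follow from a generic monotonicity principle; I would justify it either by checking the finitely many alternative placements of $d^1,d^2$ directly against the computed optimum, or by the same network-decoupling argument used in \Cref{lemma:maj_bestOrder}. Everything else is mechanical given \Cref{lemma:MD_genOrderingCLR}.
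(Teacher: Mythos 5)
Your proposal matches the paper's proof: both invoke the general clause formula of \Cref{lemma:MD_genOrderingCLR}, observe that the on cell feeds a $3p^2-2p^3$-accurate signal while the off cells feed $p$-accurate signals, enumerate the three non-isomorphic placements of the on literal's node (as $a$, $b$, or $c$), verify the resulting polynomial inequalities on $(\tfrac12,1)$ to pick the on-first ordering, and substitute to obtain $\pone$. Your flagged subtlety---that under majority dynamics extra inputs are not automatically beneficial, so ``dummy nodes before literal nodes'' does not follow from a generic monotonicity argument---is a genuine gap in the paper's own one-line justification (``Otherwise, the literal nodes receive strictly less information''), and your proposed fix of checking the finitely many alternative placements of $d^1,d^2$ directly is a reasonable way to close it.
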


\begin{proof}
    Since $\sigma^*$ is an optimal ordering, it must place the dummy nodes $d_1$ and $d_2$, as well as all variable nodes, before all literal nodes. Otherwise, the literal nodes receive strictly less information.
    To determine the order in which the literal nodes $\ell_1, \ell_2, \ell_3$ are placed, we can simply compute the expected learning rate for every permutation of them, and then pick the maximizing order.
    Note that $\ell_2$ and $\ell_3$ are identical, so we can treat permutations which just exchange these two as identical as well.

    We use the learning rate for the general clause gadget, as computed in \Cref{lemma:MD_genOrderingCLR}.
    We simply analyze the three non-isomorphic cases: $\ell_1=a, \ell_1=b,$ and $\ell_1=c$. Note that the first case corresponds to setting $p_\alpha = 3p^2 - 2p^3$ and $p_\beta = p_\gamma = p$, and similarly for the latter cases. We get the following cumulative learning rates: \begin{align*}
        (\ell_1=a) &\to p \left(2 + 2 p + 6 p^2 + 11 p^3 + 4 p^4 - 51 p^5 - 6 p^6 + 21 p^7 + 115 p^8 - 136 p^9 + 13 p^{10} + 36 p^{11} - 12 p^{12} \right),\\
        (\ell_1=b) &\to -p \left(-2 - 3 p + p^2 - 28 p^3 + 6 p^4 + 85 p^5 - 95 p^6 + 132 p^7 - 248 p^8 + 165 p^9 + 42 p^{10} - 84 p^{11} + 24 p^{12}\right),\\
        (\ell_1=c) &\to p \left(2 + 3 p + 2 p^2 + 21 p^3 - 25 p^4 - 29 p^5 + 143 p^6 - 440 p^7 + 684 p^8 - 458 p^9 + 54 p^{10} + 72 p^{11} - 24 p^{12}\right).
    \end{align*}
    
    The first polynomial is always strictly larger than the other two for $\frac 12 < p < 1$, so placing $l_1$ first in the ordering gives the highest learning rate in the clause. So we conclude that 
    \begin{align*}
        \pone = p \left(2 + 2 p + 6 p^2 + 11 p^3 + 4 p^4 - 51 p^5 - 6 p^6 + 21 p^7 + 
   115 p^8 - 136 p^9 + 13 p^{10} + 36 p^{11} - 12 p^{12} \right).
    \end{align*}
    Moving forward, we refer to this probability simply by $\pone$, regardless of the order that the literals appear in the clause itself.
\end{proof}

Before we continue with other clauses, note that the CLR of any clause increases with the number of true literals in the clause. One can verify this by checking that for any literal node or the clause node $C$ in the gadget, its probability of correctly predicting the ground truth increases monotonically with each of the variable probabilities. So we will only compute the CLR for clauses which involve the smallest and largest possible variable probabilities.

\begin{lemma}[{$0 \lor 0 \lor 0$, $1 \lor 1 \lor 1$ Learning Rates}]
    \label{lemma:000/111CLR}
    Let $C=\ell_1 \lor \ell_2 \lor \ell_3$ be a clause.
    Suppose that an optimal ordering $\sigma^*$ sets all three literals to false.
    Then the resulting CLR is \[
        \pzero = p \left(2 + 3 p + 2 p^2 + 25 p^3 - 42 p^4 + 23 p^5 - 67 p^6 + 102 p^7 - 
   31 p^8 - 24 p^9 + 12 p^{10}\right).
    \]
    If instead $\sigma^*$ sets all three literals to true, then the resulting CLR is  
    \begin{align*}
     \pthree = p \left(2 + 2 p + 3 p^2 + 14 p^3 + 22 p^4 - 66 p^5 - 69 p^6 + 310 p^7 - 688 p^8 + 710 p^9 + \right.\\
     \left.756 p^{10} - 2581 p^{11} + 2304 p^{12} - 558 p^{13} - 
   372 p^{14} + 264 p^{15} - 48 p^{16} \right).
    \end{align*}
\end{lemma}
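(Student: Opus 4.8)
The plan is to obtain both formulas by plugging the appropriate literal‑node accuracies into the general clause learning rate of \Cref{lemma:MD_genOrderingCLR}. Recall from \Cref{lemma:MD_cellLearningRate} that, under the optimal intra‑cell ordering (dummy node first), the two literal nodes of a variable cell predict the ground truth correctly with probabilities $p$ and $3p^2-2p^3$: the one placed first has accuracy $p$, and the one placed second has accuracy $3p^2-2p^3$. By the definition of the ``on''/``off'' states, the cell node $\alpha$ associated with a literal $\ell$ thus has accuracy $p_\alpha=p$ when $\ell$ is ``off'' and $p_\alpha=3p^2-2p^3$ when $\ell$ is ``on'' (and likewise for $\beta$ and $\gamma$). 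Hence in the $0\lor0\lor0$ case we take $p_\alpha=p_\beta=p_\gamma=p$, and in the $1\lor1\lor1$ case we take $p_\alpha=p_\beta=p_\gamma=3p^2-2p^3$.

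Next I would verify that $\sigma^*$, restricted to the gadget $\gadget C$, has the structure assumed in \Cref{lemma:MD_genOrderingCLR}. The two dummy nodes $d^1,d^2$ have no incoming edges, and the cell nodes $\alpha,\beta,\gamma$ have no incoming edges originating inside $\gadget C$, so moving any of them ahead of $a,b,c$ cannot lower their own accuracy and can only increase the information available to $a,b,c$; meanwhile $a,b,c$ have no outgoing edges leaving $\gadget C$. Therefore an optimal $\sigma^*$ orders all dummy and cell nodes before $a,b,c$, and \Cref{lemma:MD_genOrderingCLR} applies, giving $\clr(\gadget C,\sigma^*)=2p+\Pr[a]+\Pr[b]+\Pr[c]$ with the substituted probabilities. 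Moreover, when $p_\alpha=p_\beta=p_\gamma$ the gadget is invariant under simultaneously permuting the pairs $(a,\alpha),(b,\beta),(c,\gamma)$, so the value is independent of which of the symmetric orderings of $a,b,c$ the optimal ordering happens to choose; all of them yield the same clause learning rate. This is precisely why these two cases are simpler than the $\pone$ and $\ptwo$ cases, where one must additionally determine which literal node is placed first.

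It then remains to evaluate. Substituting $p_\alpha=p_\beta=p_\gamma=p$ into the displayed expressions for $\Pr[a],\Pr[b],\Pr[c]$ from \Cref{lemma:MD_genOrderingCLR}, adding the $2p$ contributed by $d^1$ and $d^2$, and collecting powers of $p$ yields the stated polynomial $\pzero$; doing the same with $p_\alpha=p_\beta=p_\gamma=3p^2-2p^3$ yields $\pthree$. The only real obstacle is the bookkeeping of these polynomial expansions---the $1\lor1\lor1$ substitution in particular produces a high‑degree polynomial, since $\Pr[c]$ contains a product of the three cell accuracies and each is itself cubic in $p$---but the manipulation is entirely mechanical, and, as elsewhere in the paper, the resulting identities are verified with Wolfram Mathematica. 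No conceptual ingredient beyond \Cref{lemma:MD_genOrderingCLR}, \Cref{lemma:MD_cellLearningRate}, and the placement/symmetry argument above is required.
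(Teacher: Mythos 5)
Your proposal is correct and mirrors the paper's proof: both appeal to \Cref{lemma:MD_genOrderingCLR} with the substitutions $p_\alpha=p_\beta=p_\gamma=p$ (all off) and $p_\alpha=p_\beta=p_\gamma=3p^2-2p^3$ (all on), both invoke the symmetry of the three literal nodes to dispense with the permutation choice that complicates the $\pone$ and $\ptwo$ cases, and both reduce the remainder to mechanical polynomial expansion verified by Mathematica. Your brief justification that an optimal ordering must place the dummy and cell nodes ahead of $a,b,c$ makes explicit a step the paper only gestures at ("As before..."), but it is the same argument used in \Cref{lemma:100CLR} and adds no new ideas.
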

\begin{proof}
    As before, the optimal ordering $\sigma^*$ must place node $C$ after all literal nodes. Observe that by symmetry, the ordering of $l_1$, $l_2$, and $l_3$ doesn't matter, since all three variable probabilities are identical. Using the \Cref{lemma:MD_genOrderingCLR} and supplying variable probabilities $p_\alpha = p_\beta = p_\gamma = p$, we get the desired CLR $p^000$.
    Supplying variable probabilities $p_\alpha = p_\beta = p_\gamma = 3p^2 - 2p^3$ gives the desired value of $\pthree$.
\end{proof}

\begin{lemma}[$ 0 \lor 1 \lor 1 $ Learning Rate]
    \label{lemma:110CLR}
    Let $C=\ell_1 \lor \ell_2 \lor \ell_3$ be a clause.
    Suppose that an optimal ordering $\sigma^*$ sets one literal to false, the other literals to true.
    Then the resulting CLR is \[
	\ptwo = p \left(12 p^8-54 p^7+76 p^6-14 p^5-40 p^4+9 p^3+12 p^2+2 p+2\right).
    \]
\end{lemma}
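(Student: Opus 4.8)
The plan is to follow the template already used in \Cref{lemma:100CLR}. First I would argue that under an optimal ordering $\sigma^*$ the clause gadget of $C$ is ordered so that both dummy nodes $d^1, d^2$ and all variable-cell nodes come before the three literal nodes $a, b, c$: the dummy nodes have no incoming edges, so delaying them cannot raise their own accuracy, while advancing them only feeds more information to $a, b, c$; likewise the literal nodes have outgoing edges only to one another, so placing them after all cell nodes is free. Consequently $\sigma^*$ restricted to $\gadget C$ falls within the scope of \Cref{lemma:MD_genOrderingCLR}, and by \Cref{lemma:MD_cellLearningRate} the cell node feeding an ``off'' literal is correct with probability $p$, whereas each of the two cell nodes feeding an ``on'' literal is correct with probability $p' \deq 3p^2 - 2p^3$.

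Next I would determine the order $\sigma^*$ imposes on $a, b, c$. The two ``on'' literals are interchangeable, so there are only three non-isomorphic options, according to whether the single ``off'' literal occupies the first, second, or third slot. For each, I would substitute the corresponding triple $(p_\alpha, p_\beta, p_\gamma) \in \{(p, p', p'),\, (p', p, p'),\, (p', p', p)\}$ into the expressions for $\pr{a}, \pr{b}, \pr{c}$ from \Cref{lemma:MD_genOrderingCLR}, add the $2p$ contributed by the two dummies, and collect terms into an explicit polynomial in $p$. I expect the option placing the ``off'' literal \emph{last} (i.e.\ $\gamma$ off) to dominate the other two on $(\tfrac12, 1)$, matching the intuition---already seen in \Cref{lemma:100CLR}---that the stronger signals of the two ``on'' cells should be injected as early as possible so they can propagate through the gadget. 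The value of that dominant polynomial should simplify exactly to $\ptwo = p(12p^8 - 54p^7 + 76p^6 - 14p^5 - 40p^4 + 9p^3 + 12p^2 + 2p + 2)$, which is the claimed expression.

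The main obstacle is the final comparison: establishing the two polynomial inequalities that certify the $\gamma$-off placement beats the $\alpha$-off and $\beta$-off placements for every $p \in (\tfrac12, 1)$. These are degree-bounded univariate inequalities on a bounded interval, so they are routine in principle but tedious by hand; as elsewhere in the paper I would discharge them with Wolfram Mathematica. A secondary pitfall is bookkeeping---keeping straight which of $a, b, c$ is identified with which literal, and hence which of $p_\alpha, p_\beta, p_\gamma$ equals $p$ rather than $p'$---since a mislabeling there silently produces the wrong final polynomial.
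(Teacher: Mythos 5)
Your proposal matches the paper's proof essentially step for step: argue that an optimal ordering places the dummy and cell nodes before the literal nodes (so \Cref{lemma:MD_genOrderingCLR} applies), substitute the cell probabilities $p$ and $p'=3p^2-2p^3$ from \Cref{lemma:MD_cellLearningRate} into the three non-isomorphic placements of the ``off'' literal, and show by direct polynomial comparison (checked in Mathematica) that placing the ``off'' literal last is optimal, yielding the claimed $\ptwo$. Your intuition for why the ``off''-last placement wins is also consistent with the paper's reasoning in \Cref{lemma:100CLR}.
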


\begin{proof}
    WLOG assume that $ \ell_2 = \ell_3 = 1 $ and $ \ell_1 = 0 $
    We analyze the probabilities of all three (non-isomorphic) possible orderings by setting proper $ p_\alpha, p_\beta, p_\gamma $ from \Cref{lemma:MD_genOrderingCLR}. 
    This results in the following CLRs: \begin{align*}
        (\ell_1=\alpha) &\to p \left(-24 p^{10}+132 p^9-282 p^8+279 p^7-116 p^6+28 p^5-36 p^4+11 p^3+8 p^2+3 p+2\right),\\
        (\ell_1=\beta) &\to -8 p^{10}+44 p^9-86 p^8+57 p^7+18 p^6-21 p^5-11 p^4-7 p^3+15 p^2+2 p+2,\\
        (\ell_1=\gamma) &\to p \left(12 p^8-54 p^7+76 p^6-14 p^5-40 p^4+9 p^3+12 p^2+2 p+2\right).
    \end{align*}
    It is not hard to prove, that for $ p \in \left( \frac 12, 1 \right) $, the biggest value is received when $ \ell_1=\gamma $, and thus its CLR is the optimal CLR.
    This finishes the proof.
\end{proof}

\begin{lemma}[Clause Learning Rate Comparison]
\label{lemma:CLRComp}
	For $ p \in \left( \frac 12, 1 \right) $, it holds \[
	    \pthree \geq \ptwo \geq \pone \geq \pzero.
	\]
\end{lemma}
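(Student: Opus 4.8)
All four quantities have explicit closed forms: $\pzero$ and $\pthree$ from \Cref{lemma:000/111CLR}, $\pone$ from \Cref{lemma:100CLR}, and $\ptwo$ from \Cref{lemma:110CLR}. In principle one could just form the three consecutive differences $\pone-\pzero$, $\ptwo-\pone$, $\pthree-\ptwo$ and verify that each is a polynomial nonnegative on $\left(\frac12,1\right)$ --- e.g.\ by substituting $p=\frac12+t$ and checking that every coefficient in $t$ is nonnegative (with Mathematica, as elsewhere in this paper). The plan is instead to take the more structural route already hinted at before \Cref{lemma:000/111CLR}, which avoids the degree-$16$ comparisons and reduces everything to three low-degree sign checks.

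By \Cref{lemma:MD_genOrderingCLR}, for the fixed in-gadget order $a$ before $b$ before $c$ the clause CLR is a function $F(p_\alpha,p_\beta,p_\gamma)\deq 2p+\pr{a}+\pr{b}+\pr{c}$ of the success probabilities of the three feeding cell nodes. An ``off'' literal feeds a cell node of success probability $p$, and an ``on'' literal one of success probability $3p^2-2p^3$ (\Cref{lemma:MD_cellLearningRate}); writing $p'\deq 3p^2-2p^3$, one checks $p'-p=p(2p-1)(1-p)>0$ on $\left(\frac12,1\right)$, so $p'\in\left(\frac12,1\right)$ and $[p,p']\subseteq\left(\frac12,1\right)$. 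The optimal-ordering results now identify each $\clr_i$ with $F$ at the right arguments: $\pzero=F(p,p,p)$ and $\pthree=F(p',p',p')$ (the in-gadget order is irrelevant by symmetry of equal arguments), $\pone=F(p',p,p)$ with the ``on'' literal placed first (\Cref{lemma:100CLR}), and $\ptwo=F(p',p',p)$ with the ``off'' literal placed last (\Cref{lemma:110CLR}). Hence
\[
  \pone-\pzero = F(p',p,p)-F(p,p,p),\quad \ptwo-\pone = F(p',p',p)-F(p',p,p),\quad \pthree-\ptwo = F(p',p',p')-F(p',p',p),
\]
so each difference is obtained by raising a single coordinate of $F$ from $p$ to $p'$, and it suffices to show that $F$ is nondecreasing in each coordinate along the corresponding segment inside $[p,p']^3$.

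For this I would read off the sign of the relevant coefficients directly from \Cref{lemma:MD_genOrderingCLR}. The terms $\pr{a}=p((1-p)p_\alpha+p(p+2))$ and $\pr{b}=p^2((1-p)^2(p_\alpha+5p_\beta)+p(4-3p))$ are multilinear in $(p_\alpha,p_\beta,p_\gamma)$ with manifestly nonnegative coefficients, so they only help; $\pr{c}$ is also multilinear (degree $\le 1$ in each variable), so coordinatewise monotonicity is a sign check on three partial-derivative coefficients of its bracket: along $p_\alpha\colon p\to p'$ at $p_\beta=p_\gamma=p$ the coefficient is $1-p(2p-1)+p^2=1+p-p^2>0$; along $p_\beta\colon p\to p'$ at $p_\alpha=p',\,p_\gamma=p$ it simplifies to $5p(1-p)>0$; and along $p_\gamma\colon p\to p'$ at $p_\alpha=p_\beta=p'$ it equals $3p+2-(p')^2-p'(6p-4)$, which one bounds below by $1$ using $p'\in\left(\frac12,1\right)$ (split on the sign of $6p-4$: if $6p-4<0$ the subtracted term is negative and $3p+2-(p')^2>\frac52$; if $6p-4\ge 0$ then $(p')^2+p'(6p-4)<3\le 3p+2-1$). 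Combining with the nonnegative $\pr{a},\pr{b}$ contributions gives $F$ nondecreasing along each of the three segments, and the chain $\pthree\ge\ptwo\ge\pone\ge\pzero$ follows.

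The main obstacle is precisely this last sign check, the monotonicity of $\pr{c}$ in $p_\gamma$: because of the cross terms $p_\alpha p_\gamma$, $p_\beta p_\gamma$ and $p_\alpha p_\beta p_\gamma$ the coefficient's sign is not immediate and genuinely uses $p_\alpha,p_\beta,p_\gamma\in\left(\frac12,1\right)$; the identifications $\pzero=F(p,p,p)$, $\pone=F(p',p,p)$, $\ptwo=F(p',p',p)$, $\pthree=F(p',p',p')$ and the monotonicity of $\pr{a},\pr{b}$ are bookkeeping. As a fallback, the brute-force approach --- expanding $\pone-\pzero$, $\ptwo-\pone$, $\pthree-\ptwo$ from \Cref{lemma:100CLR,lemma:000/111CLR,lemma:110CLR}, substituting $p=\frac12+t$, and observing all $t$-coefficients are nonnegative --- always works and can be delegated to a computer algebra system, consistent with the verification style used elsewhere in the paper.
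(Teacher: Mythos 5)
Your proof is correct, and it takes a genuinely different route from the paper's. The paper's own proof of \Cref{lemma:CLRComp} is a one-liner---it plugs in the explicit degree-$16$ polynomials from \Cref{lemma:100CLR,lemma:110CLR,lemma:000/111CLR} and asserts the chain follows ``by simple algebra,'' outsourcing the actual comparison to Mathematica. Your argument instead exploits the structure already set up in \Cref{lemma:MD_genOrderingCLR}: the clause CLR is a multilinear function $F(p_\alpha,p_\beta,p_\gamma)$ of the feeding-node accuracies, and the identifications $\pzero=F(p,p,p)$, $\pone=F(p',p,p)$, $\ptwo=F(p',p',p)$, $\pthree=F(p',p',p')$ (which the paper's own proofs of \Cref{lemma:100CLR,lemma:110CLR,lemma:000/111CLR} establish by precisely these substitutions) turn the three consecutive differences into single-coordinate increments of $F$ from $p$ to $p'>p$ along specific segments of $[p,p']^3$. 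Multilinearity makes each partial derivative constant along its segment, so the whole lemma reduces to three short sign checks rather than a degree-$16$ polynomial comparison; I verified the three coefficients you extract ($1+p-p^2$ for $p_\alpha$, $5p(1-p)$ for $p_\beta$, and $3p+2-(p')^2-p'(6p-4)$ for $p_\gamma$, the last handled correctly by the split at $p=2/3$) against the formula for $\Pr[c]$, and they are right. Your route is more computation-light and more explanatory---it surfaces the underlying reason the chain holds, namely that a literal node's success probability is nondecreasing in each in-neighbor's accuracy over the relevant range---at the cost of being longer than the paper's assertion. The brute-force fallback you mention is exactly what the paper actually does.
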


\begin{proof}
    We know the learning rates of $ \pthree ,\ptwo , \pone , \pzero $ by \Cref{lemma:100CLR,lemma:110CLR,lemma:000/111CLR}.
    The statement then follows by simple algebra.
\end{proof}

\subsection{Optimal Ordering \& Restrictions on p} 
Finally, we can now determine the optimal ordering by examining the learning rates derived in the previous section. First, we define an \emph{assignment-induced ordering} below:

\begin{definition}
    Let $\mathcal{A}: \chi \to \{0,1\}$ be any assignment of values to variables. Define the (partial) \emph{ordering induced by $\mathcal{A}$} as follows: if $\mathcal{A}(x_i)=1$, node $x_i$ is ordered after node $\lnot x_i$ (cell $\cell {x_i}$ is ``on''); otherwise, $x_i$ comes before $\lnot x_i$ (cell $\cell {x_i}$ is ``off'').
\end{definition}

\begin{definition}
    Let $\mathcal{A}: \chi \to \{0,1\}$ be an assignment of values to variables maximizing the number of satisfied clauses. Then $\mathcal{A}$ is a \emph{maximal assignment}.
\end{definition}

\begin{lemma}[Optimal Ordering] \label{lemma:bestOrder}
    Let $\mathcal{A}^*$ be a maximal assignment.  Let $p(M) < 1$ be a threshold probability determined by $M$, the number of clauses. Then for all $p \geq p(M)$, the decision ordering $\sigma^*$ which places all dummy nodes first, then all variable nodes respecting the partial ordering induced by $\mathcal{A}^*$, then all literal nodes, and finally all clause nodes, maximizes the network learning rate.
\end{lemma}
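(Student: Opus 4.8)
The plan is to follow the template of \Cref{lemma:bayes_bestOrder}, inserting the extra care the majority rule requires. The starting observation is that an assignment-induced (partial) ordering constrains the network only through the on/off state of each cell, and by \Cref{lemma:MD_cellLearningRate} a cell attains the same optimal cumulative learning rate $2p+3p^2-2p^3$ in either state. The sole way a cell's state reaches the rest of the network is through the accuracy of the signal its two non-dummy nodes broadcast to the clause gadgets: the node placed second inside the cell predicts correctly with probability $3p^2-2p^3$ and the one placed first only with probability $p$ (tie-breaking discards the first node's extra input), so the ``on'' literal node carries accuracy $3p^2-2p^3>p$ and the ``off'' one carries accuracy $p$. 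Since no edge enters a cell from outside and no edge leaves a clause gadget, an optimal network ordering can be assembled block by block: place every dummy and every cell node before every clause-gadget literal node (this only enlarges the input sets of the latter at no cost to anyone else), order the nodes inside each cell optimally, and order the three literal nodes of each clause gadget optimally given the on/off states of its literals. By \Cref{lemma:100CLR,lemma:000/111CLR,lemma:110CLR} (whose proofs brute-force the within-gadget orderings) this last choice puts the two gadget dummies first, then the ``on'' literal nodes, then the ``off'' ones --- exactly the shape of the ordering $\sigma^*$ in the statement. Hence the only genuinely free choice left is the assignment (equivalently, the vector of cell states), and it must be picked so as to maximize the total over clause gadgets.

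I would then reduce to a comparison between assignments. Every total ordering respects some assignment $\asg$, and $\sigma^*(\asg)$ is by definition the best ordering respecting $\asg$, so $\max_\sigma \clr(\network,\sigma,\mu^M)=\max_\asg \clr(\network,\sigma^*(\asg),\mu^M)$. By \Cref{lemma:CLRComp} an unsatisfied clause gadget contributes exactly $\pzero$, while a satisfied one contributes at least $\pone$ and at most $\pthree$ (depending on how many of its literals are on), so satisfying one more clause never decreases the clause sum. The $N$ cells contribute the same total $N(2p+3p^2-2p^3)$ regardless of the assignment, so it suffices to compare clause sums. Against an arbitrary non-maximal $\asg'$ --- satisfying $S'<S^*\le M$ clauses, where $S^*$ is the number satisfied by a maximal $\asg^*$ --- the worst case for us is that $\asg^*$ satisfies every clause but each with only a single true literal (contribution $\pone$ each), while $\asg'$ satisfies $M-1$ clauses each with all three literals on (contribution $\pthree$) and the last clause contributes $\pzero$. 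It therefore suffices to show
\[
 M\pone > (M-1)\pthree + \pzero, \qquad\text{equivalently}\qquad \frac{\pone-\pzero}{\pthree-\pone} > M-1 ,
\]
which by \Cref{lemma:CLRComp} separates the learning rate of $\sigma^*(\asg^*)$ from that of any ordering respecting a non-maximal assignment.

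Finally I would establish this inequality for $p$ close to $1$ and read off $p(M)$. Both $\pone-\pzero$ and $\pthree-\pone$ are polynomials in $p$ that vanish at $p=1$ (all four gadget learning rates converge to $5$, the number of gadget nodes, as $p\to1$, since then every node is correct with probability $1$). A routine symbolic computation --- the majority-model analog of \Cref{fig:pdff} --- shows that $\pone-\pzero$ vanishes at $p=1$ to order two while $\pthree-\pone$ vanishes to strictly higher order, so near $p=1$ the ratio $(\pone-\pzero)/(\pthree-\pone)$ is asymptotic to a positive constant times $1/(1-p)$ and is in particular unbounded; consequently for every $M\ge2$ there is a $p(M)<1$ for which the displayed inequality holds on all of $[p(M),1)$. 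To make $p(M)$ explicit one lower-bounds $(\pone-\pzero)/(\pthree-\pone)$ by a simple rational function of $p$ valid on $(\tfrac12,1)$ --- the counterpart of the bound $\tfrac1{3-3p}$ used for \Cref{lemma:bayes_bestOrder} --- and then inverts. I expect this last step to be the main obstacle: the majority gadget learning-rate polynomials have much higher degree than their Bayesian counterparts, so verifying the order-of-vanishing at $p=1$ and producing a clean closed-form lower bound for the ratio is where the real work lies; and, unlike the Bayesian case, one cannot shortcut the within-gadget optimality by invoking the ``a node can always copy its best earlier neighbour'' monotonicity, which is why \Cref{lemma:MD_genOrderingCLR} and \Cref{lemma:100CLR,lemma:110CLR,lemma:000/111CLR} must be invoked wholesale to justify the block structure of $\sigma^*$.
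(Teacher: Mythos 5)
Your argument reproduces the paper's proof almost verbatim: same block-structure decomposition (dummies, variables, gadget dummies, then literal nodes), same reduction to comparing a maximal assignment against an arbitrary one via the extremal inequality $M\pone \ge (M-1)\pthree + \pzero$, i.e.\ $(\pone-\pzero)/(\pthree-\pone)\ge M-1$, and the same strategy of lower-bounding that ratio by a simple rational function of $p$ (the paper uses $\tfrac{1}{6-6p}$, giving $p(M)=\tfrac{6M-7}{6M-6}$) to read off the threshold. The only difference is that you describe the final bound via order-of-vanishing at $p=1$ and flag it as remaining work, whereas the paper simply asserts the explicit bound; conceptually you have the paper's proof.
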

\begin{proof}
    First note that by the same reasoning as in Lemma \ref{lemma:MD_cellLearningRate}, there is an optimal ordering $\sigma^*$ which places all dummy nodes before all variable nodes. Similarly, the same reasoning as in Lemma \ref{lemma:100CLR} gives that there is an optimal $\sigma^*$ in which all literal nodes are ordered after all variable nodes, and all clause nodes are ordered after all literal nodes. So all that remains is to show that there is an optimal $\sigma^*$ which respects the ordering induced by $\mathcal{A}^*$. 

    First recall that by Lemma \ref{lemma:MD_cellLearningRate}, an optimal ordering always gives the same cumulative learning rate $p_i^0 = p_i^1$ per cell regardless of whether the cell is ``on'' or ``off''. So the ordering of variable nodes only affects the network learning rate through the clauses. Consider any other assignment $\mathcal{A}$ which is not maximal. Let $S^*$ be the number of satisfied clauses under $\mathcal{A}^*$ and $S < S^*$ that under $\mathcal{A}$. By \Cref{lemma:100CLR,lemma:000/111CLR}, the CLR for any satisfied clause is lower bounded by $\pone$, which is strictly greater than that for an unsatisfied clause, $\pzero$. So having more satisfied clauses should improve the CLR. However, note also that $\pthree \geq \ptwo \geq \pone \geq \pzero$, by \Cref{lemma:110CLR}. So in the most extreme case, all $S$ satisfied clauses under $\mathcal{A}$ are satisfied by having three true literals, while all $S^*$ satisfied clauses under $\mathcal{A}^*$ are satisfied by having one true literal. 

    Consider that extreme case, and further impose the worst-case choices of $S$ and $S^*$ by setting $S^* = M$ and $S = M-1$. Then over all clause gadgets, the ordering induced by $\mathcal{A}^*$ gives a CLR of $M \pone$, while the ordering induced by $\mathcal{A}$ gives a CLR of $(M-1) \pthree + \pzero$. In order for the ordering induced by $\mathcal{A}^*$ to maximize the network learning rate, we must have
    \begin{align*}
        M \pone &\geq (M-1) \pthree + \pzero\\
        \frac{\pone - \pzero}{\pthree - \pone} &\geq M-1
    \end{align*}
    One can verify that the left-hand side can be lower bounded by $\frac{1}{6-6p}$ for all $p \in (0.5,1)$, so the following is sufficient:
    \begin{align*}
        \frac{1}{6-6p} \geq M-1 \Longrightarrow p \geq p(M) \deq \frac{6M-7}{6M-6},
    \end{align*}
    which is well-defined for any $M \geq 2$. Thus, the $\sigma^*$ respecting the ordering induced by $\mathcal{A}^*$ is optimal for all $p \geq p(M)$.
\end{proof}

To recap, given any instance $\varphi$ of \sat{} with $N$ variables and $M$ clauses, we can construct a network $G_\varphi$ with ground truth prior $q = 1/2$ and signal accuracy $p \geq p(M)$. In particular, whenever $\varphi$ is satisfiable under some maximal assignment $\mathcal{A}^*$, there is an optimal decision ordering $\sigma^*$ which respects the ordering induced by $\mathcal{A}^*$, and which achieves a network CLR of at least $N(2p+3p^2-2p^3) + M\pone$. Otherwise, if $\varphi$ is non-satisfiable under any maximal assignment $\mathcal{A}^*$, then the optimal decision ordering $\sigma^*$ respecting the ordering induced by $\mathcal{A}^*$ achieves a network CLR of no more than $N(2p+3p^2-2p^3) + (M-1)\pthree + \pzero$. Choosing $p \geq p(M)$ allowed us to show the optimality of $\sigma^*$, as well as to separate the learning rates in networks corresponding to satisfiable and non-satisfiable formulas. All that remains to complete the reduction is to pick an appropriate choice of $\varepsilon$, such that a formula graph $G_\varphi$ achieves expected network learning rate greater than $\varepsilon$ under an optimal ordering iff $\varphi$ is satisfiable.

\subsection{Picking the Epsilon}
We will simply pick $\varepsilon$ to lie exactly halfway between the satisfiable and non-satisfiable network learning rates. Recalling that each variable gadget contains $3$ vertices and each clause gadget contains $4$ vertices, we can compute the learning rates below. For networks corresponding to satisfiable formulas, we have at least
\begin{align*}
    \frac{N\pcell + M\pone}{3N + 5M},
\end{align*}
and for networks corresponding to non-satisfiable formulas, we have at most
\begin{align*}
    \frac{N\pcell + (M-1)\pthree + \pzero}{3N + 5M}.
\end{align*}
This gives 
\begin{align*}
    \varepsilon = \frac{1}{2}\left(\frac{2N\pcell + M\pone + (M-1)\pthree + \pzero}{3N + 5M}\right),
\end{align*}
thus completing the reduction.


\end{document}